%% file: main.tex
\documentclass[letterpaper]{article} % DO NOT CHANGE THIS
\usepackage[preprint]{aaai2027} 
\usepackage[hyphens]{url}  % DO NOT CHANGE THIS
\usepackage{graphicx} % DO NOT CHANGE THIS
\usepackage{natbib}  % DO NOT CHANGE THIS AND DO NOT ADD ANY OPTIONS TO IT
\usepackage{caption} % DO NOT CHANGE THIS AND DO NOT ADD ANY OPTIONS TO IT

\usepackage{algorithm}
\usepackage{algorithmic}

\usepackage{newfloat}
\usepackage{listings}
\DeclareCaptionStyle{ruled}{labelfont=normalfont,labelsep=colon,strut=off} % DO NOT CHANGE THIS
\floatstyle{ruled}
\newfloat{listing}{tb}{lst}{}
\floatname{listing}{Listing}

\usepackage{amsmath,amssymb,amsfonts,amsthm,mathtools}

\usepackage{subcaption}

\usepackage{tikz}
\usetikzlibrary{arrows.meta,positioning,shapes.geometric,calc,fit}

\usepackage{booktabs}

\input{macros}

\newtheorem{definition}{Definition}

\newtheorem{theorem}{Theorem}
\newtheorem{lemma}{Lemma}
\newtheorem{proposition}{Proposition}
\newtheorem{corollary}{Corollary}

\title{Auditing Emergent LLM-Agent Collaboration through Cooperation–Obligation Coupling}

\author{
    Zuyuan Zhang\textsuperscript{\rm 1}\equalcontrib,
    Hanqing Yang\textsuperscript{\rm 2}\equalcontrib,
    Carlee Joe-Wong\textsuperscript{\rm 2},
    Tian Lan\textsuperscript{\rm 1}
}

\affiliations{
    \textsuperscript{\rm 1}Department of Electrical and Computer Engineering,
    The George Washington University\\
    \textsuperscript{\rm 2}Department of Electrical and Computer Engineering,
    Carnegie Mellon University\\
    zuyuan.zhang@gwu.edu,
    hanqing3@andrew.cmu.edu,
    cjoewong@andrew.cmu.edu,
    tlan@gwu.edu
}

\newcommand{\PDFmode}{0}

\ifcase\PDFmode
\or
  \usepackage[1-8,10000]{pagesel}
\or
  \usepackage[9-]{pagesel}
\fi
\begin{document}

\maketitle

\input{01abstract}

\input{02intro}

\input{04solution}

\input{05experiments}

\input{06conclusion}

\nocite{*}
\bibliography{ref}

\input{07appendix}

\end{document}

%% file: macros.tex
\newenvironment{tightenum}{\begin{enumerate}\setlength{\itemsep}{1pt}\setlength{\parskip}{0pt}\setlength{\parsep}{0pt}}{\end{enumerate}}

\newcommand{\WorkSet}{\mathcal{W}}
\newcommand{\snd}{\operatorname{snd}}
\newcommand{\astab}{\operatorname{astab}}
\newcommand{\cq}{\operatorname{cq}}
\newcommand{\Perf}{\operatorname{Perf}}

%% file: 01abstract.tex
\begin{abstract}
LLM-agent systems can solve complex tasks through dynamic self-organization and emergent cooperation.
Auditing this process is essential because plausible intermediate or final outputs can conceal incomplete or unsupported work and poorly allocated responsibility, ultimately compromising response quality. While existing approaches may record messages, tool calls, provenance, or task dependencies, an auditability gap exists as they do not jointly represent what work remains, who is responsible for it, and what evidence justifies each work-state transition. We address this auditability gap by proposing 
\emph{Integrated Cooperation-Obligation REpresentation} (iCORE). It creates a unified encoding $X=(G,Q,\Pi)$ integrating observable interactions as a cooperation graph $G$, evolving work and assignments as an obligation graph $Q$, and the audit map $\Pi$ linking them with verifiable properties and evidence. This iCORE representation enables the auditor to certify two complementary properties: {Work soundness}, where every active decision-relevant work assertion must have a finite justification through $G$ and $\Pi$; and {Agent-assignment stability}, which requires that no feasible alternative agent improve the declared contribution value for an evaluated obligation by more than $\epsilon$. We establish local-to-global soundness and assignment-regret guarantees and a performance bound under stated conditions. iCORE is an instrumentation layer over workflows. Numerical results show that 
the full coupled state exactly reconstructs soundness and assignment defects in two execution modes and that, relative to passive full-state observation, iCORE-Audit yields absolute trajectory-quality improvements of $11.5\%$ and $26.4\%$ in controlled and real-LLM execution, respectively, with corresponding absolute terminal-performance improvements of $15.1\%$ and $31.0\%$.
\end{abstract}

%% file: 02intro.tex
\section{Introduction}

Large language models are increasingly deployed as cooperative systems rather than isolated agents \cite{wang2024survey,guo2024large}. Conversation-oriented frameworks support dynamic role coordination and intermediate-work exchange \cite{wu2023autogen,li2023camel,yang2026dig}, while software-oriented systems organize agents around shared artifacts and staged responsibilities \cite{hong2024metagpt,qian2024chatdev}. Tool-using agents adapt later actions to external observations \cite{yao2022react,schick2023toolformer,shen2023hugginggpt}. Planning methods revise execution structure \cite{wang2023plan,yao2023tree}, and self-refinement methods revise outputs from intermediate feedback \cite{madaan2023self,shinn2023reflexion}. Generative-agent simulations also show how local interactions can produce collective behavior \cite{park2023generative}. These capabilities allow cooperation to \emph{emerge} without fixing every role or work transition in advance.

We focus on an auditability gap in emergent LLM-agent collaboration--where existing approaches may record messages, tool calls, provenance, or task dependencies, but they do not jointly represent what work remains, who is responsible for it, and what evidence justifies each work-state transition. 
A \emph{work obligation} is a basic unit of progress, whether introduced by the task or created as agents organize the work. We call a cooperation state \emph{auditable} when every active work assertion that can affect a later update, diagnostic, or terminal output can be traced to observable actions or artifacts, the operation that changed the work, and currently valid evidence, e.g., from environment observations. Auditability is essential for reasoning, certifying, and improving emergent collective behavior, as a plausible %final 
output can often conceal unsupported intermediate claims, incomplete work, poor task allocation, or failures that propagated through the workflow.

Existing approaches provide important but partial views. LLM-agent frameworks record 
workflow traces, but do not define the obligations or evidence governing work-state transitions~\cite{wu2023autogen,li2023camel,hong2024metagpt,qian2024chatdev}. Distributed-event and provenance models represent ordering, causality, and derivation, but not evolving task obligations and their settlement \cite{lamport2019time,belhajjame2013prov,groth2013prov}. Planning and workflow models represent decomposition and dependencies while often abstracting away the event-level support for committed updates \cite{erol1994htn,torreno2017cooperative,van2003workflow,van1998application,murata1989petri}. Runtime verification and LLM judges assess traces or outputs, but lack the event--work correspondence needed to locate unsupported obligations or justify capability-based reassignment \cite{havelund2001monitoring,leucker2009brief,zheng2023judging}. The limitation is representational: interaction without work semantics is ambiguous, whereas work state without observable provenance is unauditable.

We introduce the \emph{Integrated Cooperation-Obligation REpresentation} (iCORE), with a unified state representation $X=(G,Q,\Pi)$. The cooperation graph $G$ encodes observable agent execution and artifacts, the obligation graph $Q$ encodes evolving work and assignments, and the audit map $\Pi$ certifies the correspondence between them with verifiable properties and evidence for reasoning on collective emergent behavior. 
iCORE is an instrumentation layer over workflows. The application specifies the environment- or task-specific observable and validation interfaces in a declaration package $D$, while the host framework retains control of execution. 

This iCORE representation enables the auditor to certify two complementary properties: \emph{Work soundness}, where every active decision-relevant work assertion from $Q$ must have a finite justification through $G$ and $\Pi$; and \emph{Agent-assignment stability}, which requires that no feasible alternative agent improve the declared contribution value for an evaluated obligation by more than a threshold $\epsilon$. The weighted satisfaction rates of these properties define iCORE-state quality $\cq_{D,\epsilon}$. 
We use the iCORE framework to prove local-to-global soundness and assignment-regret guarantees. Specifically, legal local updates preserve global work soundness and that componentwise stability bounds aggregate reassignment gain. Under explicit decomposability, validator-adequacy, and value-calibration conditions, iCORE-state quality also lower-bounds expected task performance; otherwise, it remains a process diagnostic.

We operationalize iCORE through \emph{iCORE-Audit}, an event-triggered layer around the host policy. It audits each proposal's intended state effects and can diagnose unsupported work, beneficial reassignment, or stalled progress  before a terminal result is produced. Proposed repairs are themselves audited, to prevent the system from appearing to improve simply by deleting or excluding difficult work. 
Relative to passive full-state observation, iCORE-Audit yields absolute trajectory-quality (as quantified by the iCORE state quality) improvements of $11.5\%$ and $26.4\%$ in controlled and real-LLM execution, respectively, with corresponding absolute terminal-performance improvements of $15.1\%$ and $31.0\%$.

This paper makes three main \textbf{contributions}.
\begin{tightenum}
    \item We define \textit{auditability} for emergent LLM-agent cooperation and introduce iCORE to couple observable interaction with evolving work and responsibility.
    \item We \textit{formalize} work soundness, agent-assignment stability, their local-to-global guarantees, and the conditions relating iCORE-state quality to task performance.
    \item We develop \textit{iCORE-Audit} and evaluate representation sufficiency, matched quality improvement, task performance, and cost in controlled and real-LLM settings.
\end{tightenum}

%% file: 04solution.tex
\section{Cooperation-Obligation Representation}
\label{sec:core}

We study fully cooperative, non-strategic agents with heterogeneous capabilities that jointly produce solve a single application-assigned task. 
The application supplies a quantitative outcome metric $\Perf$, and iCORE audits the process that produced the result, aiming to detect and forestall cooperation failures like execution errors, unsupported updates, incompatible partial results, or poor work assignments.

Before execution, the application fixes a declaration package $D=(\mathcal O_D,\mathcal M_D,\Omega,V,\mathcal I_D)$, where $\mathcal O_D$ specifies observable action and artifact types and semantic fields and $\mathcal I_D$ collects the capability-value, work-weighting, local-view, equivalence, and terminal interfaces; $\mathcal M_D$, $\Omega$, and $V$ are the finite observable agent set, work-operation vocabulary, and validation specification (e.g., on work correctness). The \emph{Integrated Cooperation-Obligation REpresentation} (iCORE) represents emergent cooperation through observable interaction in $G$, evolving work and responsibility in $Q$, and their certified correspondence in $\Pi$, as we formally define below.

A \emph{work-level assertion} is any semantic fact encoded in $Q$, including for example a node, relation, work status, result signature, coverage or compatibility claim, assignment, or terminal marker. It is \emph{active} if currently asserted and not superseded, and \emph{decision-relevant} if changing it can affect a
declared update, an unresolved work obligation, a iCORE diagnostic, or a terminal predicate. A state is \emph{auditable} when every active, decision-relevant assertion traces through $\Pi$ to finitely many observable actions or artifacts in $G$, its declared work operation, and active evidence valid under $V$.
An asserted work update is \emph{justified} when it instantiates an operation declared by $D$, its recorded preconditions are current, and its certificates pass $V$. Auditability thus requires more than a message log: it checks each decision-relevant update, assignment, and terminal decision up to the observational equivalence declared by $D$. This equivalence may ignore fresh identifiers and inessential timestamps, but must preserve assignees, statuses, coverage, and terminal evidence.

\begin{definition}[iCORE state]
\label{def:core-state}
A \emph{iCORE state} is
\[
    X=(G,Q,\Pi),
\]
where $G$ is a cooperation graph, $Q$ is an obligation graph, and $\Pi$ is an audit map between them.
\end{definition}

\subsection{Cooperation Graph: Observable Agent Interactions}
\label{sec:cooperation-graph}

%The cooperation graph records observable actions, artifacts, and typed production, consumption, routing, or verification dependencies.

\begin{definition}[Cooperation graph]
\label{def:cooperation-graph}
A \emph{cooperation graph} is a finite typed directed acyclic graph
\[
    G=(V_G,E_G,s_G,t_G,\theta_G,\lambda_G,\tau_G,\mu_G),
\]
where $s_G,t_G:E_G\rightarrow V_G$
are the source and target maps, and $\theta_G$ and $\lambda_G$ assign application-declared types and observable labels to vertices and edges. The logical-time map
$\tau_G:V_G\rightarrow\mathbb{N}$
satisfies
$
    u\rightarrow_G v
    \implies
    \tau_G(u)<\tau_G(v).
$
The vertex types distinguish action nodes
$\operatorname{Act}(G)\subseteq V_G$
and artifact nodes
$\operatorname{Art}(G)\subseteq V_G.$
An \emph{artifact} is a recorded, referenceable output of execution---for example, a message, plan, code patch, tool result, partial answer, or verification report---that may be consumed or checked by later actions. The metadata map $\mu_G$ stores observable information such as the acting agent,
recipient, tool, verifier, capability evidence, or monitor decision.
\end{definition}

$G$ therefore records observable actions, artifacts, and typed production, consumption, routing, or verification dependencies. This directed acyclic graph (DAG) records interaction events over time rather than an agent interaction topology: repeated agent interactions create distinct nodes in $V_G$. Thus, $G$ identifies observable provenance but not the required work or its current assignee.

\subsection{Obligation Graph: Evolving Work}
\label{sec:obligation-graph}

A \emph{work obligation} is a unit of work that must be addressed as part of completing the assigned task. It may stem from the initial request or emerge as agents organize, assign, and perform work. $Q$ tracks relationships between work obligations.

\begin{definition}[Obligation graph]
\label{def:obligation-graph}
An \emph{obligation graph} is a finite typed directed graph
\[
Q=(V_Q,E_Q,s_Q,t_Q,\theta_Q,\lambda_Q,\rho_Q,
\operatorname{stat}_Q,\sigma_Q,\alpha_Q).
\]
Here $s_Q,t_Q:E_Q\to V_Q$ are the source and target maps;
$\theta_Q$ and $\lambda_Q$ assign declared types and observable labels;
$\rho_Q$ assigns declared semantic roles to edges;
$\operatorname{stat}_Q:V_Q\to\mathcal S_Q$ records one of the eight statuses
listed below for each work obligation $v\in V_Q$; $\sigma_Q$ stores observable result signatures used for
compatibility; and $\alpha_Q:V_Q\to\mathcal M_D\cup\{\bot\}$ stores the
current assignee of unresolved work or the last assignee of settled work,
with $\bot$ denoting no assignment.
\end{definition}

The typed relations represented in $E_Q$ encode dependency, refinement, support, aggregation, or contradiction. The work obligation statuses mean: \texttt{open}, identified but not enabled; \texttt{ready}, enabled; \texttt{blocked}, waiting on a recorded condition;
\texttt{closed}, execution complete; \texttt{submitted}, awaiting validation or aggregation; \texttt{accepted}, validated and eligible to discharge work; \texttt{rejected}, failed acceptance; and \texttt{invalid}, unusable under $V$. We define $\operatorname{Unres}(Q)$ to comprise \texttt{open},
\texttt{ready}, \texttt{blocked}, and \texttt{submitted} obligations; we refer to all
other obligations as \textit{settled work}.

A \emph{required root} is an obligation with no
predecessor under the obligation relation. To complete the task, it must be discharged unless
active valid evidence marks it as excluded, duplicated, irrelevant, or
superseded. Formally, the required root set
$\operatorname{ReqRoot}(Q)\subseteq
\{q\in V_Q:\nexists p\in V_Q\text{ with }p\prec_Q^{\mathrm{obl}}q\}$.
Let $\operatorname{EffRoot}(X)$ be the roots not so removed and
$\operatorname{Cov}_Q(q)$ the roots that $q$ may discharge. These semantics
allow work to emerge without a complete decomposition in advance;
Appendix~\ref{app:work-semantics} gives the full definitions.

\begin{definition}[Relative obligation and evaluation set]
\label{def:relative-obligation}
For a component $q$, let $\operatorname{Dis}_{-q}(X)$ be the effective roots
discharged by accepted work whose active support does not depend on $q$.
Its \emph{relative obligation} is
$
\operatorname{Ob}_X(q)=\operatorname{Cov}_Q(q)\cap
\bigl(\operatorname{EffRoot}(X)\setminus
\operatorname{Dis}_{-q}(X)\bigr).
$
The evaluation set $\WorkSet(X)$ contains every unresolved $q$ with
$\operatorname{Ob}_X(q)\neq\varnothing$ and every component carrying an
active decision-relevant assertion for a iCORE diagnostic or terminal
predicate.
\end{definition}

Thus, unresolved obligations remain evaluated while nonempty, and settled, excluded, or superseded components remain only while they affect a diagnostic or terminal decision. This preserves terminal support and prevents terminal iCORE-state scores from becoming vacuous. Unlike $G$, $Q$ represents work
semantics and assignment but does not identify the interactions or capability evidence justifying them.

\subsection{Audit Map: Connecting Interaction and Work}
\label{sec:audit-map}

The audit map connects the interactions recorded in $G$ with the work evolution recorded in $Q$ using the application-declared vocabulary $\Omega$ of operations that create, assign, reassign, transform, verify, or resolve work.

\begin{definition}[Audit map]
\label{def:audit-map}
Given $G$ and $Q$, an \emph{audit map} is
\[
    \Pi=(\pi,\omega,\kappa),
    \qquad
    \pi:V_G\rightharpoonup 2^{V_Q},
    \qquad
    \omega:\mathcal L_{\pi}\rightharpoonup\Omega,
\]
where $\mathcal L_{\pi}=\{(a,q)\in\operatorname{Act}(G)\times V_Q:q\in\pi(a)\}$, $\pi$ links observable actions and artifacts to the work components they affect, $\omega$ records the declared work operation for each linked action--component pair, and $\kappa$ stores the certificates supporting the correspondence and resulting work update.
\end{definition}

For $a\in\operatorname{Act}(G)$ and $q\in\pi(a)$, $\omega(a,q)$ records how an agent action $a$ changes or maintains the work obligation $q$, so one action may implement different declared operations on different components. A linked artifact $z\in\operatorname{Art}(G)$ may be a certificate premise but needs no operation label.

A certificate $c\in\kappa$ stores an identifier, internal and external premises, a claim, and an issuer; its claim must instantiate a schema declared by $D$. Internal premises refer to the current iCORE state, whereas external premises enter through declared environment-input nodes in $G$. The
certificate is valid when active, current, authorized in issuer and scope, and accepted by $V$; it remains active until validly revoked or superseded.
The complete record is given in Appendix~\ref{app:certificates}.

Certificate validity alone is insufficient: the certificate and its premises
must connect through $\Pi$ to the affected assertion in $Q$ and the responsible observable nodes in $G$. In particular, an assignment is auditable only when the responsible action, affected work and assignee, assignment operation, and capability evidence agree across $G$, $Q$, $\omega$, and
$\kappa$.

Together, $G$, $Q$, and $\Pi$ determine whether work is justified,
responsibility is suitably assigned, and both assessments evolve. Appendix~\ref{app:projection-insufficiency} shows why unlabeled interaction topology alone cannot determine these properties.

\section{iCORE for Work Soundness}
\label{sec:work-soundness}

iCORE uses \emph{work soundness} in the proof-theoretic sense: every active decision-relevant work-level assertion must have a finite valid derivation from recorded interaction, its declared operation, and active evidence under $V$. The preservation, reconstruction, and terminal-report results below are specification invariants induced by the update legality, recording, and terminal contracts; required-root completeness is enforced separately by Section~\ref{sec:terminal-soundness}, and semantic task correctness still depends on validator adequacy.

\subsection{Legal Cooperation Processes}
\label{sec:legal-process}

Using the declaration package from Section~\ref{sec:core}, a iCORE system is $\mathcal{S}=(D,\mathcal{X}_{D},\widehat{\mathcal{R}},X_0,\operatorname{Acc},\operatorname{Rej})$, where $\mathcal{X}_{D}$ is the feasible-state space, $\widehat{\mathcal{R}}$ is the family of candidate local-update schemas, $X_0\in\mathcal{X}_D$ is the initial state, and $\operatorname{Acc},\operatorname{Rej}:\mathcal{X}_D\to\{0,1\}$ are the terminal predicates. The relation $X\equiv_X Y$ identifies states differing only in fields declared inessential by $D$ while preserving, under transported typed-graph isomorphisms, every field affecting update enablement, obligations, assignments, evidence, or terminal decisions; it is a congruence for legal updates, as detailed in Appendix~\ref{app:local-updates}.

Membership in $\mathcal{X}_{D}$ constrains represented states, not the physical actions agents may attempt: an unsupported effect may be recorded in $G$ but cannot be committed in $Q$ except as an explicit blocked, rejected, invalid, superseded, or repair-pending record. Feasibility anchors work to observable input, keeps unresolved work accountable, supports assignments with observable assignment and capability evidence, and certifies aggregate coverage and compatibility; Appendix~\ref{app:feasibility} gives the complete conditions.

A candidate local update, induced by an observable agent action, tool result, validator decision, or monitor intervention, may change $G$, $Q$, and $\Pi$. Its finite declared local view contains the events read, their linked work, and the active evidence required by the operation; enablement uses neither future events nor unrecorded global information.

\begin{definition}[Legal iCORE update]
\label{def:legal-core-update}
A candidate local update $X\rightarrow X'$ is \emph{legal} if it (i) accesses only declared finite interfaces and depends only on its declared local view and $D$; (ii) applies an operation permitted by $D$; (iii) provides active $V$-valid evidence for every assertion it creates, activates, makes decision-relevant, or semantically modifies, and leaves no active assertion supported only by evidence that the update revokes or supersedes; and (iv) yields $X'\in\mathcal{X}_{D}$ while respecting $\equiv_X$.
\end{definition}

An \emph{admissible process} is any finite or infinite sequence $X_0\rightarrow X_1\rightarrow X_2\rightarrow\cdots$ of legal updates; different interaction orders, agent outputs, and monitor interventions may induce different admissible processes.

\subsection{Work Soundness}
\label{sec:work-soundness-property}

Using the correspondence established by $\Pi$, work soundness determines whether the evolving state of work in $Q$ is justified by observable interaction in $G$.

\begin{definition}[Work soundness]
\label{def:work-soundness}
For $q\in\WorkSet(X)$, let $\mathsf{A}_D(q;X)$ be its active decision-relevant work-level assertions and set $\mathsf{A}_D(X)=\bigcup_{q\in\WorkSet(X)}\mathsf{A}_D(q;X)$. A iCORE state $X$ is \emph{work-sound} when every $a\in\mathsf{A}_D(X)$ has a finite justification through observable initial-input or interaction nodes in $G$, the declared operation for each non-initial assertion, the corresponding links in $\Pi$, and evidence active and valid under $V$. A process is work-sound when every state it reaches is work-sound.
\end{definition}

Soundness therefore applies to every decision-relevant creation, assignment, transformation, resolution, and terminal-support assertion, not only the final report. For the evaluation set $\WorkSet(X)$ in Definition~\ref{def:relative-obligation}, $D$ predeclares positive weights $w_q$ based on declared information such as work type or covered-root importance; uniform weights are allowed. Let $J_D(q;X)=1$ exactly when all assertions in $\mathsf{A}_D(q;X)$ are justified. Define
\[
\snd_D(X)=
\begin{cases}
1, & \WorkSet(X)=\varnothing,\\[2mm]
\displaystyle
\frac{\sum_{q\in\WorkSet(X)}w_qJ_D(q;X)}
{\sum_{q\in\WorkSet(X)}w_q},
& \text{otherwise}.
\end{cases}
\]
Then $X$ is work-sound if and only if $\snd_D(X)=1$.

\begin{proposition}[Local-to-global preservation of work soundness]
\label{prop:local-work-soundness}
If $X_0$ is work-sound, every state reachable by an admissible process is work-sound.
\end{proposition}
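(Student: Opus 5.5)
The proof is by induction on the length of the admissible process, so everything reduces to a one-step preservation lemma: if $X$ is work-sound and $X\rightarrow X'$ is a legal iCORE update (Definition~\ref{def:legal-core-update}), then $X'$ is work-sound. The base case is the hypothesis on $X_0$. For infinite processes, every reached state is $X_n$ for some finite $n$, so the finite induction suffices.

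For the one-step lemma, fix $a\in\mathsf{A}_D(X')$ and split into two cases according to how $a$ arises relative to the update.

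\emph{Case (A): $a$ is touched by the update.} This means $a$ is newly created, newly activated, newly made decision-relevant, or semantically modified. Clause (iii) gives active $V$-valid evidence for $a$, and clause (ii) gives a declared operation permitted by $D$. Clause (i) places the responsible action in $G'$ with its links recorded in $\Pi'$. Combined with the feasibility conditions of $\mathcal{X}_D$ from clause (iv), which anchor work to observable input and require links to agree across $G,Q,\omega,\kappa$, this yields a one-step justification. Its premises are either observable nodes of $G'$ or internal premises in the local view. Each internal premise is itself an active assertion of $X'$, which by the same case split is justified. Composing these justifications gives a finite derivation, because $G'$ is a finite DAG and $\tau_G$ strictly increases along edges, so the recursion is well-founded.

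\emph{Case (B): $a$ is untouched by the update.} Then $a$ was already active in $X$, and it is decision-relevant in $X'$. I need $a\in\mathsf{A}_D(X)$, i.e.\ that it was decision-relevant in $X$ and its component lay in $\WorkSet(X)$. If the update newly made it relevant, it falls under Case (A) instead. So by soundness of $X$ it had a finite justification in $X$. This justification transfers to $X'$ for three reasons. Legal updates only add nodes to the event DAG $G$, so the observable nodes persist. Links in $\Pi$ are retained for active assertions by feasibility. And the second half of clause (iii) forbids leaving any active assertion supported only by revoked or superseded evidence. Since the derivation is finite, an induction on derivation depth shows every premise remains justified.

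Finally, clause (iv) makes justification invariant under $\equiv_X$, so representative choice does not matter. This yields $\snd_D(X')=1$.

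The main obstacle is Case (B) when the evaluation set changes. The set $\WorkSet(X')$ can gain components through shifts in $\operatorname{Ob}_{X'}$, since $\operatorname{Dis}_{-q}$ depends on other components. A component that was settled and irrelevant in $X$ may thus re-enter evaluation without being "touched." I would first argue that any assertion becoming decision-relevant counts as "making decision-relevant" under clause (iii), putting it in Case (A). Failing that, I would appeal to the feasibility conditions in Appendix~\ref{app:feasibility}, which require every recorded assertion to be anchored and evidenced regardless of relevance. A second delicate point is indirect dependence on revoked evidence through chains of premises. I would handle it by making the depth induction explicit, so that any failing premise contradicts the "only by revoked evidence" clause.
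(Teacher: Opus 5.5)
Your skeleton matches the paper's: a one-step preservation lemma plus induction, with a split between assertions the update touches and those it leaves alone. Your Case (A) is also essentially the paper's argument, using clause (iii) for evidence and $X'\in\mathcal{X}_D$ for anchoring through $G$, $\Omega$, and $\Pi$. The gap is in Case (B). You put there every assertion whose own content, activity, and relevance are unchanged, and then claim that its old justification \emph{transfers} to $X'$. That fails when the update leaves $a$ alone but writes into $a$'s witness. It may revoke or supersede a certificate the witness uses. It may also rewrite an internal premise so that a certificate is no longer current, and hence no longer $V$-valid, without any revocation.

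None of your three reasons excludes this. Append-only growth of $G$ is not among the legality conditions, and it would not protect certificates anyway. Feasibility guarantees that \emph{some} $\Pi$-backed support exists in $X'$, not that the particular links of the old witness survive. The second half of clause (iii) only forbids an active assertion from being supported \emph{only} by revoked or superseded evidence. That neither makes the surviving support a complete valid witness nor says anything about premises made non-current by a write. For the same reasons, the depth induction you sketch at the end cannot succeed: a premise can fail with no revocation at all, and a partially revoked support does not contradict the `only' clause.

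The missing idea is that in this subcase you must abandon the old witness and obtain a fresh one from $X'\in\mathcal{X}_D$. The paper splits your Case (B) accordingly. If the update writes none of the objects in $a$'s witness, the case hypothesis, together with the requirement that a legal update read and write only through its declared interfaces, carries the witness over verbatim. If it does write into the witness, conditions 2 and 8 of Appendix~\ref{app:feasibility} apply. They say that every non-initial assertion of $X'$ is backed by responsible nodes, operation labels, and active valid certificates. They also say that revoked or superseded evidence cannot keep supporting an active assertion without re-certification, repair, invalidation, rejection, or supersession. Since $a$ is still active, it must carry replacement support valid in $X'$.

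Your closing fallback (`appeal to feasibility') is the right tool, but it should be applied here, not only to shifts of $\WorkSet$. For those shifts, your option of treating the assertion as newly decision-relevant already matches the paper. Finally, the recursion over internal premises in Case (A) is unnecessary, for three reasons:
\begin{enumerate}
\item Justification requires those premises to be present and current, not themselves justified.
\item They need not lie in $\mathsf{A}_D(X')$, so `the same case split' does not apply to them.
\item Well-foundedness would come from acyclicity of the active certificate-dependency graph, not from $\tau_G$.
\end{enumerate}
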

The proof is given in Appendix~\ref{app:work-soundness-proofs}; thus, legality checked from finite local views yields a global guarantee over all active decision-relevant assertions in each reachable state.

Work soundness also makes the recorded work process reconstructable.

\begin{proposition}[Audit reconstruction]
\label{prop:audit-reconstruction}
If every legal update and its effects on $G$, $Q$, and $\Pi$ are recorded, the ordered record reconstructs every iCORE state up to $\equiv_X$. For any finite accepted process, it also gives each component in the terminal cover a finite explanation from the initial input through linked interactions in $G$, work transformations in $Q$, and their correspondence in $\Pi$.
\end{proposition}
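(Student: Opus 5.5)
The proof has two parts: a replay argument for state reconstruction, and a well-founded backward-chaining argument for the terminal explanations. Both rely on the legality contract of Definition~\ref{def:legal-core-update} and on Proposition~\ref{prop:local-work-soundness}.

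\textbf{Reconstruction.} I would argue by induction on the length $n$ of the recorded prefix $X_0\to X_1\to\cdots\to X_n$. The base case is immediate, since $X_0$ is fixed by the system $\mathcal S$ and so is recorded. For the inductive step, suppose the record determines a state $\widehat X_n$ with $\widehat X_n\equiv_X X_n$. The $(n+1)$-st record entry names a legal update, its declared operation, and its effects on $G$, $Q$, and $\Pi$. Condition (i) of legality says the update depends only on its declared finite local view and on $D$. That view lives in fields that $\equiv_X$ preserves under transported typed-graph isomorphisms. Hence the recorded effects can be applied to $\widehat X_n$ along the isomorphism, giving some $\widehat X_{n+1}$. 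Because $\equiv_X$ is a congruence for legal updates (Appendix~\ref{app:local-updates}), we get $\widehat X_{n+1}\equiv_X X_{n+1}$. This handles every reachable state, including those of infinite processes, since each such state lies at a finite index.

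\textbf{Finite explanations.} Now fix a finite accepted process ending in $X_N$ with $\operatorname{Acc}(X_N)=1$. Every component $q$ in the terminal cover carries an active assertion that is decision-relevant for the terminal predicate, so $q\in\WorkSet(X_N)$ by Definition~\ref{def:relative-obligation}. By Proposition~\ref{prop:local-work-soundness}, $X_N$ is work-sound. So by Definition~\ref{def:work-soundness}, each assertion of $q$ has a finite justification through nodes of $G$, declared operations $\omega$, links $\pi$, and valid certificates in $\kappa$. I would then unfold this justification backwards into an explanation tree. Each non-initial assertion points to the action $a$ with $q\in\pi(a)$ that created it, to its operation $\omega(a,q)$, and to the premises of the supporting certificate. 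Internal premises are earlier assertions in $Q$, and external premises are environment-input nodes of $G$.

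The main obstacle is showing that this backward chaining terminates at initial-input nodes rather than cycling. $Q$ need not be acyclic, and certificates could in principle refer to one another. My first attempt would use a well-founded measure: pair each premise with the logical time $\tau_G$ of the action that introduced it. Legality condition (iii) requires every created assertion to be supported by evidence already active when the update is applied, that is, by evidence from earlier states. Meanwhile $\tau_G$ strictly increases along edges of the finite DAG $G$. So each backward step strictly lowers the update index, which is bounded by $N$. The chain must therefore stop at assertions present in $X_0$, and those are anchored to observable initial input by the feasibility conditions of Appendix~\ref{app:feasibility}.

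One further point needs care. Some intermediate supports may later be revoked or superseded. For those I would use the ordered record rather than $X_N$ alone: the explanation cites each supporting certificate as it was valid at the step where it was used. Condition (iii) guarantees that no assertion still active in $X_N$ relies solely on revoked evidence, so the explanation of the terminal cover remains valid in the final state.
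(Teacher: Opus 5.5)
Your overall structure matches the paper's. The replay induction, using transport along $\equiv_X$ and the congruence property, is its replay-uniqueness lemma. Chaining back through the ordered record to $X_0$, anchored by feasibility of $X_0$, is its backward audit slice. However, the second part has two unjustified steps.

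First, you get the seed justification from Proposition~\ref{prop:local-work-soundness}, which needs $X_0$ to be work-sound. That is not a hypothesis of this statement; the paper adds it explicitly where it is needed, e.g.\ Corollary~\ref{cor:terminal-report-soundness}. The paper's proof never uses work soundness here. It takes feasibility of $X_N$ from $\operatorname{Acc}(X_N)$. It then reads the support of each seed assertion (accepted status, signatures, coverage and compatibility facts of the cover component) off the legality of the update that wrote it. Your own unfolding already has to switch to this one step later. Certificate premises need not be active decision-relevant assertions of components in $\WorkSet(X_N)$, so work soundness of $X_N$ does not cover them anyway.

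Second, the termination step you flag as the main obstacle rests on a misreading of condition (iii) of Definition~\ref{def:legal-core-update}. That condition only requires the update to \emph{provide} active valid evidence, which may be a certificate written by that same update. So ``each backward step strictly lowers the update index'' does not follow from it. Nor does $\tau_G$ help, since it is monotone only along $G$-edges and certificate-premise links are not $G$-edges.

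Two facts are actually needed:
\begin{itemize}
\item The enablement clause in Appendix~\ref{app:local-updates}: every declared object and certificate premise is present and current in the pre-update view $L(\rho_i;X_i)$.
\item For chains inside a single state or update (new assertion to new certificate, certificate to certificate): acyclicity of the active certificate-dependency graph from Appendix~\ref{app:certificates}, plus finiteness of the local views and of $G$.
\end{itemize}
The paper avoids a measure altogether. It sweeps $i=N-1,\dots,0$, carrying unchanged objects back and replacing each written object by its premises in $L(\rho_i;X_i)$. Termination is then automatic after $N$ steps, and only within-step finiteness has to be argued.
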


The proof is given in Appendix~\ref{app:audit-reconstruction-proof}.

\subsection{Sound Terminal Outcomes}
\label{sec:terminal-soundness}

For $A\subseteq\operatorname{Accepted}(Q)$ and $R\subseteq\operatorname{ReqRoot}(Q)$, $A$ \emph{covers} $R$ when $R\subseteq\bigcup_{a\in A}\operatorname{Cov}_Q(a)$, and is a \emph{terminal cover} when it also has the joint compatibility evidence required by $V$. The relevant root set $\operatorname{EffRoot}(X)$ omits roots excluded, duplicated, rendered irrelevant, or superseded by active valid evidence.

Let $\operatorname{EffUnres}(X)$ contain the non-excluded unresolved components whose coverage intersects $\operatorname{EffRoot}(X)$, and let $\operatorname{FailRoot}(X)$ contain effective roots with active valid failure evidence not neutralized by an active valid repair. Then $\operatorname{Acc}(X)$ requires feasibility, no effective unresolved work, accepted work forming a certified terminal cover of all effective roots, no failed effective root or accepted component with unresolved or invalid support, and no live action or artifact that can change any effective root's discharge, failure, exclusion, or compatibility. Conversely, $\operatorname{Rej}(X)$ requires feasibility, no effective unresolved work, a nonempty effective-root set entirely contained in $\operatorname{FailRoot}(X)$ with no effective root discharged, and no live action or artifact that can repair, discharge, or reopen an effective root. Appendix~\ref{app:terminal-predicates} gives the exact predicates. A finite process is \emph{terminally complete} when its final state satisfies $\operatorname{Acc}(X)$ or $\operatorname{Rej}(X)$, or active valid evidence certifies that it is externally stuck.

\begin{definition}[Terminal-report soundness]
\label{def:terminal-report-soundness}
A work-sound process is \emph{terminal-report sound} when every \textsc{Accept} report emitted at a state $X$ satisfies $\operatorname{Acc}(X)$ and every \textsc{Reject} report emitted at a state $X$ satisfies $\operatorname{Rej}(X)$.
\end{definition}

Work soundness justifies individual work updates, while terminal-report soundness additionally requires the reported outcome to account for all required work.

\begin{corollary}[Terminal-report soundness]
\label{cor:terminal-report-soundness}
If $X_0$ is work-sound, every committed update is legal, and \textsc{Accept} and \textsc{Reject} are emitted only when $\operatorname{Acc}(X)$ and $\operatorname{Rej}(X)$ hold, respectively, then the process is terminal-report sound.
\end{corollary}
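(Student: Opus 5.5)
The corollary has two parts: the process must be work-sound, and its terminal reports must be correct. I would prove each part directly from the definitions.

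\textbf{Step 1 (work soundness).} The process in question is a sequence of committed updates starting from $X_0$. Every committed update is legal, so the sequence $X_0\to X_1\to\cdots$ is an admissible process in the sense of Section~\ref{sec:legal-process}. Since $X_0$ is work-sound, Proposition~\ref{prop:local-work-soundness} shows that every reachable state $X_t$ is work-sound. By Definition~\ref{def:work-soundness}, the process is therefore work-sound. This settles the precondition ``a work-sound process'' in Definition~\ref{def:terminal-report-soundness}.

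\textbf{Step 2 (report conditions).} Take any \textsc{Accept} report emitted at a state $X$. By hypothesis, \textsc{Accept} is emitted only when $\operatorname{Acc}(X)$ holds, so $\operatorname{Acc}(X)$ holds at that emission state. The same argument applied to \textsc{Reject} gives $\operatorname{Rej}(X)$. Both conditions of Definition~\ref{def:terminal-report-soundness} are then met, and the process is terminal-report sound.

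\textbf{Main subtlety.} The only real point to check is that the state at which a report is emitted is the state for which the emission guard was evaluated. In particular, no uncommitted or illegal intermediate change may intervene between the guard check and the report. I would settle this by treating emission as an event that reads the current committed state $X_t$ and does not change $Q$. Equivalently, the report can be regarded as a legal monitor update whose recorded precondition is $\operatorname{Acc}(X_t)$ or $\operatorname{Rej}(X_t)$.

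Two facts from the definitions then close the argument. First, condition (iii) of Definition~\ref{def:legal-core-update} ensures that any terminal marker asserted by such an update is itself justified. Second, because $\operatorname{Acc}$ and $\operatorname{Rej}$ depend only on fields preserved by $\equiv_X$, the guard is invariant under the congruence. So checking the guard on any representative of the state's equivalence class is enough, and the corollary follows.
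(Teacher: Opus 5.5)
Your proof is correct and follows the paper's argument. The paper also treats the committed-state sequence as an admissible process and applies Proposition~\ref{prop:local-work-soundness} to get work soundness, then takes the current committed state $X_{n(e)}$ at each report and reads $\operatorname{Acc}$ or $\operatorname{Rej}$ directly off the emission guard. Your closing remarks about condition (iii) and $\equiv_X$-invariance of the guards are harmless but unnecessary, because the paper simply identifies the emission state with the current committed state, which is exactly your primary reading.
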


\paragraph{When process guarantees imply task correctness.}
These process guarantees imply semantic task correctness only when $V$ is adequate for the declared task class. An emitted decision is \emph{semantically correct} when its \textsc{Accept} output or \textsc{Reject} report satisfies the class's declared acceptance or rejection semantics. A \emph{$V$-certified failed-root set} $F\subseteq\operatorname{EffRoot}(X)$ contains roots with active valid failure evidence not neutralized by active valid repair. We call $V$ \emph{task-adequate} when every $V$-certified accepted terminal cover yields a semantically correct accepted output, every $V$-certified failed-root set satisfying the declared rejection rule yields a semantically correct rejection, and every process-relevant fault in coverage, compatibility, evidence, or terminal settlement is detectable under $V$.

\begin{proposition}[Task-level implication under validator adequacy]
\label{prop:task-level-implication}
If $V$ is task-adequate and a process is work-sound, terminal-report sound, and terminally complete, then every emitted \textsc{Accept} or \textsc{Reject} decision is semantically correct for the declared task class; an externally stuck report certifies only process status.
\end{proposition}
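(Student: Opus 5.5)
The argument is a case split on the emitted decision. Each case is then closed by the matching clause of task-adequacy, with terminal-report soundness supplying the premise that clause needs. Fix a process that is work-sound, terminal-report sound (Definition~\ref{def:terminal-report-soundness}), and terminally complete, and consider any emitted decision at a state $X$. There are three cases: an \textsc{Accept} report, a \textsc{Reject} report, and an externally stuck report.

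\emph{Accept case.} By terminal-report soundness, $\operatorname{Acc}(X)$ holds. I would unpack the predicate of Section~\ref{sec:terminal-soundness} (with the exact form in Appendix~\ref{app:terminal-predicates}) to get a set $A\subseteq\operatorname{Accepted}(Q)$ that covers $\operatorname{EffRoot}(X)$ and carries the joint compatibility evidence required by $V$. Hence $A$ is a $V$-certified accepted terminal cover. The predicate also rules out effective unresolved work, failed effective roots, and accepted components with unresolved or invalid support.

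Work soundness of $X$ (Definition~\ref{def:work-soundness}) is needed so that "$V$-certified" is meaningful. Every active decision-relevant assertion has a finite justification through $G$, $\Pi$, and active $V$-valid evidence. The coverage, compatibility, and acceptance-status assertions underlying $A$ are decision-relevant, since they affect the terminal predicate. Their components therefore lie in $\WorkSet(X)$, and their certificates are genuinely active and valid rather than merely recorded.

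Task-adequacy then says every $V$-certified accepted terminal cover yields a semantically correct accepted output. To rule out a hidden process fault making the output wrong despite certification, I would invoke the third adequacy clause: every process-relevant fault in coverage, compatibility, evidence, or settlement is detectable under $V$. Such a fault would therefore appear as active failure or invalidity evidence, contradicting $\operatorname{Acc}(X)$.

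\emph{Reject case.} Symmetrically, $\operatorname{Rej}(X)$ gives a nonempty set $\operatorname{EffRoot}(X)\subseteq\operatorname{FailRoot}(X)$ with no root discharged and no live repair. Then $F=\operatorname{EffRoot}(X)$ is a $V$-certified failed-root set. Work soundness again guarantees that the failure evidence, and the absence of neutralizing repairs, are active and valid. The conditions of $\operatorname{Rej}$ are exactly the declared rejection rule, so adequacy yields a semantically correct rejection.

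\emph{Externally stuck case.} Terminal completeness allows the final state to be certified externally stuck without $\operatorname{Acc}$ or $\operatorname{Rej}$. No Accept or Reject decision is emitted there, so no semantic claim is made, and the statement only asserts process status. This case needs only the remark that neither adequacy clause is triggered.

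The main obstacle is the step that matches $\operatorname{Acc}(X)$ and $\operatorname{Rej}(X)$ to the hypotheses of task-adequacy. One must check that the cover witnessed inside $\operatorname{Acc}$ is literally a "$V$-certified accepted terminal cover" in the adequacy sense. One must also check that $\operatorname{Rej}$'s conditions coincide with "satisfying the declared rejection rule." I would handle this by taking the definitions in Appendix~\ref{app:terminal-predicates} as the declared rule, and by using work soundness to upgrade "recorded certificate" to "active $V$-valid certificate." The rest is direct definition chasing.
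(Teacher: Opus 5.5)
Your proposal is correct and follows essentially the same route as the paper's proof. Terminal-report soundness yields $\operatorname{Acc}(X)$ or $\operatorname{Rej}(X)$, and work soundness upgrades the witnessed cover (respectively $F=\operatorname{EffRoot}(X)$) to a $V$-certified object. The first or second adequacy clause then closes the case, and an externally stuck report supplies neither object.

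The one flaw is your appeal to the detectability clause in the Accept case. It is unnecessary, because the first clause already concludes. Its inference is also invalid: a fault being detectable under $V$ does not mean it appears as active failure evidence in $X$. The paper only remarks that this clause is consistent with both cases, so you should drop that step.
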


The proof is given in Appendix~\ref{app:task-implications}. For open-ended tasks whose quality is not fully captured by $V$, iCORE exposes unsupported process states but does not certify the remaining semantic quality of the agents' outputs.

\section{iCORE for Agent-Assignment Suitability}
\label{sec:assignment-stability}

Work soundness determines whether evaluated work is justified, not whether it is assigned suitably. We formalize \emph{agent-assignment stability} as the absence of a feasible alternative whose declared contribution advantage exceeds a fixed tolerance; here, ``stability'' means resistance to beneficial reassignment, not temporal invariance.

\subsection{Capability Values and Beneficial Reassignment}

For each $q\in\WorkSet(X)$, $D$ defines a feasible-agent set $\mathcal{F}_D(q,X)\subseteq\mathcal{M}_D$ using declared observable constraints such as tool access, modality, availability, authorization, or resource limits; feasibility does not predict success. The rule is evaluated from registered metadata and capability evidence in $G$. If no hard filter is available, it may conservatively set $\mathcal{F}_D(q,X)=\mathcal{M}_D$; by convention, a maximum over an empty feasible set is zero, indicating that no reassignment candidate is currently available.

The map $\alpha_Q(q)\in\mathcal{M}_D\cup\{\bot\}$ records the current assignee for unresolved work and the last assignee for settled work, using the statuses defined after Definition~\ref{def:obligation-graph}; $\bot$ denotes a blocked or unassigned component. Feasibility of $\mathcal{X}_D$ requires every non-$\bot$ assignee to belong to the feasible set in the applicable evaluation context.

To prevent retrospective evaluation, let $X_D^{\mathrm{eval}}(q)$ be $X$ for unresolved work and the reconstructible snapshot immediately before the assignment or reassignment establishing $\alpha_Q(q)$ for settled work, and abbreviate
$
\mathcal{F}_D(q,X):=\mathcal{F}_D\bigl(q,X_D^{\mathrm{eval}}(q)\bigr),
\qquad
v_D(m,q\mid X):=v_D\bigl(m,q\mid X_D^{\mathrm{eval}}(q)\bigr).
$
Thus, later availability changes and post-outcome oracle information cannot alter settled-work assessment.

For $m\in\mathcal{M}_D$, $D$ declares a normalized net contribution value $v_D(m,q\mid X)\in[0,1]$. It may be a scalarization, fixed before evaluation, of auditable criteria such as validated success estimates, latency, cost, or risk; it need not be an oracle estimate. If no defensible scalarization exists, report the criteria separately and do not invoke the scalar regret guarantee. Set $v_D(\bot,q\mid X)=0$. Every capability observation used by feasibility or contribution values must be recorded in $G$ and linked through $\Pi$ to the corresponding assignment in $Q$; unsupported self-reports and post-outcome oracle information are excluded.

For $m\in\mathcal{F}_D(q,X)$, define the counterfactual assignment gain
$
\Delta_D(m\leftarrow\alpha_Q(q);q,X)
=v_D(m,q\mid X)-v_D(\alpha_Q(q),q\mid X).
$
A gain above a declared tolerance $\epsilon\ge0$ is an actionable \emph{$\epsilon$-beneficial reassignment} for unresolved work. For settled work retained in $\WorkSet(X)$, it is a historical assignment-regret diagnostic computed in the assignment-time context and excludes later evidence.

\begin{definition}[Agent-assignment stability]
\label{def:assignment-stability}
A component $q\in\WorkSet(X)$ is \emph{$\epsilon$-stable} at $X$ when
$
    \max_{m\in\mathcal{F}_D(q,X)}
    \Delta_D(m\leftarrow\alpha_Q(q);q,X)
    \le\epsilon.
$
Using the same predeclared weights $w_q$ as work soundness, define
\[
\astab_{D,\epsilon}(X)=
\begin{cases}
1, & \\ \qquad\qquad \WorkSet(X)=\varnothing,\\[2mm]
\displaystyle
\frac{\sum_{q\in\WorkSet(X)}w_q
\mathbf{1}[q\text{ is $\epsilon$-stable at }X]}
{\sum_{q\in\WorkSet(X)}w_q},
& \\ \qquad\qquad \text{otherwise}.
\end{cases}
\]
The state is \emph{agent-assignment stable} when $\astab_{D,\epsilon}(X)=1$.
\end{definition}

The definition couples all three iCORE components: $Q$ supplies the component and assignee, $G$ the agents and observable capability evidence, and $\Pi$ the certified assignment correspondence.

\begin{theorem}[From Local Stability to Global Assignment-Regret Bound]
\label{thm:assignment-regret}
Assume $\WorkSet(X)\neq\varnothing$ and normalize its weights so that $\sum_{q\in\WorkSet(X)}w_q=1$. Let
$
    U_D(\alpha\mid X)=
    \sum_{q\in\WorkSet(X)}w_q
    v_D(\alpha(q),q\mid X).
$
For any componentwise feasible alternative assignment $\beta$, meaning $\beta(q)\in\mathcal{F}_D(q,X)\cup\{\bot\}$ for every $q$,
$
U_D(\beta\mid X)-U_D(\alpha_Q\mid X)
\le
1-(1-\epsilon)\astab_{D,\epsilon}(X).
$
In particular, if $X$ is agent-assignment stable, no feasible joint reassignment can improve the declared aggregate value by more than $\epsilon$.
\end{theorem}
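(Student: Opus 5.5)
The plan is to decompose the aggregate gain componentwise and bound each term using only two facts: the local stability inequality, and the normalization $v_D\in[0,1]$. By linearity of $U_D$ in the assignment,
\[
U_D(\beta\mid X)-U_D(\alpha_Q\mid X)=\sum_{q\in\WorkSet(X)} w_q\,\bigl(v_D(\beta(q),q\mid X)-v_D(\alpha_Q(q),q\mid X)\bigr).
\]
I will write $g_q$ for the $q$-th bracketed difference. Let $S\subseteq\WorkSet(X)$ be the set of $\epsilon$-stable components, so that $\sum_{q\in S}w_q=\astab_{D,\epsilon}(X)$ once the weights are normalized. Note that the normalization of the weights does not change $\astab_{D,\epsilon}$, since it is a ratio.

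\emph{Step 1: stable components.} Fix $q\in S$. If $\beta(q)\in\mathcal F_D(q,X)$, then $g_q=\Delta_D(\beta(q)\leftarrow\alpha_Q(q);q,X)$, and this is at most the maximum in Definition~\ref{def:assignment-stability}, which is at most $\epsilon$. If instead $\beta(q)=\bot$, then $g_q=0-v_D(\alpha_Q(q),q\mid X)\le 0\le\epsilon$, using $v_D(\bot,q\mid X)=0$ and $v_D\ge 0$.

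\emph{Step 2: unstable components.} Fix $q\notin S$. Since $v_D(\cdot,q\mid X)\in[0,1]$, we have $g_q\le 1-0=1$. This crude bound is the only one available here, because an unstable component carries no control on the gain.

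\emph{Step 3: sum and rearrange.} Combining the two steps gives
\[
U_D(\beta\mid X)-U_D(\alpha_Q\mid X)\le \epsilon\,\astab_{D,\epsilon}(X)+\bigl(1-\astab_{D,\epsilon}(X)\bigr)=1-(1-\epsilon)\,\astab_{D,\epsilon}(X).
\]
For the ``in particular'' clause, setting $\astab_{D,\epsilon}(X)=1$ yields the bound $\epsilon$. Since $\beta$ is an arbitrary componentwise feasible assignment, this covers every feasible joint reassignment.

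\emph{Expected obstacle.} The arithmetic is routine; the main obstacle is making sure the per-component values are evaluated consistently. The values in $U_D$ and the stability test must be taken in the same context $X_D^{\mathrm{eval}}(q)$, so that the abbreviated $v_D(m,q\mid X)$ means the same thing in both. The feasible set used for $\beta$ must be the evaluation-context set $\mathcal F_D(q,X)$ from the abbreviation. The edge cases must also be handled: $\alpha_Q(q)=\bot$ with value $0$, $\beta(q)=\bot$, and an empty feasible set, where the convention that the maximum is zero is consistent with Step 1, since then $\beta(q)=\bot$ is forced. I would state these conventions explicitly before the summation. No joint coupling between components arises, because both $U_D$ and the feasibility of $\beta$ are componentwise.
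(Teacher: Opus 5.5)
Your proof is correct and follows essentially the same route as the paper. The paper also splits $\WorkSet(X)$ into $\epsilon$-stable and unstable components. It bounds the gain by $\epsilon$ on the stable ones, with separate cases for $\beta(q)\in\mathcal{F}_D(q,X)$ and $\beta(q)=\bot$, and by $1$ on the rest via $v_D\in[0,1]$, then sums using $\sum_{q\in S}w_q=\astab_{D,\epsilon}(X)$. Your remarks on evaluation contexts and the empty-feasible-set convention are consistent with the paper's definitions and do not alter the argument.
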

The proof is given in Appendix~\ref{app:assignment-stability-proofs}. The left-hand side is the declared global assignment regret of replacing the recorded assignment $\alpha_Q$ by $\beta$. The theorem is local-to-global because componentwise stability bounds this regret for every feasible reassignment of the full evaluated work set; it does not require strategic agents or an equilibrium interpretation.

\subsection{iCORE-State Quality and Task Performance}
\label{sec:core-quality}

\begin{definition}[iCORE-state quality]
\label{def:core-quality}
The normalized iCORE-state quality is
$
\cq_{D,\epsilon}(X)
=\frac{1}{2}\bigl(\snd_D(X)+\astab_{D,\epsilon}(X)\bigr)\in[0,1].
$
We report $100\cq_{D,\epsilon}(X)$ together with both component scores; equal weighting is fixed before evaluation.
\end{definition}

Because both components are weight-normalized satisfaction rates, $\cq_{D,\epsilon}$ is invariant to common weight rescaling and supports comparison across work-set sizes; reporting the components separately prevents the average from hiding a severe soundness or assignment defect.

\begin{proposition}[Performance implication of iCORE-state quality]
\label{prop:quality-performance}
Let $\mathcal{W}=\WorkSet(X)\neq\varnothing$. Assume (i) $\Perf=\sum_{q\in\mathcal{W}}w_qY_q$ with $Y_q\ge0$ and $\sum_{q\in\mathcal{W}}w_q=1$; (ii) $V$ is task-adequate; (iii) every justified $q\in\mathcal{W}$ satisfies $\mathbb{E}[Y_q\mid X]\ge v_D(\alpha_Q(q),q\mid X)$; and (iv) every $q\in\mathcal{W}$ has a feasible agent with value at least $p_{\min}>\epsilon$. Then
$
\mathbb{E}[\Perf\mid X]
\ge 2(p_{\min}-\epsilon)
\left[\cq_{D,\epsilon}(X)-\frac{1}{2}\right]_+.
$
\end{proposition}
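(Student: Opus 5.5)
Conditioning on $X$ and using linearity together with $Y_q\ge0$, I will discard the contributions of unjustified or unstable components and keep only the set $\mathcal{G}=\{q\in\mathcal{W}: J_D(q;X)=1 \text{ and } q \text{ is } \epsilon\text{-stable at } X\}$. This gives
\[
\mathbb{E}[\Perf\mid X]=\sum_{q\in\mathcal{W}}w_q\,\mathbb{E}[Y_q\mid X]\ge\sum_{q\in\mathcal{G}}w_q\,\mathbb{E}[Y_q\mid X].
\]
For each $q\in\mathcal{G}$, assumption (iii) gives $\mathbb{E}[Y_q\mid X]\ge v_D(\alpha_Q(q),q\mid X)$. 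Assumption (iv) supplies some $m\in\mathcal{F}_D(q,X)$ with $v_D(m,q\mid X)\ge p_{\min}$. Stability (Definition~\ref{def:assignment-stability}) then yields
\[
v_D(\alpha_Q(q),q\mid X)\ge v_D(m,q\mid X)-\epsilon\ge p_{\min}-\epsilon>0.
\]
Hence $\mathbb{E}[\Perf\mid X]\ge(p_{\min}-\epsilon)\sum_{q\in\mathcal{G}}w_q$. Assumption (ii) plays no quantitative role here; it only licenses reading $Y_q$ as a semantic outcome, via Proposition~\ref{prop:task-level-implication}.

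The remaining step, and the only real one, is to lower-bound the weight of $\mathcal{G}$ by a union bound over its complement. With normalized weights, the unjustified components have weight $1-\snd_D(X)$ and the unstable components have weight $1-\astab_{D,\epsilon}(X)$. Therefore
\[
\sum_{q\in\mathcal{G}}w_q\ge 1-(1-\snd_D(X))-(1-\astab_{D,\epsilon}(X))=\snd_D(X)+\astab_{D,\epsilon}(X)-1=2\Bigl[\cq_{D,\epsilon}(X)-\tfrac12\Bigr].
\]
Since this weight is also nonnegative, the bound holds with the positive part. Multiplying by $p_{\min}-\epsilon>0$ gives the claim.

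I expect the main obstacle to be bookkeeping rather than mathematics. First, I must confirm that $\snd_D$ and $\astab_{D,\epsilon}$ use the same normalized weights $w_q$ over the same set $\mathcal{W}$, which Definition~\ref{def:assignment-stability} guarantees. Second, I must confirm that the value in (iii) and the stability comparison are evaluated in the same context $X_D^{\mathrm{eval}}(q)$, so that the inequalities chain for settled components.

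A subtle point is the case $\alpha_Q(q)=\bot$, where $v_D(\bot,q\mid X)=0$. Stability together with $p_{\min}>\epsilon$ then forces a contradiction, because the feasible agent from (iv) would have gain at least $p_{\min}>\epsilon$. So no component of $\mathcal{G}$ is unassigned, and the chain above applies to every $q\in\mathcal{G}$.
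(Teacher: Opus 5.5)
Your proposal is correct and follows essentially the same route as the paper. The paper likewise bounds each justified, $\epsilon$-stable component's expected contribution below by $p_{\min}-\epsilon$ using (iii), (iv) and stability, discards the other components via $Y_q\ge0$, and obtains the $\snd_D(X)+\astab_{D,\epsilon}(X)-1$ lower bound by summing $w_q$ times the pointwise inequality $J_qS_q\ge J_q+S_q-1$, which is exactly your union bound over complements.
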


The proof is given in Appendix~\ref{app:quality-performance-proof}. Assumption~(i) restricts the bound, not the definition of $\cq_{D,\epsilon}$, to metrics decomposable over required roots or verifier-certified components. Without such a decomposition, adequate validators, lower calibration, or the feasible-agent floor, $\cq_{D,\epsilon}$ remains a process diagnostic rather than a performance certificate.

\section{iCORE-Audit}
\label{sec:core-audit}

iCORE-Audit is an event-triggered intervention layer around an existing multi-agent execution policy. It neither defines the task nor generates or selects ordinary work; $\operatorname{Schedule}_D$ in Algorithm~\ref{alg:core-audit} is only the host-policy interface. Each proposal is labeled by its intended effects on $G$, $Q$, and $\Pi$ and audited under Definition~\ref{def:legal-core-update}. Unsupported work is detected from certificate or event--work correspondence defects, assignment instability from Definition~\ref{def:assignment-stability}, and stalling from a progress predicate fixed in $D$, so no trigger relies on unrecorded monitor judgment.

After each ordinary attempt, iCORE-Audit recomputes $\snd_D(X)$, $\astab_{D,\epsilon}(X)$, and $\cq_{D,\epsilon}(X)$ from $\WorkSet(X)$, active certificates, assignments, and capability evidence. A \emph{hard flag} records a repair-requiring defect, such as an unsupported attempted assertion, a revocation that makes dependent work repair-pending, or a validator inconsistency; the unsupported semantic effect is not committed. Let $h$ count consecutive iterations since the last committed transition satisfying the predeclared $\operatorname{Progress}_D$ predicate. It resets only on such a transition and otherwise increases after a legal nonproductive step, a rejected proposal, or no enabled productive work. An intervention fires when the current state has no enabled productive ordinary update, $h\ge H$, a hard flag, an $\epsilon$-beneficial reassignment, or $\cq_{D,\epsilon}(X)<\tau$; $H$ and $\tau$ are fixed before execution.

A diagnosis $F$ identifies the affected obligations, missing or invalid evidence, current assignees, feasible alternatives, and relevant interaction history. The deterministic routing rule $\operatorname{ResponsibleAgent}_D(X,F)$ selects the agent authorized to propose the repair; ``responsible'' denotes repair authority rather than blame. For a work or evidence defect, this is the current assignee or the proposer whose attempted update produced the flag, according to a priority rule fixed in $D$. For an assignment defect, it is the recorded agent designated by $D$ as authorized to revise assignments, which may be the original assigner and need not be the currently assigned worker. The selected agent proposes a repair, verification, decomposition, or reassignment, but iCORE-Audit still labels and audits the candidate recovery. The recovery is committed only if it is legal and work-sound, satisfies the matched-set conditions below, and resolves every diagnosed hard flag whose resolution is required for continuation or acceptance; otherwise, only a legal failure-and-feedback record is added to the iCORE state.

To compare intervention quality without rewarding the silent deletion of difficult work, let $X^{-}$ denote the state at which $F$ is issued, let $X_f$ be the candidate repaired state, and let $X^{+}=\operatorname{Commit}_D(X_f)$ when the recovery is accepted. Before feedback is sent, $F$ freezes a matched audit set $\mathcal W_F$ containing the affected obligations and every active support component whose assertions or assignments may be changed by the recovery. The comparison is therefore between the problematic pre-intervention state and its proposed repair on the same frozen cohort:
$
\cq_{D,\epsilon}(X_f;\mathcal W_F)
\ge
\cq_{D,\epsilon}(X^{-};\mathcal W_F).
$
After commitment, the left-hand side is equivalently evaluated at $X^{+}$. A valid discharge, replacement, or supersession counts as satisfied, whereas removal without valid transport or settlement evidence remains in the frozen cohort and counts as unsound. The complete matched-set semantics are given in Appendix~\ref{app:matched-intervention-quality}.

Algorithm~\ref{alg:core-audit}, given in Appendix~\ref{app:core-audit-algorithm}, formalizes this loop. It checks the terminal predicates, requests an ordinary proposal through the host execution policy when productive work is enabled, audits the proposal, updates the patience counter $h$, and recomputes the iCORE diagnostics. When a trigger fires, it constructs $F$ and $\mathcal W_F$, routes the diagnosis to the declared repair authority, and commits the proposed recovery only after $\operatorname{AuditRecovery}_D$ returns no violation. Because rejected proposals, diagnoses, feedback, and accepted recoveries are represented by legal iCORE updates, a later auditor can reconstruct what information was supplied, which agent responded, and how the response changed the work and assignment state.

\begin{proposition}[Soundness and intervention quality]
\label{prop:core-audit-properties}
Assume that $X_0$ is work-sound, every committed update in Algorithm~\ref{alg:core-audit} is legal, and $\operatorname{AuditRecovery}_D$ enforces the matched-set conditions above. Then every committed state is work-sound. Moreover, for every committed intervention from $X^{-}$ to $X^{+}$ with diagnosis $F$,
$
\cq_{D,\epsilon}(X^{+};\mathcal W_F)
\ge
\cq_{D,\epsilon}(X^{-};\mathcal W_F).
$
\end{proposition}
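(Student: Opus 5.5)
The plan has two parts, matching the two claims. The first is a direct reduction to Proposition~\ref{prop:local-work-soundness}. The second follows from the commit condition enforced by $\operatorname{AuditRecovery}_D$, together with the $\equiv_X$-invariance of the matched-set score.

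\textbf{Work soundness.} We first observe that the committed states of Algorithm~\ref{alg:core-audit} form an admissible process. Every committed transition is one of the following: an accepted ordinary proposal, a legal failure-and-feedback record for a rejected proposal or recovery, a diagnosis/feedback record, or an accepted recovery $X_f\mapsto X^{+}=\operatorname{Commit}_D(X_f)$. By hypothesis each of these is legal in the sense of Definition~\ref{def:legal-core-update}. Uncommitted proposals, meaning unsupported semantic effects flagged as hard flags, never enter the sequence. The committed sequence is therefore a sequence of legal updates starting from the work-sound $X_0$, and Proposition~\ref{prop:local-work-soundness} gives that every committed state is work-sound.

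The one subtlety is that $\operatorname{Commit}_D$ may be a composite step, for example a repair together with a reassignment. We would treat it as a finite chain of legal updates. Alternatively, we would use that legality is closed under $\equiv_X$, so $X^{+}\equiv_X X_f$ for the audited candidate.

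\textbf{Intervention quality.} Fix a committed intervention with diagnosis $F$, frozen cohort $\mathcal W_F$, pre-state $X^{-}$, and candidate $X_f$. By the algorithm, a recovery is committed only if $\operatorname{AuditRecovery}_D$ reports no violation. The matched-set condition it enforces is exactly
\[
\cq_{D,\epsilon}(X_f;\mathcal W_F)\ge\cq_{D,\epsilon}(X^{-};\mathcal W_F).
\]
It therefore remains to transport the left-hand side from $X_f$ to $X^{+}$.

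For this we argue that $\snd_D(\cdot;\mathcal W_F)$ and $\astab_{D,\epsilon}(\cdot;\mathcal W_F)$ depend only on fields preserved by $\equiv_X$: assertions, evidence, assignees, statuses, and the assignment-time snapshots $X_D^{\mathrm{eval}}(q)$. We also argue that the cohort $\mathcal W_F$ is frozen, so that components removed without valid transport or settlement evidence stay in the cohort and are scored as unsound rather than dropped. Both facts are given by the matched-set semantics of Appendix~\ref{app:matched-intervention-quality}. Since $\operatorname{Commit}_D$ changes only inessential fields relative to $X_f$, we get $\cq_{D,\epsilon}(X^{+};\mathcal W_F)=\cq_{D,\epsilon}(X_f;\mathcal W_F)$, which yields the inequality.

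\textbf{Main obstacle.} The delicate step is justifying that $\cq$ on the frozen cohort is well defined and equal at $X_f$ and $X^{+}$. Components of $\mathcal W_F$ may be discharged, superseded, or replaced by the recovery, so some may no longer lie in $\WorkSet(X^{+})$. The score must then be computed on identities transported through the recorded replacement evidence, using the ordered record of Proposition~\ref{prop:audit-reconstruction}, rather than on $\WorkSet(X^{+})$.

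I would first try to make this precise by defining the transported cohort via the typed-graph isomorphisms underlying $\equiv_X$. I would then verify that $J_D$ and $\epsilon$-stability are invariant under those isomorphisms, using the fact that assignment-time contexts are reconstructible snapshots and so cannot change after commitment.
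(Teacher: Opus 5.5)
Your proposal is correct and matches the paper's proof. Work soundness follows by induction over the retained (committed) sequence, using the one-step preservation lemma (equivalently Proposition~\ref{prop:local-work-soundness}). The quality claim follows from the $\operatorname{AuditRecovery}_D$ guard $\cq_{D,\epsilon}(X_f;\mathcal W_F)\ge\cq_{D,\epsilon}(X^{-};\mathcal W_F)$, together with the fact that $\operatorname{Commit}_D$ leaves unchanged the semantic content on which the frozen-cohort indicators $J_D^F$ and $A_{D,\epsilon}^F$ depend. The paper simply asserts that invariance rather than deriving it through $\equiv_X$-isomorphisms, so your ``main obstacle'' discussion is extra care, not a gap.
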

The proof is given in the Appendix.%~\ref{app:core-audit-proof}.
If Proposition~\ref{prop:quality-performance}'s assumptions hold for both matched states on normalized $\mathcal{W}_F$ with the same $p_{\min}$ and $\epsilon$, its certified lower bound cannot decrease and increases strictly when post-intervention matched quality exceeds both its pre-intervention value and $1/2$. This is a conditional guarantee, not a claim that every intervention improves every open-ended outcome.

%% file: 05experiments.tex
\section{Experiments}
\label{sec:experiments}

We test three claims: whether the coupled state $X=(G,Q,\Pi)$ is necessary to reconstruct both work-soundness and assignment defects; whether state-grounded auditing improves passive observation of the same full state; and whether iCORE quality follows the conditional performance relation in Proposition~\ref{prop:quality-performance}. Realized terminal performance is reported separately from process quality.

\subsection{Protocol}

\paragraph{Applications and systems.}
We use six structured collaborative task (SCT) applications with obligation graphs that evolve during execution: Binary Coverage (BC), Decompose and Solve (DS), Compatible Merge (CM), Dependency Ladder (DL), Repair Verification (RV), and Capability Assignment (CA), with terminal performance defined as the weighted correctness of required roots. Each uses three agents, $\epsilon=0.05$, and $p_{\min}=0.65$.
The controlled suite evaluates all six under five cooperation failure conditions, allowing us to observe iCORE's ability to detect distinct classes of faults; the real-LLM suite evaluates BC, DS, and CA under clean and mixed faults with Qwen2.5-0.5B-Instruct. The main suites contain $540$ controlled and $108$ real-LLM episodes, while ablation and boundary tests bring the total to $882$ artifacts. Appendix~\ref{app:experiments-revised} gives the full matrix, replication settings, and runtime specifications.

We compare two variants of iCORE to several baselines. In MAS-Only, agents receive no monitor state; agents under Interaction-only (INT), Task-only (TASK), and LLM-Judge (JUDGE) observe $G$, $Q$, and the execution trace, respectively. iCORE-Observe (iCORE-O) and iCORE-Audit (iCORE-A) both observe $X=(G,Q,\Pi)$, but only iCORE-A returns structured diagnoses and applies the frozen-set preview guard. A shared proposal backend within each mode makes iCORE-A versus iCORE-O isolate intervention.

\paragraph{Metrics and inference.}
We report %trajectory 
iCORE quality $\cq_{D,\epsilon}$, the conditional performance from Proposition~\ref{prop:quality-performance}, and terminal performance on required roots. Proposal-probe detection is measured before feedback. Comparisons use matched episode keys, one-sided paired Wilcoxon tests with Holm correction, and paired-bootstrap $95\%$ confidence intervals.

\subsection{Reconstruction and Auditing Effects}

\begin{figure}[t]
    \centering

    \begin{minipage}[t]{0.49\linewidth}
        \centering
        \vspace{0pt}
        \includegraphics[width=\linewidth]{%
            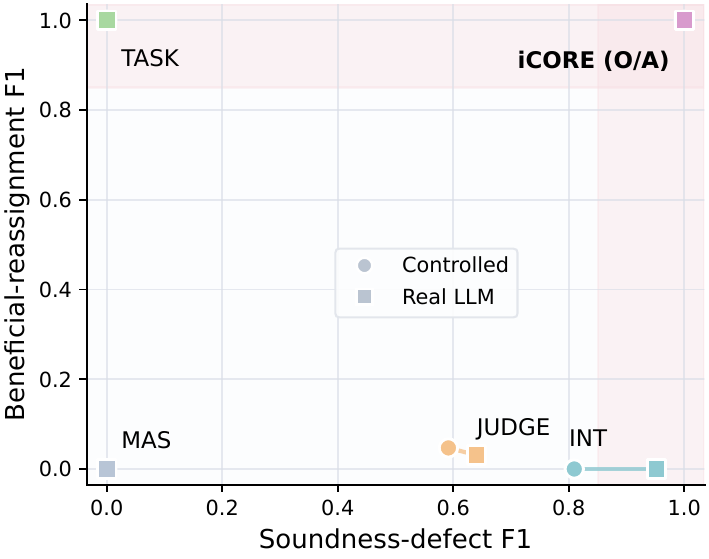}
        \vspace{-2mm}

        {\footnotesize\textbf{(a)} Representation validity}
    \end{minipage}
    \hfill
    \begin{minipage}[t]{0.49\linewidth}
        \centering
        \vspace{0pt}
        \includegraphics[width=\linewidth]{%
            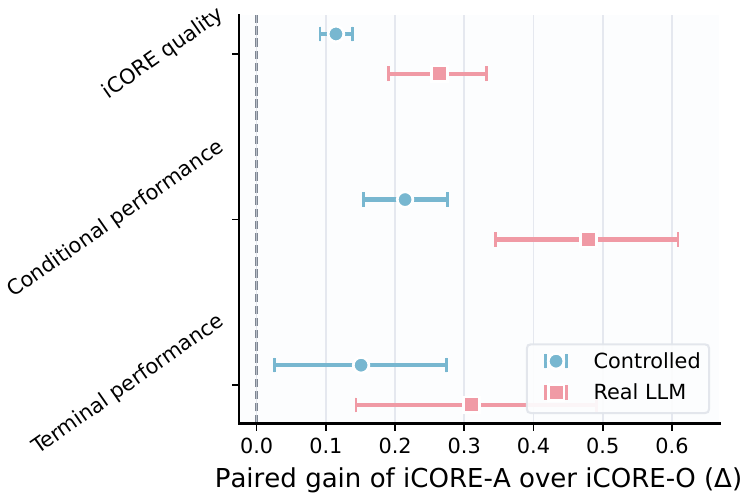}
        \vspace{-2mm}

        {\footnotesize\textbf{(b)} Paired auditing effect}
    \end{minipage}

    \vspace{-1mm}
    \caption{\textbf{Coupled-state reconstruction and auditing.}
\textbf{(a)} Representation validity across observation settings and execution modes. Markers distinguish controlled and real-LLM execution. iCORE (O/A) overlaps because both use FULL.
\textbf{(b)} Paired iCORE-A minus iCORE-O gains. Error bars are paired-bootstrap $95\%$ confidence intervals over $90$ controlled and $18$ real-LLM episode keys.}
    \label{fig:core-main-results-revised}
\end{figure}

Figure~\ref{fig:core-main-results-revised}(b) compares iCORE-A with iCORE-O on identical episode keys. Auditing improves trajectory quality, conditional performance, and terminal performance by $0.115/0.214/0.151$ in controlled execution and $0.264/0.480/0.310$ in real-LLM execution. Moreover, every recorded intervention is nondecreasing in matched iCORE quality ($512$ controlled and $108$ real-LLM records). This monotonicity is enforced by the preview guard, while the paired episode gains measure its downstream effect.

\subsection{Aggregate Results}

\begin{table}[t]
\centering
\scriptsize
\setlength{\tabcolsep}{2.2pt}
\renewcommand{\arraystretch}{1.03}
\resizebox{\linewidth}{!}{%
\begin{tabular}{@{}lccc|ccc@{}}
\toprule
& \multicolumn{3}{c|}{Controlled}
& \multicolumn{3}{c}{Real LLM} \\
\cmidrule(lr){2-4}
\cmidrule(lr){5-7}
System
& CQ & Cond. & Term.
& CQ & Cond. & Term. \\
\midrule
Best partial
& \underline{0.738}
& \underline{0.592}
& \underline{0.671}
& \underline{0.683}
& \underline{0.511}
& \underline{0.495} \\

iCORE-O
& 0.684
& 0.570
& 0.638
& 0.586
& 0.392
& 0.356 \\

iCORE-A
& \textbf{0.799}
& \textbf{0.785}
& \textbf{0.789}
& \textbf{0.850}
& \textbf{0.872}
& \textbf{0.667} \\
\bottomrule
\end{tabular}%
}
\caption{Main-suite results under controlled and real-LLM execution. ``Best partial'' is the columnwise best result among MAS-Only, INT, TASK, and JUDGE. Cond. and Term. denote conditional and terminal performance. Complete system-wise results appear in Appendix~\ref{app:aggregate-revised}.}
\label{tab:core-main-results-revised}
\end{table}

Table~\ref{tab:core-main-results-revised} compares iCORE-A with passive full-state observation and the strongest partial-state baseline. iCORE-A is best on all six mode--metric combinations, with particularly large gains in conditional performance under real-LLM execution. Its improvements over all five baselines remain significant after Holm correction (largest adjusted $p=8.71\times10^{-3}$ controlled and $3.41\times10^{-2}$ real LLM).

For every controlled quality bin with a positive certified bound, even the lower bootstrap endpoint of conditional performance remains above the bound in Proposition~\ref{prop:quality-performance}. The boundary suite shows why this conclusion requires adequate validators, calibrated capability values, and feasible agents, and the real-LLM suite is a finite candidate-selection stress test rather than an open-ended autonomous benchmark. Full results, boundary tests, and prompts appear in Appendix~\ref{app:experiments-revised}.

%% file: 06conclusion.tex
\section{Conclusion}
\label{sec:conclusion}

This paper introduced iCORE, a coupled representation $X=(G,Q,\Pi)$ for observable cooperation, evolving work obligations and assignments, and their certificate-backed correspondence. iCORE separates two complementary process properties: work soundness, which requires each decision-relevant work assertion to be justified, and agent-assignment stability, which requires active work to be assigned within a declared tolerance of the best feasible heterogeneous agent. Their quantitative satisfaction rates define iCORE-state quality, making it possible to compare states, interventions, and complete executions. Under decomposable tasks, adequate validators, and calibrated capability values, this score provides a lower bound on expected task performance; outside these assumptions, it remains a process-quality diagnostic rather than a certificate of open-ended answer quality. iCORE-Audit operationalizes the representation by allowing the ordinary agent process to run and returning state-grounded feedback when the process stalls or exposes a quality defect. Every diagnosis and recovery remains part of the auditable state, so quality improvement does not come at the cost of losing process accountability.

%% file: 07appendix.tex
\clearpage

\appendix

\section{Formal iCORE Semantics}
\label{app:core-semantics}

This appendix provides the detailed semantics and proofs underlying Sections~\ref{sec:core}--\ref{sec:core-audit}. It retains the complete execution model while the main text presents only the definitions needed to state the central results.

\subsection{Typed Observable Graphs}
\label{app:typed-graphs}

A typed observable graph is
\[
    H=(V_H,E_H,s_H,t_H,\theta_H,\lambda_H),
\]
where $V_H$ and $E_H$ are finite node and edge sets, $s_H,t_H:E_H\rightarrow V_H$ are source and target maps, and $\theta_H$ and $\lambda_H$ assign declared types and observable labels. We write $u\rightarrow_H v$ when an edge connects $u$ to $v$ and $u\rightsquigarrow_H v$ when there is a directed path from $u$ to $v$.

A field is \emph{semantic} when it may affect update enablement, feasibility, coverage, compatibility, or a terminal decision. Two typed graphs are observationally equivalent when a type-preserving graph isomorphism preserves every semantic field declared by $D$. Only representational fields, such as fresh identifiers, formatting, or inessential timestamps, may differ.

iCORE imposes an observational boundary. Private prompts, hidden chain-of-thought, and internal tool-selection heuristics may influence an agent, but they cannot support a work assertion or terminal decision unless their relevant effect is emitted as an observable action, artifact, label, or certificate premise represented in $G$, $Q$, or $\Pi$. Thus auditability concerns declared observables rather than access to an agent's hidden internal reasoning.

For $S\subseteq V_G$, define
\[
    \operatorname{Past}_G(S)
    =
    \{u\in V_G:
      u\rightsquigarrow_G s
      \text{ for some }s\in S\}.
\]
The live frontier $\operatorname{Live}(G)$ contains the actions and artifacts whose effects may still influence unresolved work or a terminal decision. A node leaves the live frontier only when its use, resolution, rejection, discard, invalidation, or supersession is observably recorded.

\subsection{Work Semantics}
\label{app:work-semantics}

The application declares how each relation in $\mathcal{R}_Q$ contributes to obligation, coverage, and support. We write
\[
    p\prec_Q^{\mathrm{obl}}q
\]
when $p$ must be resolved before $q$,
\[
    p\prec_Q^{\mathrm{cov}}q
\]
when $q$ may contribute to discharging $p$, and
\[
    p\prec_Q^{\mathrm{sup}}q
\]
when $p$ supports the existence or state of $q$.

The exact status vocabulary is
$
\mathcal{S}_Q=
\{
\texttt{open},
\texttt{ready},
\texttt{blocked},
\texttt{closed},\\
\texttt{submitted},
\texttt{accepted},
\texttt{rejected},
\texttt{invalid}
\}.
$

The required roots satisfy
$
    \operatorname{ReqRoot}(Q)
    \subseteq
    \{q\in V_Q:
      \nexists p\in V_Q
      \text{ with }
      p\prec_Q^{\mathrm{obl}}q\}.
$
The unresolved and accepted work sets are
$
    \operatorname{Unres}(Q)
    =
    \{q\in V_Q:
      \operatorname{stat}_Q(q)
      \in
      \{\texttt{open},\texttt{ready},
      \texttt{blocked},\texttt{submitted}\}\},
$
and
\[
    \operatorname{Accepted}(Q)
    =
    \{q\in V_Q:
      \operatorname{stat}_Q(q)=\texttt{accepted}\}.
\]

For a work component $q$, its root coverage is
\[
    \operatorname{Cov}_Q(q)
    =
    \{r\in\operatorname{ReqRoot}(Q):
      r\preceq_Q^{\mathrm{cov}}q\}.
\]
A set $A\subseteq\operatorname{Accepted}(Q)$ is a terminal cover of
$R\subseteq\operatorname{ReqRoot}(Q)$ when
\[
    R\subseteq
    \bigcup_{a\in A}\operatorname{Cov}_Q(a)
\]
and the joint compatibility evidence required by $V$ exists.

A required root is \emph{discharged} when a terminal cover exists for it. We write
\[
    \operatorname{Dis}(Q)
    =
    \{r\in\operatorname{ReqRoot}(Q):
      r\text{ is discharged}\}.
\]

The declaration package fixes a finite observable agent set $\mathcal{M}_D$. The map $\alpha_Q$ records the current assignee of unresolved work and retains the last recorded assignee of settled work. For a component $q$, let $\operatorname{Dis}_{-q}(X)$ contain the effective roots discharged by active accepted work whose support chain does not depend on $q$, and define
\[
    \operatorname{Ob}_X(q)
    =\operatorname{Cov}_Q(q)\cap
    \bigl(\operatorname{EffRoot}(X)\setminus
    \operatorname{Dis}_{-q}(X)\bigr).
\]
The evaluation set $\WorkSet(X)$ contains every unresolved component $q$ with
$\operatorname{Ob}_X(q)\neq\varnothing$, together with every component carrying
an active decision-relevant assertion for a iCORE diagnostic or terminal
predicate. Consequently, settled support components remain auditable rather
than disappearing from terminal quality scores. If $\WorkSet(X)=\varnothing$,
the task has no remaining or terminal-support work and both component scores
are defined as one by convention.

For each $q\in\WorkSet(X)$, $D$ specifies a feasible-agent set $\mathcal{F}_D(q,X)$ and a normalized net contribution value $v_D(m,q\mid X)\in[0,1]$. The formula and permissible evidence are fixed before evaluation. For settled work, $v_D$ may use only capability evidence observable no later than the relevant assignment or settlement event; post-outcome oracle information is excluded. Any capability observation used by $v_D$ must be represented in $G$, and any assignment or reassignment in $Q$ must be connected through $\Pi$ to the corresponding observable event and active valid evidence.

\subsection{Certificates and Correspondence}
\label{app:certificates}

A certificate is a finite record
\[
    c=
    \bigl(
    \operatorname{id}(c),
    \operatorname{prem}_{\mathrm{int}}(c),
    \operatorname{prem}_{\mathrm{ext}}(c),
    \operatorname{claim}(c),
    \operatorname{issuer}(c)
    \bigr).
\]
Internal premises reference information already represented in $X$. External premises reference sources whose authority and scope are declared by $V$.

An external premise enters a work-support chain only through a declared environment-input node in $G$ and an active valid certificate linking that node, through $\Pi$, to the affected work assertion. External evidence therefore does not bypass the observable support requirement.

A certificate is active when no active valid revocation or supersession certificate targets it. The predicate
\[
    \operatorname{ValidCert}_V(c;X)=1
\]
holds when $c\in\kappa$, $c$ is active, its internal premises are present and current, its external premises satisfy $V$, and its claim passes the declared validation procedure. The active certificate-dependency graph must be acyclic.

A work-level assertion includes any asserted node, relation, status, signature, coverage fact, compatibility fact, or terminal marker in $Q$. Every non-initial decision-relevant assertion must be connected through $\Pi$ to observable execution and active valid evidence.

\subsection{Feasibility}
\label{app:feasibility}

A well-typed state $X=(G,Q,(\pi,\omega,\kappa))$ is feasible when:

\begin{enumerate}
    \item every required root is connected to an observable initial input and supporting certificate;
    \item every non-initial work assertion is backed by responsible nodes in $G$, corresponding operation labels, and active valid certificates;
    \item every non-root work node has a finite support chain ending at a required root or declared external input;
    \item every unresolved work component remains connected to live execution or has an explicitly certified blocked, delegated, waiting, submitted, superseded, or repair-pending condition;
    \item coverage uses only eligible work, and every aggregate has the joint compatibility evidence required by $V$;
    \item every active assignment in $Q$ names a feasible agent or carries valid blocked/unassigned evidence, and every assignment or reassignment is supported by an observable event and capability evidence through $\Pi$;
    \item every terminal event in $G$ corresponds to the appropriate work state in $Q$, and every terminal work state has an observable witness;
    \item revoked or superseded evidence cannot continue supporting an active assertion without re-certification, repair, invalidation, rejection, or supersession; and
    \item no active assertion simultaneously depends on a certificate and an active valid revocation of that certificate.
\end{enumerate}

\subsection{iCORE Systems, Local Views, and Admissible Processes}
\label{app:local-updates}

\begin{definition}[iCORE system]
A iCORE system is
\[
    \mathcal S=
    (D,\mathcal X_D,\widehat{\mathcal R},X_0,
    \operatorname{Acc},\operatorname{Rej}),
\]
where $D$ fixes the observable semantics, agent set $\mathcal M_D$, operation vocabulary, validation specification $V$, capability-value interface, local-view interface, and declared equivalence; $\mathcal X_D$ is the set of feasible iCORE states satisfying the preceding conditions; $\widehat{\mathcal R}$ is the family of candidate local-update schemas; $X_0\in\mathcal X_D$ is the initial state; and
\[
    \operatorname{Acc},\operatorname{Rej}:
    \mathcal X_D\rightarrow\{0,1\}
\]
are the declared terminal predicates.
\end{definition}
The declared state equivalence $X\equiv_X X'$ holds when $X$ and $X'$ differ only in fields declared inessential by $D$ while agreeing, under transported graph isomorphisms, on every field that may affect update enablement, obligations, assignments, capability evidence, or terminal decisions. The relation is required to be a congruence for legal updates.

A rewrite occurrence $\rho$ at $X=(G,Q,\Pi)$ has finite declared interfaces
$
\operatorname{Read}_G(\rho;X),$
$
\operatorname{Write}_G(\rho;X),$
$
\operatorname{Read}_Q(\rho;X),$
$
\operatorname{Write}_Q(\rho;X),$
$
\operatorname{Read}_{\Pi}(\rho;X),$
$
\operatorname{Write}_{\Pi}(\rho;X).
$
Its local cooperation view is
\[
    \operatorname{Loc}_G(\rho;X)
    =
    \operatorname{Read}_G(\rho;X)
    \cup
    \operatorname{Anc}^{D}_G
    (\operatorname{Read}_G(\rho;X)),
\]
where the permitted ancestry rule and finite depth bound are fixed by $D$. Its mapped work view is
\[
\begin{aligned}
    \operatorname{View}_Q(\rho;\Pi,X)
    ={}&\operatorname{Read}_Q(\rho;X)\\
    &\cup
    \bigcup_{g\in\operatorname{Loc}_G(\rho;X)\cap V_G}
    \pi(g),
\end{aligned}
\]
together with the incident semantic relations, current statuses and signatures, and active certificates required by the declared operation. The complete local view is
\[
    L(\rho;X)=
    (\operatorname{Loc}_G(\rho;X),
     \operatorname{View}_Q(\rho;\Pi,X),
     \operatorname{Read}_{\Pi}(\rho;X)).
\]

\begin{definition}[Legal local update]
A candidate update schema induces a partial function
\[
    \widehat R_{\rho}:\mathcal X_D\rightharpoonup\mathcal X_D,
    \qquad
    \widehat R_{\rho}(X)=X'.
\]
An occurrence is enabled when every declared object and certificate premise is present and current in $L(\rho;X)$ and $D$ permits the proposed operation. The enabled update is legal when it (i) reads and writes only through its declared interfaces, (ii) depends only on $L(\rho;X)$ and the fixed declarations, (iii) supports every new semantic assertion with active valid evidence under $V$, (iv) produces $X'\in\mathcal X_D$, and (v) respects declared state equivalence.
\end{definition}

\begin{definition}[Admissible process and reachability]
An admissible process is a finite or infinite sequence
\[
    X_0\rightarrow X_1\rightarrow X_2\rightarrow\cdots
\]
in which every transition is a legal local update. A state $Y$ is reachable when $X_0\rightarrow^*Y$ along a finite admissible prefix.
\end{definition}

Anchoring, obligation settlement, certificate consistency, and event--work correspondence are state-space conditions encoded by feasibility rather than separate dynamic axioms. The remaining dynamic contract is online locality plus preservation: a legal update may use only its current declared view and must preserve feasibility and equivalence congruence. An incorrect proposal must therefore become an observable blocked, rejected, invalid, superseded, excluded, or repair-pending fact rather than silently breaking the model.

\subsection{Terminal Predicates}
\label{app:terminal-predicates}

Let $\operatorname{Off}_X(Q)$ contain work components marked by active valid evidence as duplicated, excluded, irrelevant, or superseded. Define
\[
    \operatorname{EffRoot}(X)
    =
    \operatorname{ReqRoot}(Q)
    \setminus
    \operatorname{Off}_X(Q)
\]
and
$
    \operatorname{EffUnres}(X)
    =
    \big\{
    q\in
    \operatorname{Unres}(Q)\setminus\operatorname{Off}_X(Q):
    \operatorname{Cov}_Q(q)\cap\operatorname{EffRoot}(X)
    \neq\varnothing
    \big\}.
$

The failed roots $\operatorname{FailRoot}(X)$ are the effective roots with active valid rejection, invalidity, impossibility, or unreachability evidence and no valid repair neutralizing that failure.

A state satisfies $\operatorname{Acc}(X)$ when:

\begin{enumerate}
    \item $X$ is feasible and $\operatorname{EffUnres}(X)=\varnothing$;
    \item accepted work forms a certified terminal cover of $\operatorname{EffRoot}(X)$;
    \item no effective root belongs to $\operatorname{FailRoot}(X)$;
    \item no accepted component depends on unresolved, invalid, rejected, contradictory, or unrepaired work; and
    \item no live productive artifact can change the discharge, failure, exclusion, or compatibility of an effective root.
\end{enumerate}

A state satisfies $\operatorname{Rej}(X)$ when:

\begin{enumerate}
    \item $X$ is feasible, $\operatorname{EffUnres}(X)=\varnothing$, and $\operatorname{EffRoot}(X)\neq\varnothing$;
    \item every effective root belongs to $\operatorname{FailRoot}(X)$;
    \item
    \[
        \operatorname{EffRoot}(X)
        \cap
        \operatorname{Dis}(Q)
        =
        \varnothing;
    \]
    and
    \item no live productive artifact can repair, accept, discharge, or reopen an effective root.
\end{enumerate}

\subsection{Productivity and Fairness}
\label{app:productivity}

A transition $X\rightarrow Y$ is productive when
$X\not\equiv_X Y$. The productive quotient is defined by
\[
    [X]_{\equiv_X}
    \rightarrow_p
    [Y]_{\equiv_X}
\]
when there exist $X'\equiv_X X$ and $Y'\equiv_X Y$ such that
$X'\rightarrow Y'$ is productive. We write $\rightarrow_p^*$ for its reflexive-transitive closure.

The system is productively terminating when no reachable state begins an infinite productive path. It is productively normalizing when every reachable state has a finite productive continuation to a productive normal form.

An occurrence is continuously enabled when, from some point onward, its declared objects and premises remain available. A productive execution is fair when it does not permanently starve a continuously enabled productive occurrence. The occurrence must eventually execute, or another recorded productive update must invalidate one of its premises by resolving, blocking, delegating, excluding, superseding, or repairing the relevant work.

\subsection{Why the Coupling Is Necessary}
\label{app:projection-insufficiency}

\begin{theorem}[Projection insufficiency]
\label{thm:projection-insufficiency}
Consider any class of iCORE states containing two states with identical unlabeled cooperation-graph topology but different obligation, assignment, certificate, or event--work correspondence information, such that one state is work-sound or agent-assignment stable and the other is not. No monitor restricted to the unlabeled topology of $G$ can be both sound and complete for the differing property on that class.
\end{theorem}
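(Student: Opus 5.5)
The argument is a standard indistinguishability (fooling-pair) argument. A monitor restricted to the unlabeled topology of $G$ is a map $M$ that factors through the forgetful projection
\[
\mathrm{top}(X)=(V_G,E_G,s_G,t_G)\ \text{(up to graph isomorphism)},
\]
so $M(X)=\widehat M(\mathrm{top}(X))$ for some $\widehat M$. Soundness and completeness for a property $P$ (either work soundness, $\snd_D(X)=1$, or agent-assignment stability, $\astab_{D,\epsilon}(X)=1$) mean that $M$ outputs ``yes'' exactly on the states in the class satisfying $P$. I will formalize this convention first, because the theorem is only meaningful once ``restricted to'' is read as factoring through $\mathrm{top}$. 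This includes the case of randomized monitors: there the output distribution depends only on $\mathrm{top}(X)$.

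The key steps, in order, are as follows.

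First, fix the hypothesized pair $X_1,X_2$ in the class with $\mathrm{top}(G_1)\cong\mathrm{top}(G_2)$, where $X_1$ satisfies $P$ and $X_2$ does not. The hypothesis asserts that they differ in some component among $Q$, $\alpha_Q$, $\kappa$, or $\Pi$, but the proof uses only the conclusion that $P(X_1)\neq P(X_2)$.

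Second, observe that $M(X_1)=\widehat M(\mathrm{top}(G_1))=\widehat M(\mathrm{top}(G_2))=M(X_2)$. If the common output is ``yes'', then $M$ accepts $X_2$, which violates $P$, so $M$ is unsound. If it is ``no'', then $M$ rejects $X_1$, which satisfies $P$, so $M$ is incomplete. In the randomized case, the probability of outputting ``yes'' is the same value $p$ on both states. Soundness requires $p=0$ on $X_2$, and completeness requires $p=1$ on $X_1$, which is a contradiction.

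Third, so that the theorem is not vacuous, I will exhibit such pairs for each property. For work soundness, take one cooperation graph and two audit maps. In the first, $\pi$ links an action to $q$ with a declared $\omega$ and an active valid certificate in $\kappa$. In the second, the certificate is removed or revoked, or $\omega$ is left undefined. The assertion on $q$ is then justified in one state and not in the other, so $J_D(q;X)$ differs and hence $\snd_D$ differs. For stability, keep $G$'s topology fixed and change only $\alpha_Q(q)$, or change the capability-evidence labels in $\mu_G$. The alternative assignee's gain then exceeds $\epsilon$ in one state but not in the other, provided the declared $v_D$ separates the agents.

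The main obstacle is step three. The perturbed state must remain in $\mathcal{X}_D$: its feasibility conditions, especially anchoring and assignment evidence, must hold. It must also leave $\WorkSet(X)$ nonempty and containing $q$. For the soundness witness, I would first try to keep the state feasible by recording the unsupported assertion as an explicitly active but unjustified claim in a non-feasible but well-typed state. If the class is required to be feasible, I would instead use the repair-pending convention, under which a component is feasible yet counted as unjustified. Since the theorem quantifies over classes assumed to contain such a pair, steps one and two already suffice for the stated claim, and step three only serves to show that the hypothesis is satisfiable.
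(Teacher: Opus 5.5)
Your proposal is correct and is essentially the paper's proof. The paper isolates your fooling-pair step as a general lemma for any monitor that factors through an observation map, instantiates it with $\operatorname{Top}(X)$ (the isomorphism class of $(V_G,E_G,s_G,t_G)$), and handles zero-error randomized monitors exactly as you do; your step three has no counterpart there (the paper only remarks informally that $J_D$ and the reassignment gain depend on data erased by $\operatorname{Top}$) and is unnecessary because the class is assumed to contain such a pair, but if you pursue it, feasibility condition 2 forces every non-initial assertion in a feasible state to carry active valid certificates and a repair-pending status is itself a certified record, so the work-soundness witness should be a well-typed non-feasible state, which the theorem's unrestricted class permits.
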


\subsection{Matched Intervention Quality}
\label{app:matched-intervention-quality}

Let $F$ be a diagnosis produced at state $X$. The matched audit set $\mathcal W_F$ is the finite weighted set containing every diagnosed component, every active component whose support or assignment is changed by the proposed recovery, and the active support ancestors required to evaluate those assertions. The set and weights are frozen before the agent receives feedback.

For a successor state $Y$, the declaration package transports each $q\in\mathcal W_F$ through recorded identity, refinement, replacement, supersession, or settlement links. Define $J_D^F(q;Y)=1$ when the transported component remains fully justified or has been validly discharged, rejected, replaced, or superseded with all required evidence; deletion without a valid transport or settlement record gives $J_D^F(q;Y)=0$. Define $A_{D,\epsilon}^F(q;Y)=1$ when the transported unresolved component is $\epsilon$-stable, or when a settled component was completed under an assignment that was $\epsilon$-stable according to capability evidence available no later than settlement. The matched scores are
$
\snd_D(Y;\mathcal W_F)
=
\frac{\sum_{q\in\mathcal W_F}w_qJ_D^F(q;Y)}
{\sum_{q\in\mathcal W_F}w_q},
\qquad
\astab_{D,\epsilon}(Y;\mathcal W_F)
=
\frac{\sum_{q\in\mathcal W_F}w_qA_{D,\epsilon}^F(q;Y)}
{\sum_{q\in\mathcal W_F}w_q},
$
with value one when $\mathcal W_F=\varnothing$, and
\[
\cq_{D,\epsilon}(Y;\mathcal W_F)
=
\frac{1}{2}
\left(
\snd_D(Y;\mathcal W_F)
+
\astab_{D,\epsilon}(Y;\mathcal W_F)
\right).
\]
Freezing the cohort prevents an intervention from increasing its score merely by deleting, renaming, or excluding a difficult obligation without valid evidence.

\subsection{iCORE-Audit Procedure}
\label{app:core-audit-algorithm}

Algorithm~\ref{alg:core-audit} gives the complete state-maintenance loop. The routine $\operatorname{AuditOrdinary}_D$ returns the set of legality or work-soundness violations in an ordinary proposal. If that set is nonempty, $\operatorname{RecordFailure}_D$ records the attempted proposal and its audit result without committing the unsupported effects on $Q$ or $\Pi$. Similarly, $\operatorname{AuditRecovery}_D$ returns the violations of legality, work soundness, diagnosed-flag resolution, or matched-set quality. All recording operations below are themselves legal iCORE updates.

Here $h$ counts consecutive iterations without a committed transition satisfying $\operatorname{Progress}_D$. The recovery audit accepts $X_f$ only when the candidate update is legal and work-sound, does not silently remove any obligation in $\mathcal W_F$, resolves each diagnosed hard flag required for continuation or acceptance, and satisfies
\[
\cq_{D,\epsilon}(X_f;\mathcal W_F)
\ge
\cq_{D,\epsilon}(X^{-};\mathcal W_F).
\]
A successful but nonproductive recovery therefore does not reset $h$.

\begin{algorithm*}[t]
\footnotesize
\caption{iCORE-Audit with State-Grounded Agent Feedback}
\label{alg:core-audit}
\begin{algorithmic}[1]
\REQUIRE Initial request $u_0$, declarations $D$, tolerance $\epsilon$, quality threshold $\tau$, patience $H$, budget $B$
\STATE $X\gets\operatorname{Init}_D(u_0)$; $h\gets0$
\FOR{$t=0,1,\ldots,B-1$}
    \IF{$\operatorname{Acc}(X)$}
        \STATE \textbf{return} \textsc{Accept}
    \ELSIF{$\operatorname{Rej}(X)$}
        \STATE \textbf{return} \textsc{Reject}
    \ENDIF
    \STATE $U\gets\operatorname{EnabledProd}_D(X)$
    \IF{$U\neq\varnothing$}
        \STATE $(m,\rho)\gets\operatorname{Schedule}_D(X,U)$
        \STATE $L\gets\operatorname{LocalView}_D(X,\rho)$
        \STATE $(o,C)\gets\operatorname{AgentStep}_D(m,L)$
        \STATE $\widetilde X\gets\operatorname{LabelUpdate}_D(X,\rho,m,o,C)$
        \STATE $E\gets\operatorname{AuditOrdinary}_D(X,\widetilde X)$
        \IF{$E=\varnothing$}
            \STATE $X^{+}\gets\operatorname{Commit}_D(\widetilde X)$
            \STATE $h\gets0$ if $\operatorname{Progress}_D(X,X^{+})$; otherwise $h\gets h+1$
            \STATE $X\gets X^{+}$
        \ELSE
            \STATE $X\gets\operatorname{RecordFailure}_D(X,\widetilde X,E)$; $h\gets h+1$
        \ENDIF
    \ELSE
        \STATE $h\gets h+1$
    \ENDIF
    \STATE $q\gets\cq_{D,\epsilon}(X)$
    \IF{$U=\varnothing \lor h\ge H \lor \operatorname{HardFlag}_D(X) \lor \operatorname{BeneficialReassign}_{D,\epsilon}(X) \lor q<\tau$}
        \STATE $X^{-}\gets X$; $F\gets\operatorname{Diagnose}_D(X^{-})$; $\mathcal W_F\gets\operatorname{MatchedSet}_D(X^{-},F)$
        \STATE $m_f\gets\operatorname{ResponsibleAgent}_D(X^{-},F)$
        \STATE $L_f\gets\operatorname{FeedbackView}_D(X^{-},F)$
        \STATE $(o_f,C_f)\gets\operatorname{AgentRevise}_D(m_f,L_f,F)$
        \STATE $X_f\gets\operatorname{LabelIntervention}_D(X^{-},m_f,o_f,C_f,F)$
        \STATE $E_f\gets\operatorname{AuditRecovery}_D(X^{-},X_f,F,\mathcal W_F)$
        \IF{$E_f=\varnothing$}
            \STATE $X^{+}\gets\operatorname{Commit}_D(X_f)$
            \STATE $h\gets0$ if $\operatorname{Progress}_D(X^{-},X^{+})$; otherwise $h\gets h+1$
            \STATE $X\gets X^{+}$
        \ELSE
            \STATE $X\gets\operatorname{RecordFeedback}_D(X^{-},X_f,F,E_f)$; $h\gets h+1$
            \IF{$\operatorname{CertifiedStuck}_D(X)$}
                \STATE \textbf{return} \textsc{Stuck}
            \ENDIF
        \ENDIF
    \ENDIF
\ENDFOR
\STATE \textbf{return} $\operatorname{CertifyBudget}_D(X)$ if valid; otherwise \textsc{Incomplete}
\end{algorithmic}
\end{algorithm*}
\FloatBarrier

\section{Supplementary Execution-Order Stability}
\label{app:execution-order-stability}

This section retains the earlier task-process notion of stability as a supplementary property. It is intentionally excluded from the main text, where stability refers only to heterogeneous-agent assignment. Here the question is whether different legal orders of productive updates reach equivalent semantic outcomes.

A legal update $X\rightarrow Y$ is productive when $X\not\equiv_X Y$ and stuttering otherwise. A reachable state is a productive normal form when no productive legal update remains enabled. Suppose two productive updates $\rho$ and $\eta$ are enabled at $X$. They are \emph{residual-sound independent} when both residual executions $\rho;\eta_\rho$ and $\eta;\rho_\eta$ are defined and reach equivalent states. Otherwise the one-step fork is a \emph{productive local critical pair}.

Two states $Y$ and $Z$ are productively joinable modulo $\equiv_X$, written $Y\downarrow_{p,\equiv_X}Z$, when they have productive descendants $Y'$ and $Z'$ with $Y'\equiv_X Z'$.

\begin{definition}[Execution-order stability]
\label{def:execution-order-stability}
A iCORE system is \emph{globally execution-order stable} when any two productive continuations from the same reachable state to productive normal forms terminate in equivalent iCORE states.
\end{definition}

\begin{theorem}[Local-to-global execution-order stability]
\label{thm:execution-order-local-to-global}
Assume that the reachable productive quotient terminates and every reachable one-step productive fork is either residual-sound independent or a productive local critical pair. Then the system is globally execution-order stable if and only if every reachable productive local critical pair is productively joinable modulo $\equiv_X$.
\end{theorem}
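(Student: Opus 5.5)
This is a Newman-lemma argument on the productive quotient modulo $\equiv_X$. Take the reachable productive quotient $[X]_{\equiv_X}\rightarrow_p[Y]_{\equiv_X}$ from Appendix~\ref{app:productivity} as an abstract rewriting system. Because $\equiv_X$ is a congruence for legal updates, $\rightarrow_p$ is well defined on classes: any legal productive update from a representative lifts to any equivalent representative with an equivalent result. Under this translation, three facts line up. Productive joinability modulo $\equiv_X$ is ordinary joinability in the quotient. Global execution-order stability is uniqueness of normal forms reachable from each reachable class. The assumed termination makes the quotient a terminating relation, so well-founded induction is available.

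\emph{Necessity.} Suppose the system is globally execution-order stable and let $Y\leftarrow X\rightarrow Z$ be a reachable productive local critical pair. By termination, $Y$ and $Z$ each have productive continuations to productive normal forms $Y'$ and $Z'$. Prepending the two fork steps gives two productive continuations from the reachable state $X$ to normal forms. Stability then gives $Y'\equiv_X Z'$, so $Y\downarrow_{p,\equiv_X}Z$. There is one degenerate case: if $Y$ or $Z$ is already a normal form, take the empty continuation, which is allowed because $\rightarrow_p^*$ is reflexive.

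\emph{Sufficiency.} First establish local confluence of the quotient on reachable classes. Every one-step productive fork is, by hypothesis, either residual-sound independent or a critical pair. In the first case the residual executions $\rho;\eta_\rho$ and $\eta;\rho_\eta$ reach equivalent states, which is a join. If a residual step is stuttering rather than productive, the join still holds via reflexivity of $\rightarrow_p^*$, since stuttering steps stay within the same $\equiv_X$-class. In the second case joinability is assumed. Then apply Newman's lemma by well-founded induction on the converse of $\rightarrow_p^+$ restricted to reachable classes. The induction claim is that from any reachable $X$, any two productive paths ending in normal forms end in equivalent states. Given paths $X\to Y_1\to^*N_1$ and $X\to Z_1\to^*N_2$, local confluence supplies a common descendant $W$ of $Y_1$ and $Z_1$, and termination extends $W$ to a normal form $N$. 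The induction hypothesis at $Y_1$ gives $N_1\equiv_X N$, and at $Z_1$ gives $N\equiv_X N_2$. Transitivity of $\equiv_X$ finishes the step. Zero-length paths are handled trivially, because a normal form has no productive successor.

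The main obstacle is making the quotient sound, namely that joins and normal forms are invariant under $\equiv_X$. Concretely, a productive normal form must stay a normal form under equivalence, and descendants of $Y'$ for $Y'\equiv_X Y$ must be equivalent to descendants of $Y$. Enablement depends only on fields preserved by $\equiv_X$, so I would prove both facts from the congruence requirement in Definition~\ref{def:legal-core-update}(iv) and the local-view dependence. I would also check that reachability is closed under descendants, so that the hypotheses apply at every intermediate state of the induction.
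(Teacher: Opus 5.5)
Your proposal is correct and takes the same route as the paper. Both directions are argued on the reachable productive quotient: necessity extends the two branches of the fork to productive normal forms and invokes stability, and sufficiency establishes joinability of one-step quotient forks (including the stuttering-residual case) and then applies a Newman-style well-founded induction. The transport, path-lifting and normal-form correspondence that you flag as the main obstacle are exactly what the paper proves from the congruence requirement. The only minor difference is in the induction step. The paper's Lemma~\ref{lem:terminating-local-joinability} first proves full confluence of the quotient, applying the induction hypothesis three times, and then deduces uniqueness of normal forms. You induct directly on uniqueness of normal forms, which needs only two applications of the hypothesis plus extending the join to a normal form.
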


Joinability requires agreement across the full state: the branches must reconcile their cooperation events in $G$, work and assignment states in $Q$, and event--work/certificate correspondence in $\Pi$. Therefore an unlabeled cooperation topology alone cannot characterize execution-order stability. This supplementary property may be useful when reproducible ordering is itself an application requirement, but it is not used to define agent-assignment stability or iCORE-state quality in the main text.

% ===== CTRL+F: REVISED iCORE EXPERIMENT APPENDIX START =====
\section{Additional Experimental Details}
\label{app:experiments-revised}

This appendix documents the exact experiment matrix materialized in the supplied artifact. All $882$ scheduled keys listed below produced episode JSON files; ``completed'' elsewhere in the tables is a task-success indicator rather than a file-completion indicator.

\subsection{Experiment Matrix}

\begin{table*}[t]
\centering
\scriptsize
\setlength{\tabcolsep}{3.5pt}
\begin{tabular}{@{}p{0.075\textwidth}p{0.065\textwidth}p{0.145\textwidth}p{0.155\textwidth}p{0.33\textwidth}r@{}}
\toprule
Mode & Suite & Environments & Systems/variants & Conditions or boundaries & Episodes \\
\midrule
Controlled & Main & BC, DS, CM, DL, RV, CA & six systems & clean, missing evidence, misassignment, verifier delay, stall & 540 \\
Real LLM & Main & BC, DS, CA & six systems & clean, mixed & 108 \\
Controlled & Ablation & BC, DS, CM, DL, RV, CA & Full, SO, AO, $-\Pi$, NC & mixed & 90 \\
Controlled & Boundary & BC, DS, CM, DL, RV, CA & iCORE-O, iCORE-A & standard, DV, SC, WP & 144 \\
\midrule
\multicolumn{5}{r}{Total materialized episode artifacts} & 882 \\
\bottomrule
\end{tabular}
\caption{Executed matrix. DV, SC, and WP denote degraded validator, shuffled declared capabilities, and weak agent pool.}
\label{tab:core-matrix-revised}
\end{table*}

The controlled main, real-LLM main, ablation, and boundary suites contain $540$, $108$, $90$, and $144$ materialized episode artifacts, respectively.

\subsection{Full Aggregate Results}
\label{app:aggregate-revised}

\input{results/tables_core_revised/main_results_revised}

\subsection{Structured Collaborative Task Environments}
\label{app:environments-revised}

All environments use agents \texttt{agent\_1}, \texttt{agent\_2}, and \texttt{agent\_3}, with $\epsilon=0.05$ and $p_{\min}=0.65$. Every action becomes an event in $G$; work status, assignee, prerequisites, values, and weights reside in $Q$; assignment, result, merge, verification, and repair certificates are appended to $\Pi$. The implementation is a self-contained graph runtime. It does not import an external DIG runtime package.

\begin{table*}[t]
\centering
\scriptsize
\setlength{\tabcolsep}{4.0pt}
\renewcommand{\arraystretch}{1.10}
\begin{tabular}{@{}llllll@{}}
\toprule
Env. & Obligation structure & Targets/validator & Required roots & Max rounds & Primary mechanism \\
\midrule
BC & six independent atomic items & each value $1$, exact & six, weight $1$ each & 7 & complete each item exactly once with active evidence \\
DS & aggregate root plus four latent children & children $1,2,3,4$; root sum $10$ & root, weight $2$ & 9 & decompose, solve, certify, and merge \\
CM & three source items supporting one aggregate & $0.2,0.3,0.5\rightarrow1.0$ & aggregate, weight $2$ & 8 & retain all compatible support sources \\
DL & chain $d_0\to d_1\to d_2\to d_3$ & values $1,2,3,4$, exact & $d_3$, weight $1.5$ & 8 & respect certified prerequisites \\
RV & draft, check, repair, final chain & $7,7,8,8$; versioned evidence & final, weight $2$ & 9 & invalidate stale version-1 evidence and certify repair \\
CA & four independent heterogeneous items & $11,13,17,19$, exact & four, weight $1$ each & 7 & assign math/search/code/review to capable agents \\
\bottomrule
\end{tabular}
\caption{Environment specifications}
\label{tab:core-environments-revised}
\end{table*}

\paragraph{Binary Coverage (BC).}
The six work points $p_0,\ldots,p_5$ are independent required roots. A valid terminal state requires every point to be closed with a valid result certificate. Duplicate completion is permitted by the runtime but consumes an agent action and can be detected from recent events in $G$.

\paragraph{Decompose and Solve (DS).}
The root begins open while children $s_1,\ldots,s_4$ are latent. A legal decomposition activates the children. Their exact values must sum to $10$, and the aggregate certificate must retain the child identifiers. Direct root completion before decomposition lacks the required support relation.

\paragraph{Compatible Merge (CM).}
Three independent components with values $0.2$, $0.3$, and $0.5$ support an aggregate root of $1.0$. The merge is justified only if all support components are current and certified.

\paragraph{Dependency Ladder (DL).}
Only $d_0$ is initially open. The remaining nodes are blocked until all predecessors close. Completing a blocked node still creates an observable event, but its work assertion is unsupported.

\paragraph{Repair Verification (RV).}
The graph requires a value-$7$ draft and check, followed by a value-$8$ repair and final result. Check and repair items use versioned validators. Evidence with version $1$ is stale after the required version becomes $2$.

\paragraph{Capability Assignment (CA).}
Four independent roots encode math, search, code, and review. Ground-truth capability tables are fixed before evaluation. For example, the best agents for math, search, and code have probabilities $0.94$, $0.95$, and $0.96$; review is best assigned to \texttt{agent\_2} with probability $0.88$. A work item is stable when no declared feasible agent improves expected contribution by more than $\epsilon=0.05$.

\subsection{Systems and Observable Projections}

\begin{table*}[t]
\centering
\small
\setlength{\tabcolsep}{5.2pt}
\begin{tabular}{@{}llll@{}}
\toprule
System & Projection & Intervention & Operational distinction \\
\midrule
MAS-Only & none & no & shared worker policy without monitoring \\
Interaction-only & $G$ & yes & detects duplicate actions and stalls from event history \\
Task-only & $Q$ & yes & detects dependencies, incomplete submission, and beneficial reassignment \\
LLM-Judge & trace & yes & free-form trace judge; no direct $Q$ or $\Pi$ access \\
iCORE-Observe & $(G,Q,\Pi)$ & no & exact passive measurement of both iCORE components \\
iCORE-Audit & $(G,Q,\Pi)$ & yes & structured diagnosis, recommended repair, and frozen-set preview guard \\
\bottomrule
\end{tabular}
\caption{Evaluated systems. iCORE-Observe and iCORE-Audit intentionally share the same measurement projection.}
\label{tab:core-systems-revised}
\end{table*}

For every proposed action, the evaluator first creates an unreviewed proposal state and records measurement predictions on that same state. Only then does the selected system review or modify the action. This proposal-probe protocol prevents iCORE-Audit from receiving artificially easy labels after it repairs a defect.

iCORE-Audit diagnoses duplicate work, assignments more than $\epsilon$ below the best declared agent, blocked dependencies, missing evidence, incomplete submission, and unsupported closed work. It evaluates the recommended successor on the same frozen work set. A candidate that reduces frozen-set soundness or iCORE quality is replaced with \texttt{wait}; the diagnosis remains visible to the next decision. This explains why nondecreasing matched intervention quality is an enforced guard property rather than an unconstrained empirical coincidence.

\subsection{Controlled Proposal Backend}
\label{app:controlled-backend-revised}

Every controlled system uses the same \texttt{SharedPolicyMLP}: a $12\rightarrow32\rightarrow32\rightarrow1$ network with ReLU activations. Its twelve candidate features are enabled status, assigned-to-self, unassigned, prerequisites ready, aggregate ready, evidence available, current-agent capability, best capability, already complete, duplicate risk, audit recommendation, and terminal-submit readiness. The checkpoint is trained on $8{,}000$ synthetic feature vectors for $20$ epochs with batch size $256$, AdamW learning rate $2\times10^{-3}$, weight decay $10^{-4}$, and binary cross entropy. A deterministic keyed Gaussian jitter with standard deviation $0.025$ breaks exact score ties. If the checkpoint is absent, the implementation falls back to the fixed linear score recorded in \texttt{models.py}; the supplied run used the trained checkpoint.

The controlled JUDGE condition cannot call a language model. Its review surrogate selects the first diagnosed alternative with probability $0.70$ and otherwise selects a keyed random alternative. The real-LLM JUDGE condition instead uses the separate judge prompt below.

\subsection{Real-LLM Backend and Prompts}
\label{app:prompts-revised}

Reporting prompt templates is standard and useful for reproducibility in LLM-agent experiments. We therefore include the exact system strings and the JSON payload schemas used by the artifact. The LLM does not generate arbitrary tools or free-form work products. It selects one candidate from the finite list constructed by the environment. Consequently, this suite tests orchestration and monitor feedback under an LLM decision policy; it is not an open-ended natural-language problem-solving benchmark.

\paragraph{Worker system prompt.}
\begin{lstlisting}[numbers=none]
You are one cooperative worker in a multi-agent graph task. Choose exactly one listed action. Return JSON only as {"candidate_index": <integer>, "reasoning": "brief"}.

<environment description from the YAML specification>
\end{lstlisting}

\paragraph{Worker user payload.}
\begin{lstlisting}[numbers=none]
{
  "observation": {
    "agent": <agent id>,
    "round": <integer>,
    "max_rounds": <integer>,
    "own_assignments": <list>,
    "messages": <last five messages>,
    "feedback": <monitor message or null>
  },
  "candidate_actions": [
    {"index": <integer>, "action": <structured action>,
     "description": <candidate description>}, ...
  ]
}
\end{lstlisting}

\paragraph{Trace-judge system prompt.}
\begin{lstlisting}[numbers=none]
You are a process judge. Select the safest action index using only the provided observable trace summary. Return JSON only as {"candidate_index": <integer>}.
\end{lstlisting}

\paragraph{Trace-judge user payload.}
\begin{lstlisting}[numbers=none]
{
  "trace_summary": {"round": <integer>, "events": <last 12 events>},
  "proposed_action": <structured action>,
  "alternatives": <up to eight structured actions>
}
\end{lstlisting}

The local model is \texttt{Qwen/Qwen2.5-0.5B-Instruct}, accessed through the OpenAI-compatible vLLM endpoint. Decoding uses temperature $0$, at most $160$ completion tokens, and JSON-object response format. The server log records vLLM $0.26.0$, bfloat16 weights, maximum sequence length $4096$, tensor parallel size $1$, asynchronous scheduling, prefix caching, chunked prefill, and GPU-memory utilization $0.85$. It records one CUDA rank with $47.37$ GiB visible memory, but not the GPU product name; we therefore do not infer a hardware SKU from the log. The checkpoint occupies approximately $0.93$ GiB. The default endpoint is \texttt{http://127.0.0.1:8000/v1}.

\subsection{Fault Injection and Theory-Boundary Conditions}
\label{app:faults-revised}

The controlled main suite uses the following conditions.
\begin{itemize}
    \item \textbf{Clean:} no explicit action fault.
    \item \textbf{Missing evidence:} completion, merge, repair, or verification loses its evidence with probability $0.65$.
    \item \textbf{Misassignment:} assignment or reassignment is redirected to the least capable declared agent with probability $0.75$.
    \item \textbf{Verifier delay:} relevant certificates become available two rounds later.
    \item \textbf{Stall:} a proposed action becomes \texttt{wait} with probability $0.30$.
\end{itemize}
The real-LLM mixed condition combines missing evidence, misassignment, and stall with a common probability scale of $0.55$, yielding probabilities $0.65\times0.55$, $0.75\times0.55$, and $0.30\times0.55$. Fault draws use stable keyed randomness, so system-specific calls do not shift the perturbation schedule.

The boundary suite changes assumptions rather than merely increasing random noise. \textbf{DV} accepts a wrong result with probability $0.65$, breaking validator adequacy. \textbf{SC} cyclically permutes declared capability values while retaining the original ground-truth success probabilities, breaking calibration. \textbf{WP} scales both true and declared capabilities by $0.55$, weakening the feasible-agent floor. These variants are used to delimit Proposition~\ref{prop:quality-performance}, not to claim robustness under arbitrary assumption violations.

\subsection{Metrics}
\label{app:metrics-revised}

For an active or frozen work set, the implementation computes weighted fractions
\begin{align*}
\snd_D(X)
&=\frac{\sum_w \omega_w\,\mathbf{1}[w\text{ is justified}]}{\sum_w\omega_w},\\
\astab_{D,\epsilon}(X)
&=\frac{\sum_w \omega_w\,\mathbf{1}[\operatorname{gain}(w)\leq\epsilon]}{\sum_w\omega_w}.
\end{align*}
and $\cq_{D,\epsilon}(X)=(\snd_D(X)+\astab_{D,\epsilon}(X))/2$. The paper macros define the same quantities; the displayed indicator notation here is only an implementation-level restatement.

\textbf{Trajectory metrics} average state scores over all recorded rounds. \textbf{Conditional performance} sums the ground-truth success probability of each currently assigned, justified component and normalizes by active work weight. It is the empirical conditional expectation corresponding to the proposition. \textbf{Terminal performance} is the realized weighted correctness of required roots. \textbf{Terminal violation} indicates a submitted but invalid terminal state. \textbf{Completion} indicates that all required roots were closed before the round limit. The certified bound is
\[
2\max(p_{\min}-\epsilon,0)\,[\cq_{D,\epsilon}(X)-1/2]_+.
\]
A single terminal realization below this expectation bound is not a theorem violation; the appropriate empirical comparison is between conditional-performance means and the bound within quality bins.

\subsection{Measurement Reconstruction}
\label{app:measurement-revised}

\input{results/tables_core_revised/measurement_results_revised}

Table~\ref{tab:core-measurement-revised} gives all proposal-probe results. FULL-state reconstruction is exact in both modes. TASK reconstructs assignment defects but cannot validate certificate-backed work assertions. INT detects many event-visible soundness failures but has no access to capability comparisons. JUDGE is noisy on both components. MAS predicts no defects and is included as a zero-information reference.

\subsection{Quality--Performance Relationship}
\label{app:quality-performance-revised}

\begin{figure}[t]
    \centering
    \includegraphics[width=\linewidth]{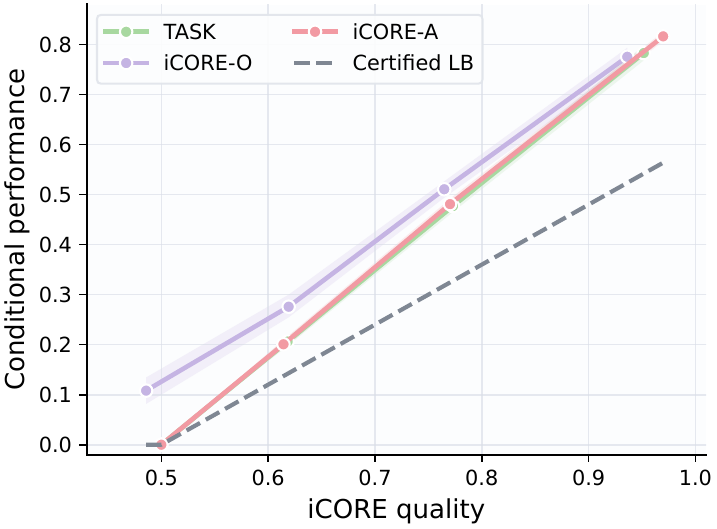}
    \caption{Controlled conditional performance by iCORE-quality bin. Shaded regions are bootstrap $95\%$ confidence intervals; the dashed curve is the mean certified lower bound.}
    \label{fig:core-quality-performance-revised}
\end{figure}

The controlled figure uses trajectory states from TASK, iCORE-O, and iCORE-A. At the zero-bound quality bin, both the performance estimate and bound are zero for iCORE-A. For every reported bin with positive certified bound, the lower endpoint of the performance interval remains above that bound. This is the assumption-aligned empirical check of Proposition~\ref{prop:quality-performance}; terminal performance is analyzed separately.

\subsection{Environment-Level Paired Gains}
\label{app:environment-gain-revised}

\begin{figure}[t]
    \centering
    \includegraphics[width=\linewidth]{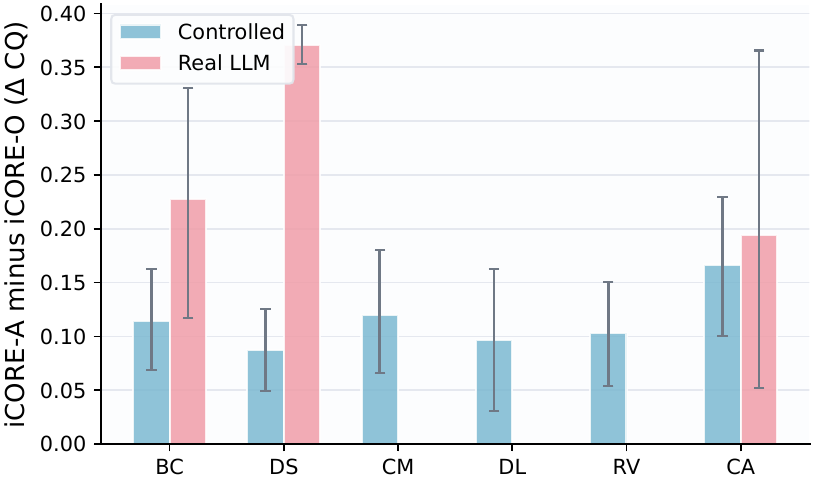}
    \caption{Paired trajectory-CQ gain of iCORE-Audit over iCORE-Observe by environment. Error bars are bootstrap $95\%$ confidence intervals. Real-LLM bars exist only for BC, DS, and CA.}
    \label{fig:core-environment-gain-revised}
\end{figure}

The mean controlled gain is positive in all six environments: $0.114$ (BC), $0.087$ (DS), $0.120$ (CM), $0.097$ (DL), $0.103$ (RV), and $0.167$ (CA). Real-LLM gains are $0.228$ (BC), $0.371$ (DS), and $0.194$ (CA). These are paired process-quality effects and do not imply that every environment reaches a valid terminal submission.

\subsection{Ablations}
\label{app:ablation-revised}

\input{results/tables_core_revised/ablation_results_revised}

\begin{figure}[t]
    \centering
    \includegraphics[width=\linewidth]{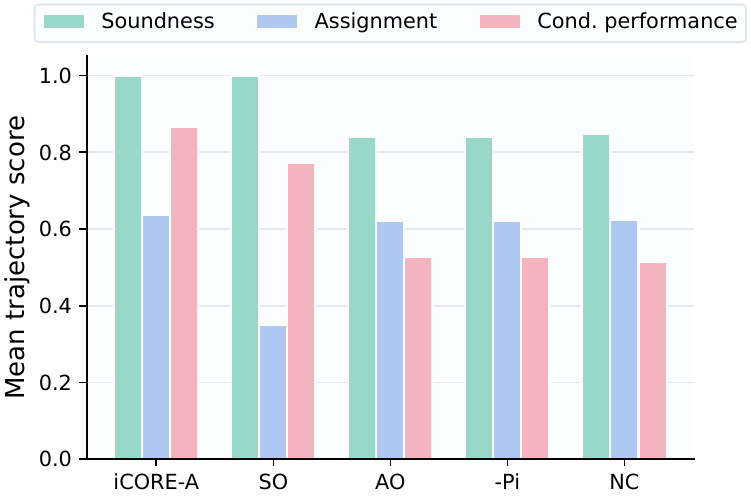}
    \caption{Controlled mixed-condition ablations. SO retains only soundness diagnoses, AO retains only beneficial-reassignment diagnoses, $-\Pi$ removes ledger visibility, and NC disables certificate validation in review.}
    \label{fig:core-ablation-components-revised}
\end{figure}

Full iCORE-A achieves soundness $1.000$, assignment stability $0.637$, CQ $0.818$, conditional performance $0.865$, and terminal performance $1.000$. SO preserves soundness but reduces assignment stability to $0.350$. AO preserves more assignment stability ($0.621$) but allows soundness to fall to $0.840$. The $-\Pi$ and NC variants reduce CQ to $0.730$ and $0.735$ and terminal performance to $0.694$ and $0.657$. AO and $-\Pi$ have identical aggregate scores in this matrix; this tie should not be presented as evidence that the two mechanisms are equivalent beyond the current generator.

\subsection{Boundary of the Performance Claim}
\label{app:boundary-revised}

\input{results/tables_core_revised/boundary_results_revised}

\begin{figure}[t]
    \centering
    \includegraphics[width=0.88\linewidth]{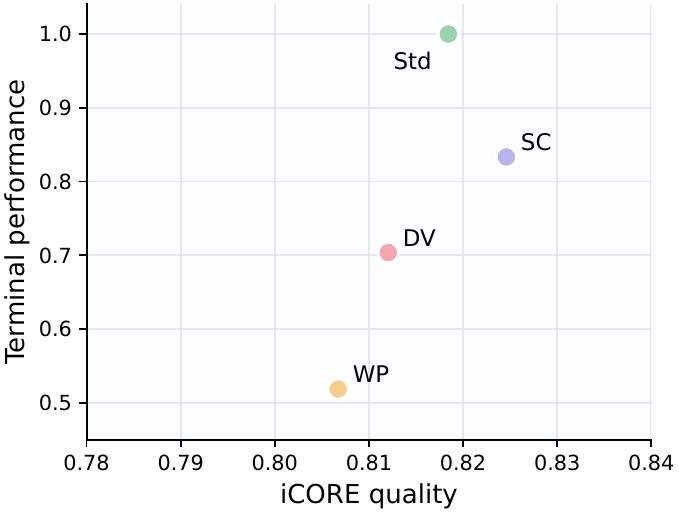}
    \caption{iCORE-Audit quality and realized terminal performance under standard and theory-boundary conditions. DV, SC, and WP denote degraded validator, shuffled declared capabilities, and weak pool.}
    \label{fig:core-boundary-stress-revised}
\end{figure}

The boundary suite illustrates why iCORE quality requires the proposition's external adequacy assumptions. DV leaves CQ near the standard value ($0.812$ versus $0.818$) while terminal performance falls from $1.000$ to $0.704$, because invalid evidence can be accepted. WP similarly retains CQ $0.807$ while terminal performance falls to $0.519$, because all available agents are weak. SC produces CQ $0.825$ and terminal performance $0.833$ by decoupling declared and true capabilities. These are expected boundary failures, not counterexamples to a claim made without those assumptions.

\subsection{Execution Cost}
\label{app:cost-revised}

\input{results/tables_core_revised/runtime_results_revised}

\begin{figure}[t]
    \centering
    \includegraphics[width=\linewidth]{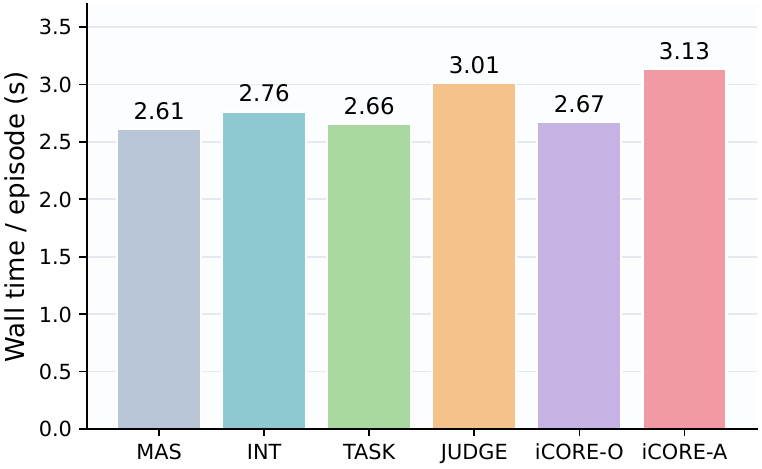}
    \caption{Mean real-LLM wall time per episode. The shared local server, model, and decoding configuration are fixed across systems.}
    \label{fig:core-real-runtime-revised}
\end{figure}

iCORE-A requires $0.026$ seconds per controlled episode versus $0.018$ for iCORE-O, an approximately $49\%$ relative increase but only $0.009$ seconds absolute. In the local real-LLM run, the corresponding values are $3.131$ and $2.672$ seconds, an approximately $17\%$ increase. iCORE-A makes more model decisions because recovery can extend a trace, but successful structured repair reduces mean verifier calls from $8.22$ to $4.33$ and certificates from $12.11$ to $9.67$ relative to iCORE-O. Wall time depends on the local server, compilation cache, and hardware and is therefore reported as an implementation cost rather than a general latency claim.

\subsection{Statistical Tests}
\label{app:statistics-revised}

\input{results/tables_core_revised/statistical_tests_revised}

Each target--baseline comparison uses the same materialized episode keys across systems and is paired on environment, condition, and boundary. Controlled tests have $90$ pairs; real-LLM tests have $18$. We use the one-sided alternative that iCORE-A is greater and Holm-adjust all $15$ comparisons within each mode. The compact table reports the smallest mean gain and largest adjusted $p$-value across the five baselines for each metric. The full baseline-specific CSV is included with the generated package.

\subsection{Reproducibility and Limitations}
\label{app:limitations-revised}

The artifact stores complete per-episode JSON objects containing $G$, $Q$, $\Pi$, trajectories, proposal probes, predictions, interventions, metrics, and costs. Figure source CSVs and table source CSVs are included. The plotting and table scripts supplied with the artifact write to new output directories and never overwrite the original experiment code or results.

Two limitations constrain interpretation. First, the real-LLM test covers only BC, DS, and CA with a $0.5$B model, so its scope across task families and model scales remains limited. Second, the LLM selects from environment-generated candidate actions whose structured fields include task values and evidence templates; it does not independently solve the underlying numerical or semantic task. These limitations are compatible with the paper's narrower claim: a coupled representation makes the target process properties measurable, and state-grounded auditing can use that measurement to guard and repair graph-structured cooperation.

\section{Proofs}
\label{app:proofs}

\subsection{Proof of Theorem~\ref{thm:projection-insufficiency}} 
\label{app:projection-insufficiency-proof} 

\begin{lemma}[Indistinguishability on a projection fiber]
\label{lem:projection-fiber-indistinguishability}
Let $\mathcal{C}$ be a class of states, let
\[
    \varrho:\mathcal{C}\rightarrow\mathcal{Y}
\]
be an observation map, and let
\[
    \mathsf{P}:\mathcal{C}\rightarrow\{0,1\}
\]
be a Boolean property. Suppose that there exist $X^{+},X^{-}\in\mathcal{C}$ such that
\[
    \varrho(X^{+})=\varrho(X^{-}),
    \qquad
    \mathsf{P}(X^{+})=1,
    \qquad
    \mathsf{P}(X^{-})=0.
\]
Then no monitor whose decision depends only on $\varrho(X)$ can be both sound and complete for $\mathsf{P}$ on $\mathcal{C}$.
\end{lemma}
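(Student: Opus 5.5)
The argument is a direct indistinguishability one. First I fix what ``monitor whose decision depends only on $\varrho(X)$'' means: a map $M:\mathcal{C}\to\{0,1\}$ that factors as $M=f\circ\varrho$ for some $f:\mathcal{Y}\to\{0,1\}$. I read \emph{sound} as ``$M(X)=1\Rightarrow\mathsf{P}(X)=1$'' (the monitor never certifies a state lacking the property). I read \emph{complete} as ``$\mathsf{P}(X)=1\Rightarrow M(X)=1$'' (every state having the property is certified). Together these say exactly that $M=\mathsf{P}$ pointwise on $\mathcal{C}$. If the paper intends the dual convention, where the monitor flags violations, the argument is symmetric.

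The proof then proceeds by contradiction. Suppose $M=f\circ\varrho$ is both sound and complete. Completeness applied to $X^{+}$, where $\mathsf{P}(X^{+})=1$, gives $M(X^{+})=1$, so $f(\varrho(X^{+}))=1$. Because $\varrho(X^{-})=\varrho(X^{+})$, we get $M(X^{-})=f(\varrho(X^{-}))=f(\varrho(X^{+}))=1$. Soundness applied to $X^{-}$ then forces $\mathsf{P}(X^{-})=1$, which contradicts $\mathsf{P}(X^{-})=0$. Hence no such $M$ exists. Equivalently, a sound and complete $M$ would have to equal $\mathsf{P}$, and $\mathsf{P}$ is not constant on the fiber $\varrho^{-1}(\varrho(X^{+}))$, so it cannot factor through $\varrho$.

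I would add one remark: the argument also covers randomized monitors, whose output distribution is a function of $\varrho(X)$. If soundness and completeness are required almost surely, the same two implications apply to the common output distribution on $X^{+}$ and $X^{-}$: completeness forces it to put mass one on $1$, and soundness forces it to put mass zero on $1$.

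There is no substantive obstacle; the only delicate step is fixing the soundness and completeness conventions precisely, so I would state them explicitly before the argument. For Theorem~\ref{thm:projection-insufficiency}, I would afterwards instantiate $\varrho$ as the map sending $X$ to the unlabeled topology $(V_G,E_G,s_G,t_G)$ up to isomorphism, and take $\mathsf{P}$ to be either work soundness ($\snd_D(X)=1$) or agent-assignment stability ($\astab_{D,\epsilon}(X)=1$). The two witness states supplied by the theorem's hypothesis then satisfy the lemma's assumptions directly.
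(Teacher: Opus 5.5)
Your proposal is correct and follows essentially the same argument as the paper. Both use the factorization $M=\widehat{M}\circ\varrho$, the same soundness and completeness conventions, and the indistinguishability contradiction; the paper case-splits on the common value $b=M(X^{+})=M(X^{-})$ where you chain completeness at $X^{+}$ into soundness at $X^{-}$, and your zero-error randomized remark matches the one the paper makes in its proof of Theorem~\ref{thm:projection-insufficiency}.
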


\begin{proof}
A monitor restricted to the observation $\varrho(X)$ is a map
\[
    M:\mathcal{C}\rightarrow\{0,1\}
\]
for which there exists a decision rule
\[
    \widehat{M}:\mathcal{Y}\rightarrow\{0,1\}
\]
satisfying
\[
    M(X)=\widehat{M}\bigl(\varrho(X)\bigr)
\]
for every $X\in\mathcal{C}$. We use the convention that $M(X)=1$ means that the monitor declares that $\mathsf{P}$ holds at $X$. Soundness therefore requires
\[
    M(X)=1
    \quad\Longrightarrow\quad
    \mathsf{P}(X)=1
\]
for every $X\in\mathcal{C}$, whereas completeness requires
\[
    \mathsf{P}(X)=1
    \quad\Longrightarrow\quad
    M(X)=1
\]
for every $X\in\mathcal{C}$.

Assume, for contradiction, that such a monitor is both sound and complete. Since
\[
    \varrho(X^{+})=\varrho(X^{-}),
\]
the restriction of $M$ to $\varrho$ gives
\[
\begin{aligned}
    M(X^{+})
    &=
    \widehat{M}\bigl(\varrho(X^{+})\bigr)\\
    &=
    \widehat{M}\bigl(\varrho(X^{-})\bigr)\\
    &=
    M(X^{-}).
\end{aligned}
\]
Thus the monitor must return the same Boolean value on $X^{+}$ and $X^{-}$. Denote this common value by $b\in\{0,1\}$.

If $b=1$, then
\[
    M(X^{-})=1.
\]
Soundness would imply
\[
    \mathsf{P}(X^{-})=1,
\]
contradicting the assumption
\[
    \mathsf{P}(X^{-})=0.
\]
If instead $b=0$, then
\[
    M(X^{+})=0.
\]
However, since
\[
    \mathsf{P}(X^{+})=1,
\]
completeness requires
\[
    M(X^{+})=1,
\]
which is again a contradiction. Since the common output must be either zero or one and both possibilities contradict one of the two required guarantees, no monitor depending only on $\varrho(X)$ can be both sound and complete for $\mathsf{P}$.
\end{proof}

\begin{proof}[Proof of Theorem~\ref{thm:projection-insufficiency}]
Let $\mathcal{C}$ be the class of iCORE states appearing in the theorem. For a iCORE state
\[
    X=(G,Q,\Pi),
\]
with
\[
    G=
    (V_G,E_G,s_G,t_G,\theta_G,\lambda_G,\tau_G,\mu_G),
\]
define its unlabeled cooperation-graph topology by
\[
    \operatorname{Top}(X)
    =
    \left[
    (V_G,E_G,s_G,t_G)
    \right]_{\cong},
\]
where $[\cdot]_{\cong}$ denotes the directed-graph isomorphism class obtained after discarding vertex and edge identifiers together with all types, observable labels, logical times, metadata, agent identities, capability evidence, verifier information, and monitor annotations. In particular, $\operatorname{Top}(X)$ retains only the incidence structure of the cooperation graph and contains no information from the obligation graph $Q$ or the audit map $\Pi$.

By assumption, $\mathcal{C}$ contains two states with identical unlabeled cooperation-graph topology for which one of the properties named in the theorem has different truth values. Choose such states and denote them by
\[
    X^{+}=(G^{+},Q^{+},\Pi^{+})
    \qquad\text{and}\qquad
    X^{-}=(G^{-},Q^{-},\Pi^{-}),
\]
where $X^{+}$ satisfies the differing property and $X^{-}$ does not. The assumed identity of their unlabeled topologies means
\[
    \operatorname{Top}(X^{+})
    =
    \operatorname{Top}(X^{-}).
\]

If the differing property is work soundness, define
\[
    \mathsf{P}_{\mathrm{ws}}(X)
    =
    \mathbf{1}
    \bigl[
        X\text{ is work-sound}
    \bigr].
\]
By Definition~\ref{def:work-soundness}, this is equivalently
\[
    \mathsf{P}_{\mathrm{ws}}(X)
    =
    \mathbf{1}
    \bigl[
        \snd_D(X)=1
    \bigr].
\]
The choice of $X^{+}$ and $X^{-}$ then gives
\[
    \mathsf{P}_{\mathrm{ws}}(X^{+})=1
    \qquad\text{and}\qquad
    \mathsf{P}_{\mathrm{ws}}(X^{-})=0.
\]
Consequently, the property $\mathsf{P}_{\mathrm{ws}}$ is not constant on the fiber
\[
    \operatorname{Top}^{-1}
    \bigl(
        \operatorname{Top}(X^{+})
    \bigr).
\]

This difference cannot be recovered from the unlabeled topology. Indeed, work soundness requires that, for every decision-relevant component
$q\in\WorkSet(X)$, every active decision-relevant assertion associated with $q$ be justified by observable interaction, a declared work operation, and evidence valid under $V$. In terms of the state representation, evaluating the indicator
\[
    J_D(q;X)
\]
requires the semantic status, obligation, coverage, and support information stored in $Q$, together with the event--work correspondence $\pi$, operation label $\omega$, and certificate information $\kappa$ stored in
\[
    \Pi=(\pi,\omega,\kappa).
\]
Accordingly,
\[
    \snd_D(X)
    =
    \frac{
        \sum_{q\in\WorkSet(X)}
        w_q J_D(q;X)
    }{
        \sum_{q\in\WorkSet(X)}w_q
    }
\]
when $\WorkSet(X)\neq\varnothing$ depends on information that is absent from $\operatorname{Top}(X)$. Two states can therefore have the same directed incidence pattern in $G$ while differing in whether a work assertion is linked to the correct event, whether the corresponding operation is declared, whether the required certificate exists, whether that certificate is valid and active, or whether the asserted work is an actual decision-relevant obligation. The hypothesis of the theorem selects precisely such a pair for which these erased distinctions change the truth value of work soundness.

If the differing property is instead agent-assignment stability, define
\[
    \mathsf{P}_{\mathrm{as}}(X)
    =
    \mathbf{1}
    \bigl[
        X\text{ is agent-assignment stable}
    \bigr].
\]
By Definition~\ref{def:assignment-stability}, this is equivalently
\[
    \mathsf{P}_{\mathrm{as}}(X)
    =
    \mathbf{1}
    \bigl[
        \astab_{D,\epsilon}(X)=1
    \bigr].
\]
The selected states satisfy
\[
    \mathsf{P}_{\mathrm{as}}(X^{+})=1
    \qquad\text{and}\qquad
    \mathsf{P}_{\mathrm{as}}(X^{-})=0.
\]
Hence $\mathsf{P}_{\mathrm{as}}$ is likewise not constant on a fiber of $\operatorname{Top}$.

More explicitly, for each $q\in\WorkSet(X)$, assignment stability requires evaluation of
\[
    \max_{m\in\mathcal{F}_D(q,X)}
    \Delta_D
    \bigl(
        m\leftarrow\alpha_Q(q);
        q,X
    \bigr)
    \leq\epsilon,
\]
where
\[
\begin{aligned}
    \Delta_D
    \bigl(
        m\leftarrow\alpha_Q(q);
        q,X
    \bigr)
    &=
    v_D(m,q\mid X)
    -
    v_D(\alpha_Q(q),q\mid X).
\end{aligned}
\]
Determining the left-hand side requires the work component $q$, its current or recorded assignee $\alpha_Q(q)$, the feasible-agent set $\mathcal{F}_D(q,X)$, the declared contribution values $v_D(m,q\mid X)$, and the observable capability evidence and assignment correspondence validating those quantities. The assignee is stored in $Q$, while the agreement among the responsible event in $G$, the affected work in $Q$, the assignment operation, and its supporting evidence is represented through $\Pi$. None of these quantities is determined by the unlabeled incidence structure of $G$. Thus identical unlabeled cooperation topology is compatible with different values of
\[
    \max_{m\in\mathcal{F}_D(q,X)}
    \left[
        v_D(m,q\mid X)
        -
        v_D(\alpha_Q(q),q\mid X)
    \right]
\]
and therefore with different truth values of agent-assignment stability, exactly as assumed by the theorem.

Now let $M$ be any monitor restricted to the unlabeled topology of $G$. By the meaning of this restriction, there must exist a function $\widehat{M}$ such that
\[
    M(X)
    =
    \widehat{M}
    \bigl(
        \operatorname{Top}(X)
    \bigr)
\]
for every $X\in\mathcal{C}$. Therefore,
\[
\begin{aligned}
    M(X^{+})
    &=
    \widehat{M}
    \bigl(
        \operatorname{Top}(X^{+})
    \bigr)\\
    &=
    \widehat{M}
    \bigl(
        \operatorname{Top}(X^{-})
    \bigr)\\
    &=
    M(X^{-}).
\end{aligned}
\]
For whichever property differs between $X^{+}$ and $X^{-}$, take
\[
    \mathsf{P}
    =
    \mathsf{P}_{\mathrm{ws}}
    \qquad\text{or}\qquad
    \mathsf{P}
    =
    \mathsf{P}_{\mathrm{as}},
\]
respectively. We then have
\[
    \operatorname{Top}(X^{+})
    =
    \operatorname{Top}(X^{-}),
    \qquad
    \mathsf{P}(X^{+})=1,
    \qquad
    \mathsf{P}(X^{-})=0.
\]
Applying Lemma~\ref{lem:projection-fiber-indistinguishability} with
\[
    \varrho=\operatorname{Top}
\]
shows that no such monitor can be both sound and complete for $\mathsf{P}$ on $\mathcal{C}$.

The same conclusion also holds for a randomized topology-restricted monitor under zero-error soundness and completeness. If
\[
    a(T)
    =
    \Pr\!\left[
        M\text{ declares that }\mathsf{P}\text{ holds}
        \,\middle|\,
        \operatorname{Top}(X)=T
    \right],
\]
then the common topology
\[
    T=
    \operatorname{Top}(X^{+})
    =
    \operatorname{Top}(X^{-})
\]
forces the same acceptance probability on both states. Completeness at $X^{+}$ requires $a(T)=1$, while soundness at $X^{-}$ requires $a(T)=0$, which is impossible. Hence neither deterministic nor zero-error randomized access to the unlabeled cooperation-graph topology suffices to obtain both guarantees. This proves the theorem.
\end{proof}

\subsection{Proof of Proposition~\ref{prop:local-work-soundness}} 
\label{app:work-soundness-proofs} 

For a iCORE state $X=(G,Q,\Pi)$, denote by $\mathsf{A}_D(X)$ the set of active decision-relevant work-level assertions in $Q$. For $a\in\mathsf{A}_D(X)$, write $\operatorname{Just}_D(a;X)=1$ when there exists a finite justification for $a$ consisting of observable interaction recorded in $G$, the corresponding operation labels from the declared vocabulary $\Omega$, the event--work correspondence recorded by $\Pi$, and certificates that are active and valid under $V$ in state $X$. With this notation, Definition~\ref{def:work-soundness} is equivalently
\[
\begin{aligned}
    X\text{ is work-sound}
    \quad\Longleftrightarrow\quad
    &\operatorname{Just}_D(a;X)=1\\
    &\text{for every }a\in\mathsf{A}_D(X).
\end{aligned}
\]

\begin{lemma}[One-step preservation of work soundness]
\label{lem:one-step-work-soundness}
If $X$ is work-sound and $X\rightarrow X'$ is a legal iCORE update, then $X'$ is work-sound.
\end{lemma}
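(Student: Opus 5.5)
The plan is to classify every assertion $a\in\mathsf{A}_D(X')$ by how the update $X\rightarrow X'$ treats it, and to exhibit a finite justification in $X'$ for each class. Write $\rho$ for the rewrite occurrence inducing the update, with write interfaces $\operatorname{Write}_G$, $\operatorname{Write}_Q$, $\operatorname{Write}_\Pi$. An active decision-relevant assertion of $X'$ falls into one of two cases.

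\textbf{Case (a): touched assertions.} The assertion is created, activated, made decision-relevant, or semantically modified by $\rho$, or it lies in the image of $\operatorname{Write}_Q$. Clause (iii) of Definition~\ref{def:legal-core-update} supplies active $V$-valid evidence for it. Clause (ii) gives a permitted operation, which is recorded by $\omega$ on the corresponding link in $\Pi'$. Clause (iv), through feasibility condition 2 of Appendix~\ref{app:feasibility}, supplies responsible nodes in $G'$. The remaining premises of that evidence are either new observable nodes in $G'$ or internal premises read from the local view $L(\rho;X)$. Every internal premise that is itself an active decision-relevant assertion was already justified in $X$ by work-soundness of $X$. Composing these justifications yields a finite derivation, and finiteness follows because the local view is finite and acyclicity of the active certificate-dependency graph prevents circularity.

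\textbf{Case (b): untouched assertions.} The assertion is already active in $X$ and is not semantically changed. If it was already decision-relevant in $X$, it had a finite justification there. If it only became decision-relevant now, that change is covered by case (a).

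The remaining work in case (b) is to show that the old justification survives into $X'$. Since $G$ is an append-only event record and $\Pi$ only adds or revokes links through declared writes, the observable nodes, operation labels and $\Pi$-links of the old derivation persist in $X'$. The one thing that could break the derivation is a certificate used in it becoming inactive or invalid in $X'$. The second half of clause (iii) forbids leaving any active assertion supported only by revoked or superseded evidence, and feasibility conditions 8--9 forbid active dependence on a revoked certificate. Hence either the assertion is re-certified, repaired, invalidated, or superseded (in which case it leaves $\mathsf{A}_D(X')$ or falls under case (a)), or its certificates remain active and valid.

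The main obstacle is the validity of \emph{untouched} certificates. Their internal premises must stay ``current'' even though $\rho$ may change a status or signature that such a certificate reads. I would argue that any such premise change is itself a semantic modification in $\operatorname{Write}_Q$, and that feasibility of $X'$ (condition 2) forces every dependent active assertion to carry active valid certificates in $X'$. This reduces those dependents to case (a) or to the revocation case.

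Combining the two cases gives $\operatorname{Just}_D(a;X')=1$ for all $a\in\mathsf{A}_D(X')$, i.e.\ $\snd_D(X')=1$. Respect for $\equiv_X$ (clause (iv)) ensures the conclusion is independent of representative. Proposition~\ref{prop:local-work-soundness} then follows by induction along any finite admissible prefix.
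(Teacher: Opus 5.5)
Your proposal is correct and follows essentially the paper's route. You split into assertions whose old witness persists and those that are created, modified, newly decision-relevant, or whose support was revoked, and you handle the latter with legality clause (iii) together with feasibility conditions 2 and 8--9 of $X'\in\mathcal X_D$. Two minor differences: the paper does not need your append-only reading of $G$, because it sends any assertion whose witness objects are written by the update into the complementary case; and it treats a justification as a local finite witness valid in $X'$, so your composition through internal premises is unnecessary (and would have to invoke justification in $X'$ rather than in $X$).
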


\begin{proof}
Fix an arbitrary assertion
\[
    a\in\mathsf{A}_D(X').
\]
We prove that
\[
    \operatorname{Just}_D(a;X')=1.
\]
First suppose that $a$ was already active and decision-relevant in $X$ and that the update changes neither the semantic content of $a$, its active status, its decision relevance, nor any semantic object used by one of its valid justification witnesses. Since $X$ is work-sound, Definition~\ref{def:work-soundness} gives
\[
    \operatorname{Just}_D(a;X)=1.
\]
Consequently, there exist a finite set of observable actions or artifacts
\[
    H_a\subseteq V_G,
\]
a finite set of declared operation labels
\[
    O_a\subseteq\Omega,
\]
the corresponding event--work links represented in $\Pi$, and a finite set of certificates
\[
    C_a\subseteq\kappa
\]
such that these objects jointly justify $a$ in $X$. More explicitly, the nodes in $H_a$ record the observable actions or artifacts responsible for the assertion, the labels in $O_a$ identify the declared work operations by which those observable events affect the relevant work components, the corresponding entries of $\Pi$ connect those events and operations to $a$, and every certificate $c\in C_a$ satisfies
\[
    \operatorname{ValidCert}_V(c;X)=1.
\]

A legal update reads and writes only through its declared interfaces. By the present case assumption, none of the semantic objects belonging to this justification is changed by the update. Therefore every node in $H_a$, every operation label in $O_a$, every relevant event--work correspondence in $\Pi$, and every certificate in $C_a$ remains present in $X'$ with the same semantic meaning. Moreover, every $c\in C_a$ remains active, its internal premises remain present and current, its external premises continue to satisfy $V$, and its claim continues to pass the declared validation procedure. If any of these activity, currentness, or validation conditions changed, then the support of $a$ would be affected and the assertion would belong to the complementary case considered below. Hence
\[
    \operatorname{ValidCert}_V(c;X')=1
    \qquad
    \text{for every }c\in C_a.
\]
The same finite collection
\[
    (H_a,O_a,C_a)
\]
together with the unchanged correspondence entries in $\Pi$ is therefore a valid justification witness for $a$ in $X'$. It follows that
\[
    \operatorname{Just}_D(a;X')=1.
\]

It remains to consider the complementary case, in which $a$ is newly created, newly activated, newly decision-relevant, semantically modified, or affected because some part of its former justification is written, revoked, superseded, invalidated, or replaced by the update. Suppose first that $a$ is newly created, newly activated, newly decision-relevant, or semantically modified. A work-level assertion includes an asserted node, relation, status, signature, coverage fact, compatibility fact, or terminal marker. Thus creating such an assertion, changing its semantic value, or making it newly active and decision-relevant produces a new decision-relevant assertion in the successor state. By Definition~\ref{def:legal-core-update}, legality requires the update to support every such assertion using an operation permitted by $D$ and evidence valid under $V$. Therefore the update must provide valid supporting evidence for $a$.

Legality additionally requires
\[
    X'\in\mathcal{X}_D.
\]
By the feasibility conditions defining $\mathcal{X}_D$, every non-initial work assertion must be backed by responsible observable nodes in $G$, corresponding operation labels, and active valid certificates through $\Pi$, while every required initial root must remain connected to observable initial input and supporting evidence. Consequently, the support required by legality cannot consist only of an unrecorded or external assertion. It must be represented in the successor state by observable interaction in $G$, an operation from $\Omega$, the corresponding event--work links in $\Pi$, and certificates valid under $V$. Hence
\[
    \operatorname{Just}_D(a;X')=1.
\]

Now suppose that $a$ existed before the update but a certificate, correspondence entry, operation label, premise, or other semantic object in its former justification is revoked, superseded, invalidated, replaced, or otherwise made noncurrent. The former witness can no longer be used automatically, because the definition of certificate validity requires all certificates and premises in the witness to remain active and current. Nevertheless, legality still requires
\[
    X'\in\mathcal{X}_D.
\]
The feasibility conditions defining $\mathcal{X}_D$ require that revoked or superseded evidence cannot continue to support an active assertion without re-certification, repair, invalidation, rejection, or supersession. They also prohibit an active assertion from depending simultaneously on a certificate and an active valid revocation of that certificate. Therefore, after the update, precisely one of the following must occur: the old assertion is made observably inactive by a valid invalidation, rejection, or supersession record, or the assertion remains active and is supplied with replacement, repaired, or re-certified support that is active and valid in $X'$.

The first alternative is impossible for the fixed assertion $a$, because
\[
    a\in\mathsf{A}_D(X')
\]
means that $a$ is active and decision-relevant in the successor state. Hence the second alternative must hold. There therefore exist responsible observable nodes in $G'$, corresponding declared operation labels, event--work links in $\Pi'$, and active certificates valid under $V$ that jointly support $a$ in $X'$. Thus
\[
    \operatorname{Just}_D(a;X')=1.
\]

The assertion $a\in\mathsf{A}_D(X')$ was arbitrary. We have proved
\[
    \forall a\in\mathsf{A}_D(X'),
    \qquad
    \operatorname{Just}_D(a;X')=1.
\]
Definition~\ref{def:work-soundness} therefore implies that $X'$ is work-sound.
\end{proof}

\begin{proof}[Proof of Proposition~\ref{prop:local-work-soundness}]
Let
\[
    X_0\rightarrow X_1\rightarrow X_2\rightarrow\cdots
\]
be an arbitrary admissible process. By the definition of admissibility, every transition
\[
    X_n\rightarrow X_{n+1}
\]
is a legal iCORE update. We prove by induction on $n\in\mathbb{N}$ that every state $X_n$ is work-sound.

The base case is immediate from the hypothesis:
\[
    X_0\text{ is work-sound}.
\]
Now let $n\geq 0$ and assume that
\[
    X_n\text{ is work-sound}.
\]
Since the process is admissible, the transition
\[
    X_n\rightarrow X_{n+1}
\]
is legal. Applying Lemma~\ref{lem:one-step-work-soundness} with
\[
    X=X_n
    \qquad\text{and}\qquad
    X'=X_{n+1}
\]
gives
\[
    X_{n+1}\text{ is work-sound}.
\]
The induction therefore yields
\[
    \forall n\in\mathbb{N},
    \qquad
    X_n\text{ is work-sound}.
\]

Equivalently, for every $n$ and every component
\[
    q\in\WorkSet(X_n),
\]
all active decision-relevant assertions associated with $q$ are justified. Hence
\[
    J_D(q;X_n)=1
    \qquad
    \text{for every }q\in\WorkSet(X_n).
\]
When
\[
    \WorkSet(X_n)=\varnothing,
\]
the definition of the work-soundness score directly gives
\[
    \snd_D(X_n)=1.
\]
When
\[
    \WorkSet(X_n)\neq\varnothing,
\]
we obtain
\[
\begin{aligned}
    \snd_D(X_n)
    &=
    \frac{
        \sum_{q\in\WorkSet(X_n)}
        w_qJ_D(q;X_n)
    }{
        \sum_{q\in\WorkSet(X_n)}
        w_q
    }\\
    &=
    \frac{
        \sum_{q\in\WorkSet(X_n)}
        w_q\cdot 1
    }{
        \sum_{q\in\WorkSet(X_n)}
        w_q
    }\\
    &=
    \frac{
        \sum_{q\in\WorkSet(X_n)}
        w_q
    }{
        \sum_{q\in\WorkSet(X_n)}
        w_q
    }\\
    &=1.
\end{aligned}
\]
The denominator is strictly positive because the index set is nonempty and every predeclared importance weight satisfies
\[
    w_q>0.
\]
Thus every state occurring in the admissible process is work-sound.

Finally, let $Y$ be any state reached from $X_0$ by an admissible process. By the definition of reachability, there exists a finite admissible prefix
\[
    X_0\rightarrow X_1\rightarrow\cdots\rightarrow X_n=Y
\]
for some $n\in\mathbb{N}$. The induction result gives that $X_n$ is work-sound, and therefore
\[
    Y\text{ is work-sound}.
\]
The same reasoning applies when the complete admissible process is infinite, because every state reached along an infinite process occurs at some finite index. Hence every state reached by an admissible process from $X_0$ is work-sound.
\end{proof}

\subsection{Proof of Proposition~\ref{prop:audit-reconstruction}} \label{app:audit-reconstruction-proof} 

For an admissible process
\[
    X_0\xrightarrow{\rho_0}X_1
    \xrightarrow{\rho_1}X_2
    \xrightarrow{\rho_2}\cdots,
    \qquad
    X_i=(G_i,Q_i,\Pi_i),
\]
write the ordered audit entry of the transition
$X_i\xrightarrow{\rho_i}X_{i+1}$ as
\[
    \mathfrak r_i
    =
    \bigl(
        \rho_i,
        \Delta_i^G,
        \Delta_i^Q,
        \Delta_i^{\Pi}
    \bigr),
\]
where $\rho_i$ identifies the legal update occurrence and
$\Delta_i^G$, $\Delta_i^Q$, and $\Delta_i^{\Pi}$ record all of its
semantic effects on the three components of the iCORE state. Thus these
effects include every created or removed node and edge, every changed
semantic label, status, signature, or assignment, every created,
revoked, or superseded certificate, and every changed event--work or
operation correspondence. Fields declared inessential by $D$, such as
a choice of fresh identifiers or an inessential timestamp, need not be
reproduced literally. The order of the entries is the commit order of
the process, equivalently the order in which the corresponding state
transitions occur.

\begin{lemma}[Replay uniqueness modulo declared equivalence]
\label{lem:audit-replay-uniqueness}
Let $\mathfrak R=(\mathfrak r_0,\mathfrak r_1,\ldots)$ be the ordered
audit record of an admissible process from the fixed initial state
$X_0$. Any replay of $\mathfrak R$ from a representative
$\widehat X_0\equiv_X X_0$ produces states $\widehat X_i$ satisfying
\[
    \widehat X_i\equiv_X X_i
\]
for every recorded index $i$. Consequently, any two state sequences
consistent with the same record are pointwise equivalent under
$\equiv_X$.
\end{lemma}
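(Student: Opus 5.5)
The plan is an induction on the recorded index $i$, using only two facts: each audit entry $\mathfrak r_i$ records all semantic effects of $\rho_i$, and $\equiv_X$ is a congruence for legal updates (Appendix~\ref{app:local-updates}). The aim is to read each entry as a partial transformer on states, $\mathrm{Apply}(\mathfrak r_i,\cdot)$. Given a representative $Y\equiv_X X_i$ with witnessing typed-graph isomorphisms on $G$, $Q$, and the link structure of $\Pi$, it performs the following steps:
\begin{enumerate}
\item Transport the identifiers named in $\Delta_i^G$, $\Delta_i^Q$, and $\Delta_i^\Pi$ along those isomorphisms.
\item Create the recorded nodes, edges, certificates, and links, using fresh identifiers where the record leaves them unspecified.
\item Remove the recorded nodes, edges, certificates, and links.
\item Overwrite the recorded semantic labels, statuses, signatures, assignments, and $\omega$-labels.
\end{enumerate}

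The base case is the hypothesis $\widehat X_0\equiv_X X_0$. For the inductive step, assume $\widehat X_i\equiv_X X_i$ and set $\widehat X_{i+1}=\mathrm{Apply}(\mathfrak r_i,\widehat X_i)$.

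First, $\rho_i$ is enabled at $\widehat X_i$. Enablement depends only on the local view $L(\rho_i;\cdot)$ and on $D$. By definition of $\equiv_X$, the local view agrees up to transported isomorphism on every field affecting enablement, evidence, and certificate currentness. So every premise present and current in $X_i$ is present and current in the transported view.

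Second, the congruence property gives $\widehat R_{\rho_i}(\widehat X_i)\equiv_X \widehat R_{\rho_i}(X_i)=X_{i+1}$.

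Third, $\mathrm{Apply}(\mathfrak r_i,\widehat X_i)$ agrees with $\widehat R_{\rho_i}(\widehat X_i)$ on every semantic field. The completeness of the record means that every semantic difference between $X_i$ and $X_{i+1}$ appears in some $\Delta_i$. Legality clause (i) means the update writes only through its declared interfaces, so no unrecorded semantic field changes. The two states can therefore differ only in fresh identifiers or inessential timestamps, which $\equiv_X$ ignores. Transitivity then yields $\widehat X_{i+1}\equiv_X X_{i+1}$, which completes the induction. For an infinite process, the claim for each finite index follows from the finite prefix.

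For the consequence, let two state sequences both be consistent with $\mathfrak R$ from $X_0$. Each is then a replay from $X_0$ (trivially $\equiv_X X_0$). Applying the first part to each and using symmetry and transitivity of $\equiv_X$ gives pointwise equivalence.

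The main obstacle is the third point: making precise that a transported record is well defined and that it determines the successor up to $\equiv_X$. This means checking that identifiers referenced by $\Delta_i$ are mapped coherently through the isomorphism accumulated over earlier steps. Fresh identifiers chosen in the replay become part of that accumulated isomorphism. I would handle this by carrying the isomorphism $\phi_i:X_i\to\widehat X_i$ explicitly in the induction hypothesis and extending it to $\phi_{i+1}$ by mapping each recorded fresh object to its replayed counterpart.
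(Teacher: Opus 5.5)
Your proposal is correct and follows essentially the same argument as the paper. Both proofs induct on the replayed index and transport the occurrence along the isomorphism witnessing $\widehat X_i\equiv_X X_i$. Both then use the congruence of $\equiv_X$ for legal updates and the completeness of the recorded effects $\Delta_i^G,\Delta_i^Q,\Delta_i^{\Pi}$ to rule out a semantically different successor, and finish with the finite-prefix remark and symmetry plus transitivity.

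Your explicit delta-application transformer and the accumulated isomorphism $\phi_i$, extended over freshly created objects, make precise bookkeeping that the paper leaves implicit. When the update acts on $\widehat X_i$, write it as the transported occurrence $\widehat\rho_i$ rather than $\rho_i$.
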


\begin{proof}
We argue by induction on the number of replayed entries. At index zero,
the replay is initialized with a representative satisfying
\[
    \widehat X_0\equiv_X X_0,
\]
so the claim holds before any update is replayed. Suppose that, for some
$i\geq 0$, the replay has already produced a state $\widehat X_i$ for
which
\[
    \widehat X_i\equiv_X X_i.
\]
By the definition of the declared state equivalence, there are
transported type-preserving graph isomorphisms between the $G$- and
$Q$-components of these states, together with the induced transport of
$\Pi$, and these transports preserve every field that may affect update
enablement, feasibility, obligations, assignments, capability evidence,
certificate validity, or terminal decisions. Transport the recorded
occurrence $\rho_i$ along these isomorphisms and denote the transported
occurrence by $\widehat\rho_i$. Because the semantic read objects,
statuses, signatures, operation labels, correspondence entries, and
certificate premises are preserved, the complete local views agree up
to the same transport:
\[
    L(\widehat\rho_i;\widehat X_i)
    \cong
    L(\rho_i;X_i).
\]
The occurrence $\rho_i$ is legal at $X_i$, because the original process
is admissible. Hence every object and certificate premise required by
$\rho_i$ is present and current in $L(\rho_i;X_i)$, the operation is
permitted by $D$, every new semantic assertion has active valid
support, and the recorded successor is
\[
    X_{i+1}
    =
    \widehat R_{\rho_i}(X_i)
    \in\mathcal X_D.
\]
The transported equality of local views implies that the same
enablement and validation tests hold for $\widehat\rho_i$ at
$\widehat X_i$. Moreover, declared state equivalence is required to be
a congruence for legal updates. Therefore the transported occurrence is
legal and its successor satisfies
\[
    \widehat X_{i+1}
    =
    \widehat R_{\widehat\rho_i}(\widehat X_i)
    \equiv_X
    \widehat R_{\rho_i}(X_i)
    =
    X_{i+1}.
\]
The recorded effects
$\Delta_i^G$, $\Delta_i^Q$, and $\Delta_i^{\Pi}$ determine all semantic
writes of this occurrence. Since an update occurrence induces a partial
function on $\mathcal X_D$, a replay consistent with the entry cannot
choose a different semantic successor; it may differ only in fields
that $D$ declares inessential. This proves the induction step, and thus
\[
    \widehat X_i\equiv_X X_i
\]
for every finite index $i$. The argument applies without change to an
infinite record, because every state at a finite index is determined by
a finite prefix. Finally, if $(\widehat X_i)_i$ and $(\widetilde X_i)_i$
are two replays of the same record, then
\[
    \widehat X_i\equiv_X X_i
    \qquad\text{and}\qquad
    \widetilde X_i\equiv_X X_i,
\]
and symmetry and transitivity of $\equiv_X$ give
\[
    \widehat X_i\equiv_X\widetilde X_i
\]
for every $i$.
\end{proof}

\begin{lemma}[Finite backward explanation of terminal-cover work]
\label{lem:finite-backward-explanation}
Let
\[
    X_0\xrightarrow{\rho_0}X_1
    \xrightarrow{\rho_1}\cdots
    \xrightarrow{\rho_{N-1}}X_N
\]
be a finite admissible process with $\operatorname{Acc}(X_N)=1$, and
let $A\subseteq\operatorname{Accepted}(Q_N)$ be the certified terminal
cover used by the acceptance predicate. For every component $a\in A$,
there exists a finite explanation substructure consisting of recorded
interaction objects from the cooperation graphs, recorded work objects
and transformations from the obligation graphs, and the corresponding
restriction of the audit maps, such that the explanation contains the
accepted status and terminal-cover role of $a$, connects every
non-initial work assertion in the explanation to responsible observable
interaction and active valid evidence, and reaches an observable initial
input through the required-root coverage and support structure.
\end{lemma}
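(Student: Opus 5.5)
The explanation for each $a\in A$ will be built by a backward closure in the final state $X_N$. Its finiteness will come from three facts: the state is finite, the support chains that feasibility provides are finite, and the certificate-dependency graph is acyclic. The first step is to place $a$ inside the scope of work soundness. Because $\operatorname{Acc}(X_N)=1$, the component $a$ carries an active decision-relevant assertion for the terminal predicate, namely its accepted status and its membership in the certified cover. By Definition~\ref{def:relative-obligation} this gives $a\in\WorkSet(X_N)$. Every admissible process is work-sound, since $X_0$ is work-sound and Proposition~\ref{prop:local-work-soundness} then applies to each reachable state. Hence $J_D(a;X_N)=1$, and every assertion in $\mathsf{A}_D(a;X_N)$ has a finite justification witness $(H,O,C)$ together with its links in $\Pi_N$.

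The explanation $\mathcal E(a)$ will be defined as the least substructure that contains these witnesses and is closed under three operations: (i) for every work node in $\mathcal E(a)$, add its support predecessors under $\prec^{\mathrm{sup}}_Q$ and the roots in $\operatorname{Cov}_Q(a)$ reached through $\prec^{\mathrm{cov}}_Q$; (ii) for every certificate in $\mathcal E(a)$, add its internal premises and the $\Pi$-links of those premises; (iii) for every action node, add its $\operatorname{Past}_{G}$ ancestors within the depth bound declared by $D$. Each added assertion is again decision-relevant, because it supports a terminal-cover fact. By Definition~\ref{def:work-soundness} it therefore has a justification witness of its own, and by feasibility condition~2 that witness consists of responsible nodes in $G_N$, labels in $\Omega$, and active valid certificates. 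This yields the requirement that every non-initial assertion in $\mathcal E(a)$ connects to observable interaction and active valid evidence. Feasibility condition~3 (finite support chains ending at a required root or a declared external input) together with condition~1 (each required root is anchored to an observable initial input and a supporting certificate) shows that the closure reaches an initial input. External premises enter only through declared environment-input nodes in $G$, so they also end at observable nodes.

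Termination will be argued on three counts. $G_N$ is a finite DAG, so $\operatorname{Past}$ closures are finite. The active certificate-dependency graph is acyclic and finite, so the premise closure terminates. The support relation yields finite chains by feasibility condition~3. The closure is therefore a finite union of finite sets. To present $\mathcal E(a)$ as a history of work transformations rather than as a snapshot, I will use Lemma~\ref{lem:audit-replay-uniqueness}. Each object in $\mathcal E(a)$ was written by some recorded entry $\mathfrak r_i$ with $i<N$, so tagging objects with their creating entries gives an ordered explanation from $X_0$ that is unique up to $\equiv_X$.

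The main obstacle is the step where I treat support ancestors as decision-relevant, so that work soundness in $X_N$ applies to them. Some ancestors may be settled, superseded, or revoked in the final state, and their original certificates may no longer be active. To handle this I will first appeal to condition~4 of $\operatorname{Acc}$: no accepted component depends on invalid, rejected, or unrepaired work. This means every ancestor that an active assertion still relies on is itself currently supported. Where an ancestor was superseded, feasibility condition~8 ensures the dependent assertion was re-certified, and the closure will follow the replacement support rather than the stale one.
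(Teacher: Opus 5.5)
The first step uses a hypothesis the lemma does not grant. The statement assumes only a finite admissible process with $\operatorname{Acc}(X_N)=1$, not that $X_0$ is work-sound. So Proposition~\ref{prop:local-work-soundness} is unavailable, and $J_D(a;X_N)=1$ does not follow from it; the paper's proof never uses work soundness.

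The repair is immediate, and it also removes what you call the main obstacle. $\operatorname{Acc}(X_N)$ includes feasibility of $X_N$. Feasibility conditions~2 and~8 then directly give every active non-initial work assertion its responsible nodes in $G_N$, operation labels and $\Pi$-links, and active valid certificates. You therefore never need to argue that support ancestors are decision-relevant or lie in $\WorkSet(X_N)$.

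You should also build the ``current support only'' restriction into closure rule~(i) itself. As written, (i) adds every $\prec^{\mathrm{sup}}_Q$ predecessor, including superseded or revoked ones whose evidence is inactive. That would violate the active-valid-evidence clause you are trying to establish. With that change, finiteness is trivial, since $\mathcal E(a)$ is a substructure of the finite state $X_N$.

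With that fix, your snapshot closure is a genuinely different route from the paper's. The paper slices backward through the history, for $i=N-1,\dots,0$. It splits the currently needed objects into those untouched by $\rho_i$, which are carried back, and those written by $\rho_i$. For the latter it uses legality of $\rho_i$ to attach the responsible event, operation label, correspondence entry, and certificates valid at $X_{i+1}$. It pulls the enabling premises from the finite local view $L(\rho_i;X_i)$ and ends at $X_0$, whose feasibility supplies the input anchors, with an explicit size bound.

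That yields a historical explanation. Each work transformation is justified by evidence valid when it was committed, and intermediate objects replaced before time $N$ still appear, which is what Proposition~\ref{prop:audit-reconstruction} needs.

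Your closure is simpler and certifies everything against evidence current at acceptance. Its temporal content, however, rests only on the tagging step, which is loose:
objects already in $X_0$ have no creating entry;
an object may be rewritten by several entries;
Lemma~\ref{lem:audit-replay-uniqueness} reconstructs states up to $\equiv_X$ rather than justifying individual steps.
Per-step justification would have to come from legality of each $\rho_i$, i.e.\ from the paper's argument.
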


\begin{proof}
Fix $a\in A$. Since $\operatorname{Acc}(X_N)=1$, the terminal-state
conditions in Appendix~\ref{app:terminal-predicates} imply that $X_N$ is
feasible, that $a$ has status $\texttt{accepted}$, that the accepted
components in $A$ jointly cover every effective required root with the
compatibility evidence required by $V$, and that no accepted component
depends on unresolved, invalid, rejected, contradictory, or unrepaired
work. Let $\mathsf S_N(a)$ be the finite set containing the assertion
\[
    \operatorname{stat}_{Q_N}(a)=\texttt{accepted},
\]
the active result and signature assertions of $a$ used by the terminal
decision, the coverage assertions by which $a$ participates in the
certified terminal cover, and the compatibility assertions involving
$a$ that are used by the joint terminal-cover certificate. This set is
finite because $Q_N$, $\Pi_N$, and every certificate record are finite.

We first expose the work-side connection to the input. Because $a$
belongs to the certified terminal cover, its terminal-cover evidence
identifies the effective required root or roots whose discharge uses
$a$. For each such root $r$, the coverage semantics provide a finite
coverage path in $Q_N$ from $r$ to $a$; schematically,
\[
    r=q_0
    \preceq_{Q_N}^{\mathrm{cov}}
    q_1
    \preceq_{Q_N}^{\mathrm{cov}}
    \cdots
    \preceq_{Q_N}^{\mathrm{cov}}
    q_{\ell}=a.
\]
If an intermediate work component occurring in the selected terminal
support is not itself a required root, feasibility supplies a finite
support chain ending at a required root or a declared input node. We add
the nodes, semantic relations, statuses, signatures, coverage facts,
and support facts on these finitely many selected chains to
$\mathsf S_N(a)$. Every required root reached in this way is connected,
by feasibility, to an observable initial-input node and a supporting
certificate. Any auxiliary external premise is not treated as hidden
support: by Appendix~\ref{app:certificates}, it enters only through a
declared environment-input node in $G$ and an active valid certificate.
Thus the work-side support of $a$ is anchored at the declared observable
input boundary and, through the required-root branch of the terminal
cover, at the initial input.

It remains to show that every assertion on these work chains can be
expanded into a finite recorded cross-layer explanation. We construct a
backward audit slice. At a generic transition
\[
    X_i\xrightarrow{\rho_i}X_{i+1},
\]
consider a finite set $\mathsf S_{i+1}$ of semantic objects and
assertions in $X_{i+1}$ that are currently required by the explanation.
Partition it as
\[
    \mathsf S_{i+1}
    =
    \mathsf U_i\cup\mathsf W_i,
\]
where $\mathsf U_i$ contains the objects that were already present in
$X_i$ with the same semantic value and remained unchanged by
$\rho_i$, while $\mathsf W_i$ contains the objects created, activated,
semantically modified, re-certified, repaired, or otherwise supplied
with their current support by the transition. Objects in
$\mathsf U_i$ can simply be carried to the preceding state. For an
object or assertion $z\in\mathsf W_i$, the update is legal, so the
current semantic value of $z$ cannot have been introduced without the
support required by $D$ and $V$. In particular, the relevant part of
the recorded update contains observable action or artifact nodes
$g\in V_{G_{i+1}}$, a declared operation label
$\omega_{i+1}(g)\in\Omega$, an event--work correspondence
\[
    q\in\pi_{i+1}(g),
\]
and certificates in $\kappa_{i+1}$ that are active and valid for the
claim being introduced or maintained. For every such certificate $c$,
\[
    \operatorname{ValidCert}_V(c;X_{i+1})=1.
\]
We add these recorded $G$-objects, $Q$-effects, operation labels,
correspondence entries, and certificates to the explanation. We then
place into $\mathsf S_i$ the objects and premises from the pre-update
local view $L(\rho_i;X_i)$ that are needed to enable and validate this
relevant part of the update. Formally, it is sufficient to set
\[
    \mathsf S_i
    =
    \mathsf U_i
    \cup
    \bigcup_{z\in\mathsf W_i}
    \operatorname{Prem}_i(z),
\]
where $\operatorname{Prem}_i(z)$ denotes the subset of
$L(\rho_i;X_i)$ used by the declared operation and validators to support
$z$. If the implementation does not separately mark the minimal subset,
we may take
\[
    \operatorname{Prem}_i(z)=L(\rho_i;X_i),
\]
which is still finite by the finite-interface and finite-ancestry
requirements.

This construction also covers changes of evidence. If support formerly
used by a required assertion is revoked, superseded, or made noncurrent,
feasibility prohibits that assertion from remaining active unless it is
re-certified, repaired, invalidated, rejected, or superseded. Since the
assertions in the selected terminal explanation remain active and
support an accepted component, the invalidation or rejection alternative
cannot be the terminal support used for $a$. The backward slice therefore
follows the recorded replacement, repair, re-certification, or valid
supersession chain. Internal certificate premises are added to the slice;
external premises terminate at their declared environment-input nodes.
The active certificate-dependency graph is acyclic, so expanding the
certificates occurring within one state or one update cannot create an
infinite regress. Likewise, the cooperation graph is a finite directed
acyclic graph, so adding the causal ancestors of a responsible event adds
only finitely many observable actions and artifacts.

Starting with the finite seed $\mathsf S_N(a)$ and applying the preceding
construction successively for
\[
    i=N-1,N-2,\ldots,0
\]
produces finite sets $\mathsf S_i$ at every stage. The update index
strictly decreases whenever the construction asks for pre-update
premises, so after exactly $N$ backward transitions no further process
history remains to be traversed. The remaining objects lie in the fixed
initial state $X_0$. Feasibility of $X_0$ connects every required root to
an observable initial input and gives every non-root initial work object
a finite support chain ending at such a root or at a declared input
node. Adding these initial anchors closes the explanation.

Let $I_a\subseteq\{0,1,\ldots,N-1\}$ be the set of update indices whose
entries were included by the backward slice, and let $\mathcal E_a$ be
the union of the selected initial objects, selected local-view premises,
selected update effects, selected causal ancestors in $G$, selected work
chains in $Q$, and selected restrictions of $\Pi$ and $\kappa$. Its size
is bounded by
\[
    |\mathcal E_a|
    \leq
    |X_0|
    +
    \sum_{i\in I_a}
    \bigl(
        |L(\rho_i;X_i)|
        +|\Delta_i^G|
        +|\Delta_i^Q|
        +|\Delta_i^{\Pi}|
    \bigr),
\]
and the right-hand side is finite because $I_a$ is finite and every
state, local interface, update effect, and certificate record is finite.
By construction, every selected work transformation is paired through
the selected audit-map entries with the observable event and declared
operation that justified it, every selected certificate premise is
represented and validated, the explanation begins at the observable
input anchors, and it ends at the accepted terminal assertions of $a$.
Thus $\mathcal E_a$ is the required finite explanation.
\end{proof}

\begin{proof}[Proof of Proposition~\ref{prop:audit-reconstruction}]
Let
\[
    X_0\xrightarrow{\rho_0}X_1
    \xrightarrow{\rho_1}X_2
    \xrightarrow{\rho_2}\cdots
\]
be the legal process represented by the record. The declaration package
fixes $X_0$, and the hypothesis supplies the ordered entry
$\mathfrak r_i$ for every committed transition. Lemma~\ref{lem:audit-replay-uniqueness}
therefore shows, by replaying the entries from $X_0$, that the record
determines every state $X_i$ up to $\equiv_X$. It also shows uniqueness:
any other state sequence compatible with the same initial state and the
same recorded updates and effects is pointwise equivalent to the
original sequence. Hence the record reconstructs the complete sequence
of iCORE states up to the declared observational equivalence. For an
infinite process, this statement is understood pointwise, since every
finite-index state is reconstructed from a finite record prefix.

Now suppose that the process is finite and accepted, so that for some
$N<\infty$ its final state satisfies
\[
    \operatorname{Acc}(X_N)=1.
\]
Let $A$ be the certified terminal cover used by the acceptance predicate
and fix an arbitrary $a\in A$. Lemma~\ref{lem:finite-backward-explanation}
constructs a finite explanation $\mathcal E_a$ whose $G$-part contains
the responsible observable interaction events and artifacts, whose
$Q$-part contains the relevant work components and transformations, and
whose $\Pi$-part records the operation, event--work, and certificate
correspondence connecting those two parts. The explanation is anchored
at the observable initial input and terminates at the accepted status,
coverage role, and compatibility support of $a$. Since $a$ was arbitrary,
every accepted component in the terminal cover has such a finite
explanation. This proves both claims.
\end{proof}

\subsection{Proof of Corollary~\ref{cor:terminal-report-soundness}} 
\label{app:terminal-report-soundness-proof} 

\begin{proof}
Consider the sequence of states produced by the committed updates of the execution. This sequence has the form
\[
    X_0
    \xrightarrow{\rho_0}
    X_1
    \xrightarrow{\rho_1}
    X_2
    \xrightarrow{\rho_2}
    \cdots,
\]
where the index set of states is either
\[
    \mathcal I=\{0,1,\ldots,N\}
\]
for some finite execution, or
\[
    \mathcal I=\mathbb N
\]
for an infinite execution. For every index \(n\in\mathcal I\) such that
\(n+1\in\mathcal I\), the transition
\[
    X_n\xrightarrow{\rho_n}X_{n+1}
\]
is a committed update. By hypothesis, every committed update is legal. Hence
\[
    X_n\rightarrow X_{n+1}
    \quad\text{is a legal iCORE update}
\]
for every such \(n\). It follows directly from the definition of an admissible
process that the committed-state sequence is an admissible process beginning
at \(X_0\). Candidate updates that are rejected, blocked, aborted, or otherwise
not committed do not occur as transitions in this sequence and therefore do
not change any state reached by the process.

The initial state is work-sound by hypothesis. Proposition~\ref{prop:local-work-soundness}
states that every state reached from a work-sound initial state by an admissible
process is work-sound. Applying that proposition to the committed-state
sequence gives
\[
    X_n\text{ is work-sound}
    \qquad
    \text{for every }n\in\mathcal I.
\]
Equivalently, for every \(n\in\mathcal I\) and every active
decision-relevant work-level assertion
\[
    a\in\mathsf{A}_D(X_n),
\]
there exists a justification consisting of observable interaction recorded
in \(G_n\), declared operation labels from \(\Omega\), the corresponding
event--work relations recorded by \(\Pi_n\), and certificates that are active
and valid under \(V\) in \(X_n\). In the notation used in
Appendix~\ref{app:work-soundness-proofs},
\[
    \operatorname{Just}_D(a;X_n)=1
    \qquad
    \text{for every }
    a\in\mathsf{A}_D(X_n).
\]
Thus every state reached by the execution is work-sound, and Definition~\ref{def:work-soundness}
therefore implies that the entire committed process is work-sound.

It remains to verify the conditions governing the emitted terminal reports.
Let \(e\) be an arbitrary terminal-report occurrence in the execution, and
let \(X_{n(e)}\) denote the current committed state when \(e\) is emitted.
Such a state exists even when no update has yet been committed, in which case
\(n(e)=0\) and the current state is \(X_0\). Because \(X_{n(e)}\) belongs to
the committed-state sequence, the preceding argument gives
\[
    X_{n(e)}\text{ is work-sound}.
\]

Suppose first that the report occurrence \(e\) emits
\(\textsc{Accept}\). The reporting hypothesis states that the execution emits
\(\textsc{Accept}\) only when the acceptance predicate holds at the current
state. Therefore,
\[
    e=\textsc{Accept}
    \quad\Longrightarrow\quad
    \operatorname{Acc}\bigl(X_{n(e)}\bigr).
\]
In particular, the execution cannot emit \(\textsc{Accept}\) at a state for
which
\[
    \operatorname{Acc}\bigl(X_{n(e)}\bigr)
\]
is false. Notice that this conclusion is not inferred merely from work
soundness: work soundness justifies the active decision-relevant assertions,
whereas \(\operatorname{Acc}\) additionally requires complete settlement of
the effective required roots, a certified compatible terminal cover, the
absence of unresolved relevant work and supported failures, and the absence
of a live productive artifact that can change the outcome. These additional
terminal conditions are supplied precisely by the assumed acceptance guard.

Suppose instead that \(e\) emits \(\textsc{Reject}\). The corresponding
reporting hypothesis states that the execution emits \(\textsc{Reject}\) only
when the rejection predicate holds at the current state. Hence,
\[
    e=\textsc{Reject}
    \quad\Longrightarrow\quad
    \operatorname{Rej}\bigl(X_{n(e)}\bigr).
\]
Consequently, the execution cannot emit \(\textsc{Reject}\) at a state for
which
\[
    \operatorname{Rej}\bigl(X_{n(e)}\bigr)
\]
is false. As in the acceptance case, this implication is imposed by the
terminal-report guard rather than by work soundness alone.

The report occurrence \(e\) was arbitrary. We have therefore established
that every terminal-report occurrence satisfies
\[
    e=\textsc{Accept}
    \quad\Longrightarrow\quad
    \operatorname{Acc}\bigl(X_{n(e)}\bigr)
\]
and
\[
    e=\textsc{Reject}
    \quad\Longrightarrow\quad
    \operatorname{Rej}\bigl(X_{n(e)}\bigr).
\]
Together with the fact that every state in the committed process is
work-sound, these are exactly the conditions in
Definition~\ref{def:terminal-report-soundness}. Hence the process is
terminal-report sound. The argument applies equally to finite and infinite
executions, because every report in an infinite execution occurs after a
finite committed prefix and is therefore associated with some state
\(X_n\) covered by the preceding reasoning.
\end{proof}

\subsection{Proof of Proposition~\ref{prop:task-level-implication}} \label{app:task-implications} 

\begin{proof}
Let
\[
    \mathfrak p:
    X_0\rightarrow X_1\rightarrow\cdots\rightarrow X_N
\]
be the process under consideration. The process is finite because terminal completeness is defined for a finite process. Write
\[
    X_i=(G_i,Q_i,\Pi_i)
    \qquad (0\leq i\leq N).
\]
Since the process is work-sound, Definition~\ref{def:work-soundness} gives
\[
    X_i\text{ is work-sound}
    \qquad
    \text{for every }i\in\{0,1,\ldots,N\}.
\]
Equivalently, using the notation introduced in Appendix~\ref{app:work-soundness-proofs}, for every active decision-relevant assertion
\[
    b\in\mathsf{A}_D(X_i)
\]
we have
\[
    \operatorname{Just}_D(b;X_i)=1.
\]
Thus every accepted or rejected status, every coverage or compatibility claim, every failure claim, every exclusion, duplication, irrelevance, or supersession claim used to determine the effective required roots, and every terminal-support assertion that is active and decision-relevant at any reported state is connected through \(\Pi_i\) to observable interaction in \(G_i\), a declared operation, and evidence active and valid under \(V\). In particular, the terminal predicates cannot be supported by an unrecorded work assertion or by evidence that is inactive, revoked, superseded, noncurrent, or invalid under \(V\).

Let \(e\) be an arbitrary emitted \textsc{Accept} or \textsc{Reject} report, and let \(X_{n(e)}\) be the current committed state when \(e\) is emitted. Because \(e\) occurs after a finite prefix of \(\mathfrak p\), its index satisfies
\[
    n(e)\in\{0,1,\ldots,N\}.
\]
For readability, set
\[
    X^e=X_{n(e)},
    \qquad
    Q^e=Q_{n(e)}.
\]
The preceding work-soundness conclusion applies to \(X^e\).

Suppose first that
\[
    e=\textsc{Accept}.
\]
Terminal-report soundness implies that an \textsc{Accept} report may be emitted only at a state satisfying the declared acceptance predicate. Hence
\[
    e=\textsc{Accept}
    \quad\Longrightarrow\quad
    \operatorname{Acc}(X^e).
\]
Unfolding \(\operatorname{Acc}(X^e)\) from Appendix~\ref{app:terminal-predicates}, there exists a set
\[
    A_e\subseteq\operatorname{Accepted}(Q^e)
\]
that forms a certified terminal cover of \(\operatorname{EffRoot}(X^e)\). Therefore
\[
    \operatorname{EffRoot}(X^e)
    \subseteq
    \bigcup_{a\in A_e}\operatorname{Cov}_{Q^e}(a),
\]
and the joint compatibility evidence required by \(V\) exists for the components in \(A_e\). The other clauses of the acceptance predicate give
\[
    \operatorname{EffUnres}(X^e)=\varnothing
\]
and
\[
    \operatorname{EffRoot}(X^e)
    \cap
    \operatorname{FailRoot}(X^e)
    =\varnothing.
\]
They also ensure that no accepted component depends on unresolved, invalid, rejected, contradictory, or unrepaired work, and ensure that no live productive artifact can change the discharge, failure, exclusion, or compatibility of an effective root. Consequently, the cover is terminal rather than merely provisional: every effective required root is discharged by accepted work, no effective required root simultaneously carries an unneutralized supported failure, no unresolved relevant work remains, and no still-live productive artifact can alter the reported outcome.

It remains to verify that the terminal cover just described is genuinely certified under \(V\), rather than merely having the correct set-theoretic shape. Every assertion that a component belongs to \(\operatorname{Accepted}(Q^e)\), every active coverage assertion used in \(\operatorname{Cov}_{Q^e}(a)\), every compatibility assertion used to combine the members of \(A_e\), and every active assertion used to remove a root from
\[
    \operatorname{EffRoot}(X^e)
    =
    \operatorname{ReqRoot}(Q^e)
    \setminus
    \operatorname{Off}_{X^e}(Q^e)
\]
is decision-relevant to the \textsc{Accept} report. Since \(X^e\) is work-sound, each such assertion has a justification through observable interaction, a declared operation, the correspondence recorded by \(\Pi_{n(e)}\), and active evidence valid under \(V\). Hence no required root is silently omitted: a root is absent from \(\operatorname{EffRoot}(X^e)\) only when its duplicated, excluded, irrelevant, or superseded status is supported by active valid evidence. It follows that \(A_e\) is a \(V\)-certified accepted terminal cover in precisely the sense used by task adequacy.

By the first clause of task adequacy, every \(V\)-certified accepted terminal cover yields a semantically correct accepted output for the declared task class. Applying that clause to \(A_e\) shows that the output certified by \(A_e\) is semantically correct. Therefore the emitted decision
\[
    e=\textsc{Accept}
\]
is semantically correct for the declared task class. Notice that validator adequacy is essential here: work soundness proves that the acceptance evidence is valid according to \(V\), while task adequacy is what permits the further inference from validity under \(V\) to correctness under the task semantics.

Suppose instead that
\[
    e=\textsc{Reject}.
\]
Terminal-report soundness now gives
\[
    e=\textsc{Reject}
    \quad\Longrightarrow\quad
    \operatorname{Rej}(X^e).
\]
Define
\[
    F_e=\operatorname{EffRoot}(X^e).
\]
Unfolding the rejection predicate yields
\[
    F_e\neq\varnothing,
    \qquad
    \operatorname{EffUnres}(X^e)=\varnothing,
\]
and
\[
    F_e
    \subseteq
    \operatorname{FailRoot}(X^e).
\]
Because \(F_e=\operatorname{EffRoot}(X^e)\), the last inclusion states that every effective required root has active valid rejection, invalidity, impossibility, or unreachability evidence and has no valid repair neutralizing that failure. The rejection predicate also gives
\[
    F_e\cap\operatorname{Dis}(Q^e)=\varnothing
\]
and states that no live productive artifact can repair, accept, discharge, or reopen any root in \(F_e\). Thus the reported failure is complete over the effective requirement set, is not contradicted by a valid discharge of any of those roots, and cannot still be changed by remaining productive work.

Each active failure assertion used to place a root in \(\operatorname{FailRoot}(X^e)\), together with each exclusion or supersession assertion used to determine \(F_e\), is decision-relevant to the \textsc{Reject} report. Work soundness of \(X^e\) therefore supplies, for every \(r\in F_e\), observable interaction in \(G_{n(e)}\), a declared work operation, the corresponding entries in \(\Pi_{n(e)}\), and active evidence valid under \(V\) supporting the asserted failure of \(r\). Moreover, the definition of \(\operatorname{FailRoot}(X^e)\) excludes any root whose failure has been neutralized by a valid repair. Hence \(F_e\) is a \(V\)-certified failed-root set, not merely a set of roots carrying unsupported failure labels.

By the second clause of task adequacy, every \(V\)-certified failed-root set of this kind justifies rejection for the declared task class. Applying that clause to \(F_e\) proves that the decision
\[
    e=\textsc{Reject}
\]
is semantically correct. The detectability clause in task adequacy is consistent with both cases above: any process-relevant defect in required-root coverage, compatibility, evidence, or terminal settlement must be detectable under \(V\), and therefore cannot be hidden behind a purportedly certified terminal cover or failed-root set. Without this clause, a process could be internally sound relative to an incomplete validator while still missing a task-relevant defect.

The report occurrence \(e\) was arbitrary. Let \(\operatorname{SemCorrect}_D(e)\) denote that the report \(e\) is semantically correct for the task class declared by \(D\). We have proved
\[
    e=\textsc{Accept}
    \quad\Longrightarrow\quad
    \operatorname{SemCorrect}_D(e)
\]
and
\[
    e=\textsc{Reject}
    \quad\Longrightarrow\quad
    \operatorname{SemCorrect}_D(e)
\]
for every emitted \textsc{Accept} or \textsc{Reject} report.

It remains to identify what terminal completeness permits when neither semantic decision is reported. Since \(\mathfrak p\) is terminally complete, its final state \(X_N\) satisfies
\[
    \operatorname{Acc}(X_N)
    \;\lor\;
    \operatorname{Rej}(X_N)
    \;\lor\;
    \operatorname{CertStuck}_V(\mathfrak p,X_N),
\]
where \(\operatorname{CertStuck}_V(\mathfrak p,X_N)\) abbreviates the existence of valid evidence certifying that the process is externally stuck at \(X_N\). The acceptance and rejection predicates cannot both hold. Indeed, \(\operatorname{Rej}(X_N)\) would imply
\[
    \operatorname{EffRoot}(X_N)\neq\varnothing
\]
and
\[
    \operatorname{EffRoot}(X_N)
    \subseteq
    \operatorname{FailRoot}(X_N),
\]
whereas \(\operatorname{Acc}(X_N)\) would imply
\[
    \operatorname{EffRoot}(X_N)
    \cap
    \operatorname{FailRoot}(X_N)
    =\varnothing.
\]
These relations would force
\[
    \operatorname{EffRoot}(X_N)=\varnothing,
\]
contradicting the nonemptiness required by \(\operatorname{Rej}(X_N)\). Thus an accepted terminal decision and a rejected terminal decision are mutually exclusive.

If terminal completeness is witnessed instead by external stuckness, then by definition there exists valid evidence certifying that the cooperation process cannot presently continue because of an external process condition. Denote one such certificate by \(c_{\mathrm{stuck}}\). Its validity means
\[
    \operatorname{ValidCert}_V
    (c_{\mathrm{stuck}};X_N)=1,
\]
and its claim is that the process is externally stuck at \(X_N\). Because the process is work-sound, any active decision-relevant terminal-status assertion supported by this certificate is also connected through \(\Pi_N\) to the relevant observable events and declared operation. This validates the reported process status. It does not, however, provide either a \(V\)-certified accepted terminal cover or a \(V\)-certified failed-root set. The definition of task adequacy licenses semantic acceptance only from the former and semantic rejection only from the latter. External stuckness may arise from unavailable tools, permissions, resources, communication, or other process conditions without establishing that the task output is correct or that the task itself is semantically impossible. Therefore no \textsc{Accept} or \textsc{Reject} conclusion follows from external stuckness alone, and an externally stuck report is valid only as the certified status of the cooperation process. This proves the proposition.
\end{proof}

\subsection{Proof of Theorem~\ref{thm:assignment-regret}} \label{app:assignment-stability-proofs} 

\begin{proof}
Let
\[
    \mathcal{W}
    :=
    \WorkSet(X).
\]
By assumption, $\mathcal{W}\neq\varnothing$, and the component weights are normalized so that
\[
    \sum_{q\in\mathcal{W}} w_q = 1.
\]
Define the set of $\epsilon$-stable components by
\[
    \mathcal{S}_{\epsilon}(X)
    :=
    \left\{
        q\in\mathcal{W}
        :
        \max_{m\in\mathcal{F}_D(q,X)}
        \Delta_D
        \bigl(
            m\leftarrow\alpha_Q(q);
            q,X
        \bigr)
        \leq \epsilon
    \right\},
\]
and define its complement in $\mathcal{W}$ by
\[
    \mathcal{R}_{\epsilon}(X)
    :=
    \mathcal{W}\setminus\mathcal{S}_{\epsilon}(X).
\]
Hence,
\[
    \mathcal{W}
    =
    \mathcal{S}_{\epsilon}(X)
    \,\dot{\cup}\,
    \mathcal{R}_{\epsilon}(X).
\]
By the definition of the assignment-stability score and the normalization
$\sum_{q\in\mathcal{W}}w_q=1$, we have
\[
\begin{aligned}
    \astab_{D,\epsilon}(X)
    &=
    \frac{
        \sum_{q\in\mathcal{W}}
        w_q
        \mathbf{1}
        \left[
            q\in\mathcal{S}_{\epsilon}(X)
        \right]
    }{
        \sum_{q\in\mathcal{W}}w_q
    }\\
    &=
    \sum_{q\in\mathcal{W}}
    w_q
    \mathbf{1}
    \left[
        q\in\mathcal{S}_{\epsilon}(X)
    \right]\\
    &=
    \sum_{q\in\mathcal{S}_{\epsilon}(X)}w_q.
\end{aligned}
\]
It follows that
\[
\begin{aligned}
    \sum_{q\in\mathcal{R}_{\epsilon}(X)}w_q
    &=
    \sum_{q\in\mathcal{W}}w_q
    -
    \sum_{q\in\mathcal{S}_{\epsilon}(X)}w_q\\
    &=
    1-\astab_{D,\epsilon}(X).
\end{aligned}
\]

Now fix an arbitrary componentwise feasible alternative assignment $\beta$, so that
\[
    \beta(q)\in\mathcal{F}_D(q,X)\cup\{\bot\}
    \qquad
    \text{for every }q\in\mathcal{W}.
\]
For each $q\in\mathcal{W}$, define the componentwise utility gain
\[
    g_q
    :=
    v_D(\beta(q),q\mid X)
    -
    v_D(\alpha_Q(q),q\mid X).
\]
Using the definition of $U_D$, the aggregate gain produced by $\beta$ can be written exactly as
\[
\begin{aligned}
    &U_D(\beta\mid X)
    -
    U_D(\alpha_Q\mid X)\\
    &=
    \sum_{q\in\mathcal{W}}
    w_qv_D(\beta(q),q\mid X)
    -
    \sum_{q\in\mathcal{W}}
    w_qv_D(\alpha_Q(q),q\mid X)\\
    &=
    \sum_{q\in\mathcal{W}}
    w_q
    \left(
        v_D(\beta(q),q\mid X)
        -
        v_D(\alpha_Q(q),q\mid X)
    \right)\\
    &=
    \sum_{q\in\mathcal{W}}w_qg_q\\
    &=
    \sum_{q\in\mathcal{S}_{\epsilon}(X)}
    w_qg_q
    +
    \sum_{q\in\mathcal{R}_{\epsilon}(X)}
    w_qg_q.
\end{aligned}
\]

Consider any component
\[
    q\in\mathcal{S}_{\epsilon}(X).
\]
If
\[
    \beta(q)\in\mathcal{F}_D(q,X),
\]
then, by the definition of the counterfactual assignment gain,
\[
\begin{aligned}
    g_q
    &=
    v_D(\beta(q),q\mid X)
    -
    v_D(\alpha_Q(q),q\mid X)\\
    &=
    \Delta_D
    \bigl(
        \beta(q)\leftarrow\alpha_Q(q);
        q,X
    \bigr).
\end{aligned}
\]
Because $\beta(q)$ is an element of $\mathcal{F}_D(q,X)$, its gain cannot exceed the maximum gain over that feasible set. Therefore,
\[
\begin{aligned}
    g_q
    &=
    \Delta_D
    \bigl(
        \beta(q)\leftarrow\alpha_Q(q);
        q,X
    \bigr)\\
    &\leq
    \max_{m\in\mathcal{F}_D(q,X)}
    \Delta_D
    \bigl(
        m\leftarrow\alpha_Q(q);
        q,X
    \bigr).
\end{aligned}
\]
Since $q\in\mathcal{S}_{\epsilon}(X)$, the defining inequality of
$\epsilon$-stability gives
\[
    \max_{m\in\mathcal{F}_D(q,X)}
    \Delta_D
    \bigl(
        m\leftarrow\alpha_Q(q);
        q,X
    \bigr)
    \leq\epsilon.
\]
Combining the preceding two inequalities yields
\[
    g_q\leq\epsilon.
\]

If instead
\[
    \beta(q)=\bot,
\]
then the convention $v_D(\bot,q\mid X)=0$ gives
\[
\begin{aligned}
    g_q
    &=
    v_D(\bot,q\mid X)
    -
    v_D(\alpha_Q(q),q\mid X)\\
    &=
    -v_D(\alpha_Q(q),q\mid X).
\end{aligned}
\]
Since declared component values are nonnegative,
\[
    v_D(\alpha_Q(q),q\mid X)\geq0,
\]
and hence
\[
    g_q\leq0.
\]
Because $\epsilon\geq0$, it follows that
\[
    g_q\leq0\leq\epsilon.
\]
Thus, regardless of whether $\beta(q)$ is a feasible agent or $\bot$,
\[
    g_q\leq\epsilon
    \qquad
    \text{for every }q\in\mathcal{S}_{\epsilon}(X).
\]

Now consider any component
\[
    q\in\mathcal{R}_{\epsilon}(X).
\]
Although no $\epsilon$-stability guarantee is available for such a component, the normalization of the declared values implies
\[
    0
    \leq
    v_D(\beta(q),q\mid X)
    \leq1.
\]
Indeed, this follows directly from the range of $v_D$ when
$\beta(q)\in\mathcal{F}_D(q,X)$, while for $\beta(q)=\bot$ it follows from
$v_D(\bot,q\mid X)=0$. Similarly,
\[
    0
    \leq
    v_D(\alpha_Q(q),q\mid X)
    \leq1.
\]
Therefore,
\[
\begin{aligned}
    g_q
    &=
    v_D(\beta(q),q\mid X)
    -
    v_D(\alpha_Q(q),q\mid X)\\
    &\leq
    1-0\\
    &=1.
\end{aligned}
\]
Hence,
\[
    g_q\leq1
    \qquad
    \text{for every }q\in\mathcal{R}_{\epsilon}(X).
\]

Substituting the two componentwise bounds into the decomposition of the aggregate gain, and using $w_q>0$, gives
\[
\begin{aligned}
    U_D(\beta\mid X)
    -
    U_D(\alpha_Q\mid X)
    &=
    \sum_{q\in\mathcal{S}_{\epsilon}(X)}
    w_qg_q
    +
    \sum_{q\in\mathcal{R}_{\epsilon}(X)}
    w_qg_q\\
    &\leq
    \sum_{q\in\mathcal{S}_{\epsilon}(X)}
    w_q\epsilon
    +
    \sum_{q\in\mathcal{R}_{\epsilon}(X)}
    w_q\\
    &=
    \epsilon
    \sum_{q\in\mathcal{S}_{\epsilon}(X)}
    w_q
    +
    \sum_{q\in\mathcal{R}_{\epsilon}(X)}
    w_q.
\end{aligned}
\]
Using
\[
    \sum_{q\in\mathcal{S}_{\epsilon}(X)}w_q
    =
    \astab_{D,\epsilon}(X)
\]
and
\[
    \sum_{q\in\mathcal{R}_{\epsilon}(X)}w_q
    =
    1-\astab_{D,\epsilon}(X),
\]
we obtain
\[
\begin{aligned}
    U_D(\beta\mid X)
    -
    U_D(\alpha_Q\mid X)
    &\leq
    \epsilon\astab_{D,\epsilon}(X)
    +
    1-\astab_{D,\epsilon}(X)\\
    &=
    1
    -
    \astab_{D,\epsilon}(X)
    +
    \epsilon\astab_{D,\epsilon}(X)\\
    &=
    1
    -
    (1-\epsilon)\astab_{D,\epsilon}(X).
\end{aligned}
\]
This proves the first claim.

Finally, suppose that $X$ is agent-assignment stable. By definition,
\[
    \astab_{D,\epsilon}(X)=1.
\]
Substituting this identity into the established bound gives
\[
\begin{aligned}
    U_D(\beta\mid X)
    -
    U_D(\alpha_Q\mid X)
    &\leq
    1-(1-\epsilon)\cdot1\\
    &=\epsilon.
\end{aligned}
\]
Since $\beta$ was an arbitrary componentwise feasible joint reassignment, no feasible joint reassignment can improve the declared aggregate value by more than $\epsilon$.
\end{proof}

\subsection{Proof of Proposition~\ref{prop:quality-performance}} \label{app:quality-performance-proof} 

\begin{proof}
Fix a iCORE state $X$ satisfying the assumptions of the proposition, and write
\[
    \mathcal W=\WorkSet(X).
\]
Because assumption~(i) requires normalized weights over $\mathcal W$, the evaluated work set is nonempty and the weights satisfy
\[
    w_q>0
    \quad\text{for every }q\in\mathcal W,
    \qquad
    \sum_{q\in\mathcal W}w_q=1.
\]
By the weighted decomposition of task performance in assumption~(i), there exist nonnegative random component contributions $Y_q\geq0$, indexed by $q\in\mathcal W$, such that
\[
    \Perf
    =
    \sum_{q\in\mathcal W}w_qY_q.
\]
For every $q\in\mathcal W$, define
\[
    J_q
    =
    J_D(q;X)
    \in\{0,1\}
\]
and
\[
    S_q
    =
    \mathbf{1}
    \!\left[
        q\text{ is $\epsilon$-stable at }X
    \right]
    \in\{0,1\}.
\]
Equivalently, by Definition~\ref{def:assignment-stability},
\[
    S_q
    =
    \mathbf{1}
    \!\left[
        \max_{m\in\mathcal F_D(q,X)}
        \Delta_D(m\leftarrow\alpha_Q(q);q,X)
        \leq\epsilon
    \right].
\]
Since the state $X$ is fixed, the quantities $J_q$, $S_q$, $\alpha_Q(q)$, $\mathcal F_D(q,X)$, and $v_D(m,q\mid X)$ are deterministic under the conditional expectation given $X$.

We first relate the expected task contribution of each component to the declared contribution value of its current or recorded assignee. Suppose that $J_q=1$. Then every active decision-relevant assertion associated with $q$ is justified under the declared validator. Since $V$ is task-adequate, such a justified component is semantically valid for the task-performance decomposition. If $\alpha_Q(q)\neq\bot$, the calibration assumption therefore gives
\[
    \mathbb E[Y_q\mid X]
    \geq
    v_D(\alpha_Q(q),q\mid X).
\]
If $\alpha_Q(q)=\bot$, then the convention
\[
    v_D(\bot,q\mid X)=0
\]
and the nonnegativity of $Y_q$ give
\[
    \mathbb E[Y_q\mid X]
    \geq
    0
    =
    v_D(\alpha_Q(q),q\mid X).
\]
Thus, whenever $J_q=1$,
\[
    \mathbb E[Y_q\mid X]
    \geq
    v_D(\alpha_Q(q),q\mid X).
\]
When $J_q=0$, the nonnegativity of $Y_q$ implies
\[
    \mathbb E[Y_q\mid X]\geq0,
\]
while
\[
    J_qv_D(\alpha_Q(q),q\mid X)=0.
\]
Consequently, for every $q\in\mathcal W$, regardless of whether $q$ is justified,
\[
    \mathbb E[Y_q\mid X]
    \geq
    J_qv_D(\alpha_Q(q),q\mid X).
\]
Using the weighted decomposition of $\Perf$ and linearity of conditional expectation, we obtain
\[
\begin{aligned}
    \mathbb E[\Perf\mid X]
    &=
    \mathbb E\!\left[
        \sum_{q\in\mathcal W}w_qY_q
        \,\middle|\,
        X
    \right] \\
    &=
    \sum_{q\in\mathcal W}
    w_q\mathbb E[Y_q\mid X] \\
    &\geq
    \sum_{q\in\mathcal W}
    w_qJ_qv_D(\alpha_Q(q),q\mid X).
\end{aligned}
\]

We next derive a uniform lower bound on the declared value of every $\epsilon$-stable component. Fix an arbitrary $q\in\mathcal W$ for which $S_q=1$. Assumption~(iv) states that $q$ has at least one feasible agent whose declared value is at least $p_{\min}$. Hence, there exists an agent
\[
    m_q^\star\in\mathcal F_D(q,X)
\]
such that
\[
    v_D(m_q^\star,q\mid X)
    \geq
    p_{\min}.
\]
Since $S_q=1$, component $q$ is $\epsilon$-stable, and therefore
\[
    \max_{m\in\mathcal F_D(q,X)}
    \Delta_D(m\leftarrow\alpha_Q(q);q,X)
    \leq
    \epsilon.
\]
By the definition of the reassignment gain,
\[
    \Delta_D(m\leftarrow\alpha_Q(q);q,X)
    =
    v_D(m,q\mid X)
    -
    v_D(\alpha_Q(q),q\mid X).
\]
Substituting this definition into the stability condition yields
\[
    \max_{m\in\mathcal F_D(q,X)}
    \left\{
        v_D(m,q\mid X)
        -
        v_D(\alpha_Q(q),q\mid X)
    \right\}
    \leq
    \epsilon.
\]
Because $m_q^\star\in\mathcal F_D(q,X)$, its reassignment gain cannot exceed the maximum over the feasible-agent set. Thus,
\[
\begin{aligned}
    &v_D(m_q^\star,q\mid X)
    -
    v_D(\alpha_Q(q),q\mid X)\\
    &\leq
    \max_{m\in\mathcal F_D(q,X)}
    \left\{
        v_D(m,q\mid X)
        -
        v_D(\alpha_Q(q),q\mid X)
    \right\} \\
    &\leq
    \epsilon.
\end{aligned}
\]
Rearranging this inequality gives
\[
    v_D(\alpha_Q(q),q\mid X)
    \geq
    v_D(m_q^\star,q\mid X)-\epsilon.
\]
Using the lower bound
\[
    v_D(m_q^\star,q\mid X)\geq p_{\min},
\]
we conclude that
\[
    v_D(\alpha_Q(q),q\mid X)
    \geq
    p_{\min}-\epsilon.
\]
In particular, an $\epsilon$-stable component cannot be unassigned under the assumptions of the proposition. Indeed, if $\alpha_Q(q)=\bot$, then
\[
    v_D(\alpha_Q(q),q\mid X)
    =
    v_D(\bot,q\mid X)
    =
    0,
\]
and hence
\[
\begin{aligned}
    \Delta_D(m_q^\star\leftarrow\bot;q,X)
    &=
    v_D(m_q^\star,q\mid X)
    -
    v_D(\bot,q\mid X) \\
    &=
    v_D(m_q^\star,q\mid X) \\
    &\geq
    p_{\min} \\
    &>
    \epsilon,
\end{aligned}
\]
which would contradict the $\epsilon$-stability of $q$.

We have therefore shown that
\[
    S_q=1
    \quad\Longrightarrow\quad
    v_D(\alpha_Q(q),q\mid X)
    \geq
    p_{\min}-\epsilon.
\]
If $S_q=0$, then
\[
    (p_{\min}-\epsilon)S_q=0,
\]
and the range condition
\[
    v_D(\alpha_Q(q),q\mid X)\in[0,1]
\]
again gives
\[
    v_D(\alpha_Q(q),q\mid X)
    \geq
    (p_{\min}-\epsilon)S_q.
\]
Consequently, the inequality
\[
    v_D(\alpha_Q(q),q\mid X)
    \geq
    (p_{\min}-\epsilon)S_q
\]
holds for every $q\in\mathcal W$. Multiplying both sides by the nonnegative indicator $J_q$ gives
\[
    J_qv_D(\alpha_Q(q),q\mid X)
    \geq
    (p_{\min}-\epsilon)J_qS_q.
\]
Substituting this componentwise bound into the preceding lower bound on conditional task performance yields
\[
\begin{aligned}
    \mathbb E[\Perf\mid X]
    &\geq
    \sum_{q\in\mathcal W}
    w_qJ_qv_D(\alpha_Q(q),q\mid X) \\
    &\geq
    (p_{\min}-\epsilon)
    \sum_{q\in\mathcal W}w_qJ_qS_q.
\end{aligned}
\]

It remains to lower-bound the total normalized weight of the components that are simultaneously justified and $\epsilon$-stable. Since $J_q,S_q\in\{0,1\}$, one has
\[
    (1-J_q)(1-S_q)\geq0.
\]
Expanding the left-hand side gives
\[
    1-J_q-S_q+J_qS_q\geq0,
\]
and therefore
\[
    J_qS_q
    \geq
    J_q+S_q-1.
\]
Multiplying by $w_q>0$ and summing over $q\in\mathcal W$ gives
\[
\begin{aligned}
    \sum_{q\in\mathcal W}w_qJ_qS_q
    &\geq
    \sum_{q\in\mathcal W}
    w_q(J_q+S_q-1) \\
    &=
    \sum_{q\in\mathcal W}w_qJ_q
    +
    \sum_{q\in\mathcal W}w_qS_q
    -
    \sum_{q\in\mathcal W}w_q.
\end{aligned}
\]
Since the weights are normalized,
\[
    \sum_{q\in\mathcal W}w_q=1.
\]
Moreover, by the definition of work soundness and the fact that $\mathcal W\neq\varnothing$,
\[
    \snd_D(X)
    =
    \frac{
        \sum_{q\in\mathcal W}w_qJ_D(q;X)
    }{
        \sum_{q\in\mathcal W}w_q
    }
    =
    \sum_{q\in\mathcal W}w_qJ_q.
\]
Similarly, by the definition of agent-assignment stability,
\[
    \astab_{D,\epsilon}(X)
    =
    \frac{
        \sum_{q\in\mathcal W}w_q
        \mathbf{1}[q\text{ is $\epsilon$-stable at }X]
    }{
        \sum_{q\in\mathcal W}w_q
    }
    =
    \sum_{q\in\mathcal W}w_qS_q.
\]
Hence,
\[
    \sum_{q\in\mathcal W}w_qJ_qS_q
    \geq
    \snd_D(X)
    +
    \astab_{D,\epsilon}(X)
    -
    1.
\]
On the other hand, every term $w_qJ_qS_q$ is nonnegative, so
\[
    \sum_{q\in\mathcal W}w_qJ_qS_q
    \geq
    0.
\]
Combining these two inequalities gives
\[
    \sum_{q\in\mathcal W}w_qJ_qS_q
    \geq
    \max\left\{
        0,\,
        \snd_D(X)
        +
        \astab_{D,\epsilon}(X)
        -
        1
    \right\}.
\]
Equivalently,
\[
    \sum_{q\in\mathcal W}w_qJ_qS_q
    \geq
    \left[
        \snd_D(X)
        +
        \astab_{D,\epsilon}(X)
        -
        1
    \right]_+.
\]
Therefore,
\[
\begin{aligned}
    \mathbb E[\Perf\mid X]
    &\geq
    (p_{\min}-\epsilon)
    \sum_{q\in\mathcal W}w_qJ_qS_q \\
    &\geq
    (p_{\min}-\epsilon)
    \left[
        \snd_D(X)
        +
        \astab_{D,\epsilon}(X)
        -
        1
    \right]_+.
\end{aligned}
\]
By Definition~\ref{def:core-quality},
\[
    \cq_{D,\epsilon}(X)
    =
    \frac{1}{2}
    \left(
        \snd_D(X)
        +
        \astab_{D,\epsilon}(X)
    \right),
\]
and therefore
\[
    \snd_D(X)
    +
    \astab_{D,\epsilon}(X)
    =
    2\cq_{D,\epsilon}(X).
\]
Substituting this identity into the preceding bound gives
\[
\begin{aligned}
    \mathbb E[\Perf\mid X]
    &\geq
    (p_{\min}-\epsilon)
    \left[
        2\cq_{D,\epsilon}(X)-1
    \right]_+ \\
    &=
    (p_{\min}-\epsilon)
    \left[
        2\left(
            \cq_{D,\epsilon}(X)-\frac{1}{2}
        \right)
    \right]_+.
\end{aligned}
\]
Since $2>0$, the positive-part operator satisfies
\[
    [2z]_+=2[z]_+
\]
for every $z\in\mathbb R$. Applying this identity with
\[
    z
    =
    \cq_{D,\epsilon}(X)-\frac{1}{2}
\]
finally yields
\[
    \mathbb E[\Perf\mid X]
    \geq
    2(p_{\min}-\epsilon)
    \left[
        \cq_{D,\epsilon}(X)-\frac{1}{2}
    \right]_+.
\]
This is the claimed performance implication.
\end{proof}

\subsection{Proof of Proposition~\ref{prop:core-audit-properties}} \label{app:core-audit-proof} 

\begin{proof}
Let
\[
    X^{(0)},X^{(1)},X^{(2)},\ldots
\]
denote the actual sequence of states retained by
Algorithm~\ref{alg:core-audit}, with
\[
    X^{(0)}=X_0.
\]
Tentative states such as $\widetilde X$ and intervention proposals such
as $X_f$ that fail their corresponding audits are not members of this
sequence, because they are never committed. The failure and feedback
records retained by the algorithm are themselves represented as legal
iCORE updates and do not install the rejected candidate assertions.
Hence every transition
\[
    X^{(k)}\longrightarrow X^{(k+1)}
\]
in the retained sequence is a legal iCORE update under the hypotheses of
the proposition.

We first prove that every retained state is work-sound. We proceed by
induction on $k$. The initial state is work-sound by assumption, and
therefore
\[
    X^{(0)}=X_0
\]
is work-sound. Now fix an arbitrary $k\geq 0$ and suppose that
\[
    X^{(k)}
\]
is work-sound. Since
\[
    X^{(k)}\longrightarrow X^{(k+1)}
\]
is a legal iCORE update, Lemma~\ref{lem:one-step-work-soundness} applies
with
\[
    X=X^{(k)}
    \qquad\text{and}\qquad
    X'=X^{(k+1)}.
\]
It follows that
\[
    X^{(k+1)}
\]
is work-sound. The induction therefore gives
\[
    X^{(k)}
    \text{ is work-sound for every retained index }k.
\]
In particular, the conclusion holds after both an ordinary commitment
and an intervention commitment. For a committed intervention, the same
conclusion is also certified directly by
$\operatorname{AuditRecovery}_D$, because the recovery audit accepts an
intervention candidate only when the candidate is legal and work-sound.
Thus every committed state produced by Algorithm~\ref{alg:core-audit}
is work-sound.

It remains to establish that quality is nondecreasing on the matched
audit set whenever an intervention is committed. Consider an arbitrary
committed intervention. Let $X^{-}$ denote the state immediately before
the diagnosis, and write
\[
    F
    =
    \operatorname{Diagnose}_D(X^{-}),
    \qquad
    \mathcal W_F
    =
    \operatorname{MatchedSet}_D(X^{-},F).
\]
Let $X_f$ be the labeled intervention state proposed in response to
$F$. Since this intervention is committed, its recovery audit returns
no violation:
\[
    \operatorname{AuditRecovery}_D
    \bigl(
        X^{-},
        X_f,
        F,
        \mathcal W_F
    \bigr)
    =
    \varnothing.
\]
Let
\[
    X^{+}
    =
    \operatorname{Commit}_D(X_f)
\]
denote the state retained after the successful intervention.
Commitment changes the audited proposal from tentative to retained but
does not alter the semantic contents of $G$, $Q$, or $\Pi$ on which the
matched-set indicators depend. Hence, for every
$q\in\mathcal W_F$,
\[
    J_D^F(q;X^{+})
    =
    J_D^F(q;X_f)
\]
and
\[
    A_{D,\epsilon}^F(q;X^{+})
    =
    A_{D,\epsilon}^F(q;X_f).
\]
Consequently,
\[
    \cq_{D,\epsilon}(X^{+};\mathcal W_F)
    =
    \cq_{D,\epsilon}(X_f;\mathcal W_F).
\]

By construction, the set $\mathcal W_F$ and all of its weights are
frozen before the diagnosis is supplied to the responsible agent.
Therefore the intervention cannot alter the cohort or the weights on
which the pre-intervention and post-intervention scores are evaluated.
Moreover, every $q\in\mathcal W_F$ is transported from $X^{-}$ to
$X^{+}$ through the identity, refinement, replacement, supersession, or
settlement relations declared by $D$. For brevity, define
\[
\begin{aligned}
    J_q^{-}
    &=
    J_D^F(q;X^{-}),
    &
    J_q^{+}
    &=
    J_D^F(q;X^{+}),\\
    A_q^{-}
    &=
    A_{D,\epsilon}^F(q;X^{-}),
    &
    A_q^{+}
    &=
    A_{D,\epsilon}^F(q;X^{+}).
\end{aligned}
\]

Suppose first that
\[
    \mathcal W_F=\varnothing.
\]
By the empty-set convention in the matched-set definitions,
\[
    \snd_D(X^{-};\mathcal W_F)=1
\]
and
\[
    \astab_{D,\epsilon}(X^{-};\mathcal W_F)=1.
\]
The same convention gives
\[
    \snd_D(X^{+};\mathcal W_F)=1
\]
and
\[
    \astab_{D,\epsilon}(X^{+};\mathcal W_F)=1.
\]
It follows that
\[
\begin{aligned}
    \cq_{D,\epsilon}(X^{-};\mathcal W_F)
    &=
    \frac{1}{2}
    \left(
        \snd_D(X^{-};\mathcal W_F)
        +
        \astab_{D,\epsilon}(X^{-};\mathcal W_F)
    \right)\\
    &=
    \frac{1}{2}(1+1)\\
    &=1
\end{aligned}
\]
and, similarly,
\[
\begin{aligned}
    \cq_{D,\epsilon}(X^{+};\mathcal W_F)
    &=
    \frac{1}{2}
    \left(
        \snd_D(X^{+};\mathcal W_F)
        +
        \astab_{D,\epsilon}(X^{+};\mathcal W_F)
    \right)\\
    &=
    \frac{1}{2}(1+1)\\
    &=1.
\end{aligned}
\]
Therefore
\[
    \cq_{D,\epsilon}(X^{+};\mathcal W_F)
    =
    \cq_{D,\epsilon}(X^{-};\mathcal W_F),
\]
so the required nondecrease holds with equality when the matched audit
set is empty.

Now suppose that
\[
    \mathcal W_F\neq\varnothing.
\]
Define the frozen total matched weight by
\[
    W_F
    =
    \sum_{q\in\mathcal W_F}w_q.
\]
Every predeclared importance weight satisfies
\[
    w_q>0.
\]
Since $\mathcal W_F$ is nonempty, it follows that
\[
    W_F>0.
\]
Using the matched-set definitions, the pre-intervention soundness score
is
\[
    \snd_D(X^{-};\mathcal W_F)
    =
    \frac{
        \sum_{q\in\mathcal W_F}
        w_qJ_q^{-}
    }{
        W_F
    },
\]
and the corresponding assignment-stability score is
\[
    \astab_{D,\epsilon}(X^{-};\mathcal W_F)
    =
    \frac{
        \sum_{q\in\mathcal W_F}
        w_qA_q^{-}
    }{
        W_F
    }.
\]
Therefore
\[
\begin{aligned}
    \cq_{D,\epsilon}(X^{-};\mathcal W_F)
    &=
    \frac{1}{2}
    \left(
        \snd_D(X^{-};\mathcal W_F)
        +
        \astab_{D,\epsilon}(X^{-};\mathcal W_F)
    \right)\\
    &=
    \frac{1}{2}
    \left(
        \frac{
            \sum_{q\in\mathcal W_F}w_qJ_q^{-}
        }{
            W_F
        }
        +
        \frac{
            \sum_{q\in\mathcal W_F}w_qA_q^{-}
        }{
            W_F
        }
    \right)\\
    &=
    \frac{1}{2W_F}
    \left(
        \sum_{q\in\mathcal W_F}w_qJ_q^{-}
        +
        \sum_{q\in\mathcal W_F}w_qA_q^{-}
    \right)\\
    &=
    \frac{1}{2W_F}
    \sum_{q\in\mathcal W_F}
    w_q
    \bigl(
        J_q^{-}+A_q^{-}
    \bigr).
\end{aligned}
\]
The same expansion for the committed successor gives
\[
\begin{aligned}
    \cq_{D,\epsilon}(X^{+};\mathcal W_F)
    &=
    \frac{1}{2}
    \left(
        \snd_D(X^{+};\mathcal W_F)
        +
        \astab_{D,\epsilon}(X^{+};\mathcal W_F)
    \right)\\
    &=
    \frac{1}{2}
    \left(
        \frac{
            \sum_{q\in\mathcal W_F}w_qJ_q^{+}
        }{
            W_F
        }
        +
        \frac{
            \sum_{q\in\mathcal W_F}w_qA_q^{+}
        }{
            W_F
        }
    \right)\\
    &=
    \frac{1}{2W_F}
    \sum_{q\in\mathcal W_F}
    w_q
    \bigl(
        J_q^{+}+A_q^{+}
    \bigr).
\end{aligned}
\]

Because
\[
    \operatorname{AuditRecovery}_D
    \bigl(
        X^{-},
        X_f,
        F,
        \mathcal W_F
    \bigr)
    =
    \varnothing
\]
and because the recovery audit enforces the matched-set conditions of
Algorithm~\ref{alg:core-audit}, the accepted intervention satisfies
\[
    \cq_{D,\epsilon}(X_f;\mathcal W_F)
    \geq
    \cq_{D,\epsilon}(X^{-};\mathcal W_F).
\]
Since the matched semantic indicators of $X_f$ and $X^{+}$ are
identical, this inequality is equivalent to
\[
    \cq_{D,\epsilon}(X^{+};\mathcal W_F)
    \geq
    \cq_{D,\epsilon}(X^{-};\mathcal W_F).
\]
Substituting the two expanded expressions yields
\[
    \frac{1}{2W_F}
    \sum_{q\in\mathcal W_F}
    w_q
    \bigl(
        J_q^{+}+A_q^{+}
    \bigr)
    \geq
    \frac{1}{2W_F}
    \sum_{q\in\mathcal W_F}
    w_q
    \bigl(
        J_q^{-}+A_q^{-}
    \bigr).
\]
Since
\[
    2W_F>0,
\]
multiplication by $2W_F$ preserves the direction of the inequality and
gives
\[
    \sum_{q\in\mathcal W_F}
    w_q
    \bigl(
        J_q^{+}+A_q^{+}
    \bigr)
    \geq
    \sum_{q\in\mathcal W_F}
    w_q
    \bigl(
        J_q^{-}+A_q^{-}
    \bigr).
\]
Subtracting the right-hand side from the left-hand side gives
\[
    \sum_{q\in\mathcal W_F}
    w_q
    \left[
        \bigl(J_q^{+}-J_q^{-}\bigr)
        +
        \bigl(A_q^{+}-A_q^{-}\bigr)
    \right]
    \geq
    0.
\]
Dividing by the positive quantity $2W_F$ now yields
\[
\begin{aligned}
    &\cq_{D,\epsilon}(X^{+};\mathcal W_F)
    -
    \cq_{D,\epsilon}(X^{-};\mathcal W_F)\\
    &\qquad=
    \frac{1}{2W_F}
    \sum_{q\in\mathcal W_F}
    w_q
    \left[
        \bigl(J_q^{+}-J_q^{-}\bigr)
        +
        \bigl(A_q^{+}-A_q^{-}\bigr)
    \right]\\
    &\qquad\geq 0.
\end{aligned}
\]
Hence
\[
    \cq_{D,\epsilon}(X^{+};\mathcal W_F)
    \geq
    \cq_{D,\epsilon}(X^{-};\mathcal W_F).
\]

The transport and no-silent-removal conditions ensure that this
inequality represents an actual quality comparison rather than a change
of evaluation cohort. In particular, if some
$q\in\mathcal W_F$ is deleted without a valid identity, refinement,
replacement, supersession, or settlement record, then the matched-set
definition assigns
\[
    J_D^F(q;X^{+})=0
\]
instead of removing $q$ and its weight from the score. A component that
is discharged, rejected, replaced, or superseded receives
\[
    J_D^F(q;X^{+})=1
\]
only when the corresponding transport or settlement operation is
supported by all evidence required by $D$. Likewise, a settled component
receives
\[
    A_{D,\epsilon}^F(q;X^{+})=1
\]
only when it was completed under an $\epsilon$-stable assignment
according to capability evidence available no later than settlement.
Thus the intervention cannot obtain the preceding inequality merely by
deleting, renaming, hiding, or retrospectively relabeling difficult
work.

The committed intervention was arbitrary. Therefore every committed
intervention satisfies
\[
    \cq_{D,\epsilon}(X^{+};\mathcal W_F)
    \geq
    \cq_{D,\epsilon}(X^{-};\mathcal W_F).
\]
Hence iCORE-Audit is quality-nondecreasing on the matched audit set at
every committed intervention. Together with the work-soundness
induction, this proves both conclusions of the proposition.
\end{proof}

\subsection{Proof of Theorem~\ref{thm:execution-order-local-to-global}} 
\label{app:execution-order-stability-proof}

\begin{lemma}[Termination and local joinability imply global joinability]
\label{lem:terminating-local-joinability}
Let $(\mathcal A,\Rightarrow)$ be an abstract rewrite system such that $\Rightarrow$ terminates. Then every element of $\mathcal A$ has a $\Rightarrow$-normal descendant. If, in addition, every one-step fork is joinable, in the sense that whenever
\[
    A\Rightarrow B,
    \qquad
    A\Rightarrow C,
\]
there exists $D\in\mathcal A$ satisfying
\[
    B\Rightarrow^{*}D,
    \qquad
    C\Rightarrow^{*}D.
\]
Then every pair of descendants of a common source is joinable: for all $A,B,C\in\mathcal A$ such that
\[
    A\Rightarrow^{*}B,
    \qquad
    A\Rightarrow^{*}C,
\]
there exists $H\in\mathcal A$ such that
\[
    B\Rightarrow^{*}H,
    \qquad
    C\Rightarrow^{*}H.
\]
Under this additional local-joinability assumption, any two normal descendants of the same element are equal.
\end{lemma}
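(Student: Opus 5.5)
Three claims need proof: existence of normal descendants from termination, global joinability of descendants (Newman's lemma), and uniqueness of normal forms.

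\emph{Existence of normal descendants.} Fix $A\in\mathcal A$ and build a $\Rightarrow$-chain greedily: while the current element is not normal, take any one-step successor. Termination forbids an infinite chain, so the construction stops after finitely many steps at a normal element reachable from $A$. This uses only the definition of termination, with no well-foundedness argument beyond it.

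\emph{Global joinability.} I use well-founded induction on the converse relation. Since $\Rightarrow$ terminates, the relation $A\Rightarrow^{+}A'$ read backwards is well-founded, so induction over it is available. Let $\Phi(A)$ be the statement that all pairs of descendants $B,C$ of $A$ are joinable. Assume $\Phi(A')$ for every $A'$ with $A\Rightarrow^{+}A'$, and take $A\Rightarrow^{*}B$ and $A\Rightarrow^{*}C$.
\begin{enumerate}
\item If either derivation has length zero, the pair is joined trivially: for example, if $B=A$ then $H=C$ works.
\item Otherwise write $A\Rightarrow B_1\Rightarrow^{*}B$ and $A\Rightarrow C_1\Rightarrow^{*}C$.
\item Local joinability gives some $D$ with $B_1\Rightarrow^{*}D$ and $C_1\Rightarrow^{*}D$.
\item Apply $\Phi(B_1)$ to the pair $B$ and $D$. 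This gives $E$ with $B\Rightarrow^{*}E$ and $D\Rightarrow^{*}E$.
\item Now $C_1\Rightarrow^{*}E$ through $D$, and also $C_1\Rightarrow^{*}C$. Apply $\Phi(C_1)$ to the pair $E$ and $C$. This gives $H$ with $E\Rightarrow^{*}H$ and $C\Rightarrow^{*}H$.
\item Then $B\Rightarrow^{*}E\Rightarrow^{*}H$, so $H$ joins $B$ and $C$.
\end{enumerate}
The main point needing care is the legitimacy of the induction. It is justified because $B_1$ and $C_1$ are proper descendants of $A$, and termination makes the proper-descendant relation well-founded. I would state that principle explicitly, arguing by minimal counterexample: if $\Phi$ failed somewhere, pick a failure, and if every failure had a failing proper descendant we would obtain an infinite $\Rightarrow$-chain, contradicting termination.

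\emph{Uniqueness of normal forms.} Let $B$ and $C$ be normal descendants of the same $A$. Global joinability gives $H$ with $B\Rightarrow^{*}H$ and $C\Rightarrow^{*}H$. A normal element has no successor, so both derivations have length zero, and therefore $B=H=C$.
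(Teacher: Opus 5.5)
Your proposal is correct and takes essentially the paper's route: normal descendants exist because termination forces any chain to stop, global joinability follows by Newman-style well-founded induction over proper descendants, and normal forms are unique because a normal element admits only zero-length reductions. The only difference is minor: you close the diamond with two applications of the induction hypothesis (at $B_1$, then at $C_1$ using $C_1\Rightarrow^{*}D\Rightarrow^{*}E$), whereas the paper uses three (at $A_1$, at $A_2$, and finally at $D$ to join the two resulting elements), so your version is slightly more economical.
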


\begin{proof}
We first verify existence of normal descendants. Fix $A\in\mathcal A$. If no normal element were reachable from $A$, then $A$ itself would not be normal, so there would exist $A_1$ with
\[
    A\Rightarrow A_1.
\]
The element $A_1$ would likewise have no normal descendant, because any normal descendant of $A_1$ would also be a normal descendant of $A$. Hence there would exist $A_2$ with
\[
    A_1\Rightarrow A_2.
\]
Repeating the same argument would produce an infinite sequence
\[
    A\Rightarrow A_1\Rightarrow A_2\Rightarrow A_3\Rightarrow\cdots,
\]
contradicting termination. Therefore every $A\in\mathcal A$ has at least one normal descendant.

We now prove global joinability by well-founded induction. Define a strict descendant relation $\prec$ on $\mathcal A$ by
\[
    B\prec A
    \quad\Longleftrightarrow\quad
    A\Rightarrow^{+}B,
\]
where $\Rightarrow^{+}$ is the transitive closure of $\Rightarrow$. Since $\Rightarrow$ terminates, there is no infinite chain
\[
    A_0\Rightarrow^{+}A_1\Rightarrow^{+}A_2\Rightarrow^{+}\cdots,
\]
and consequently $\prec$ is well founded. For $A\in\mathcal A$, let $\mathsf P(A)$ denote the assertion that every two descendants of $A$ are joinable. Assume inductively that $\mathsf P(A')$ holds for every $A'\prec A$, and consider arbitrary reductions
\[
    A\Rightarrow^{*}B,
    \qquad
    A\Rightarrow^{*}C.
\]
If the first reduction has length zero, then $B=A$, and choosing $H=C$ gives
\[
    B=A\Rightarrow^{*}C=H,
    \qquad
    C\Rightarrow^{*}C=H.
\]
The same argument applies if the second reduction has length zero. We may therefore assume that both reductions are nonempty and write their first steps as
\[
    A\Rightarrow A_1\Rightarrow^{*}B,
    \qquad
    A\Rightarrow A_2\Rightarrow^{*}C.
\]
By the assumed joinability of every one-step fork, there exists $D$ such that
\[
    A_1\Rightarrow^{*}D,
    \qquad
    A_2\Rightarrow^{*}D.
\]
Because $A\Rightarrow A_1$, we have $A_1\prec A$. Both $B$ and $D$ are descendants of $A_1$, so the induction hypothesis $\mathsf P(A_1)$ yields an element $E$ satisfying
\[
    B\Rightarrow^{*}E,
    \qquad
    D\Rightarrow^{*}E.
\]
Similarly, $A_2\prec A$, and both $C$ and $D$ are descendants of $A_2$. The induction hypothesis $\mathsf P(A_2)$ therefore yields an element $F$ satisfying
\[
    C\Rightarrow^{*}F,
    \qquad
    D\Rightarrow^{*}F.
\]
The element $D$ is itself a strict descendant of $A$, because
\[
    A\Rightarrow A_1\Rightarrow^{*}D.
\]
Thus $D\prec A$. Since $E$ and $F$ are both descendants of $D$, the induction hypothesis $\mathsf P(D)$ gives an element $H$ for which
\[
    E\Rightarrow^{*}H,
    \qquad
    F\Rightarrow^{*}H.
\]
Combining the reductions gives
\[
    B\Rightarrow^{*}E\Rightarrow^{*}H
    \qquad\text{and}\qquad
    C\Rightarrow^{*}F\Rightarrow^{*}H,
\]
so $B$ and $C$ are joinable. This proves $\mathsf P(A)$. Well-founded induction now implies $\mathsf P(A)$ for every $A\in\mathcal A$.

Finally, suppose that $N_1$ and $N_2$ are normal descendants of the same element $A$. Global joinability provides $H$ such that
\[
    N_1\Rightarrow^{*}H,
    \qquad
    N_2\Rightarrow^{*}H.
\]
Because $N_1$ is normal, the first reduction must have length zero, and hence $H=N_1$. Because $N_2$ is normal, the second reduction must also have length zero, and hence $H=N_2$. Therefore
\[
    N_1=N_2,
\]
which proves uniqueness of the normal descendant.
\end{proof}

\begin{proof}[Proof of Theorem~\ref{thm:execution-order-local-to-global}]
Let
\[
    \mathcal X_p^{\mathrm{reach}}
    =
    \bigl\{
        [X]_{\equiv_X}: X\text{ is reachable}
    \bigr\}
\]
be the set of reachable equivalence classes, and abbreviate the productive quotient relation by
\[
    A\Rightarrow B
    \quad\Longleftrightarrow\quad
    A\rightarrow_p B,
    \qquad
    A,B\in\mathcal X_p^{\mathrm{reach}}.
\]
The theorem assumes that this relation terminates.

We first record how the quotient relation is connected to actual legal updates. Suppose
\[
    [S]_{\equiv_X}\Rightarrow[T]_{\equiv_X}.
\]
By the definition of the productive quotient, there exist representatives $S'$ and $T'$ such that
\[
    S'\equiv_X S,
    \qquad
    T'\equiv_X T,
    \qquad
    S'\rightarrow T',
    \qquad
    S'\not\equiv_X T'.
\]
Let $\rho$ be the legal update occurrence realizing $S'\rightarrow T'$. Since declared state equivalence is a congruence for legal updates, $\rho$ can be transported from $S'$ to any equivalent representative $\widehat S\equiv_X S'$. The transported occurrence is enabled and legal at $\widehat S$, and it produces a successor $\widehat T$ satisfying
\[
    \widehat S\rightarrow\widehat T,
    \qquad
    \widehat T\equiv_X T'.
\]
This transported update remains productive. Indeed, if $\widehat S\equiv_X\widehat T$, then
\[
    S'\equiv_X\widehat S
    \equiv_X\widehat T
    \equiv_X T',
\]
which would imply $S'\equiv_X T'$ by transitivity and contradict the productivity of $S'\rightarrow T'$. Consequently, every quotient edge can be realized by a productive legal update from any representative of its source class.

Applying this observation successively gives the following path-lifting property. For every finite quotient path
\[
    A_0\Rightarrow A_1\Rightarrow\cdots\Rightarrow A_k
\]
and every state $X_0'$ satisfying
\[
    [X_0']_{\equiv_X}=A_0,
\]
there exists a sequence of productive legal updates
\[
    X_0'\rightarrow X_1'\rightarrow\cdots\rightarrow X_k'
\]
such that
\[
    [X_i']_{\equiv_X}=A_i
    \qquad
    \text{for every }i\in\{0,1,\ldots,k\}.
\]
This follows by induction on $i$: once $X_i'$ has been constructed in the class $A_i$, the next quotient edge can be transported to $X_i'$ by the preceding congruence argument, producing $X_{i+1}'$ in the class $A_{i+1}$.

The quotient normal forms are exactly the equivalence classes of productive normal forms. To see one direction, suppose that $N$ is a productive normal form but that $[N]_{\equiv_X}$ has an outgoing quotient edge. A witness for that edge would be a productive update from some state $N'\equiv_X N$, and congruence would transport that update to a productive update enabled at $N$, contradicting normality. Conversely, if $[N]_{\equiv_X}$ is quotient-normal but a productive legal update $N\rightarrow N'$ is enabled, then
\[
    [N]_{\equiv_X}\Rightarrow[N']_{\equiv_X}
\]
would be an outgoing quotient edge, again a contradiction. Hence
\[
    N\text{ is a productive normal form}
    \quad\Longleftrightarrow\quad
    [N]_{\equiv_X}\text{ is $\Rightarrow$-normal}.
\]

Assume first that the system is globally execution-order stable. Let
\[
    X\xrightarrow{\rho}Y,
    \qquad
    X\xrightarrow{\eta}Z
\]
be an arbitrary reachable productive local critical pair. Since the quotient relation terminates, the normal-form-existence conclusion of Lemma~\ref{lem:terminating-local-joinability} guarantees that both $[Y]_{\equiv_X}$ and $[Z]_{\equiv_X}$ have quotient-normal descendants. Thus there exist quotient-normal classes $A_Y$ and $A_Z$ such that
\[
    [Y]_{\equiv_X}\Rightarrow^{*}A_Y,
    \qquad
    [Z]_{\equiv_X}\Rightarrow^{*}A_Z.
\]
By the path-lifting property, these quotient reductions can be realized by productive legal continuations
\[
    Y\rightarrow^{*}N_Y,
    \qquad
    Z\rightarrow^{*}N_Z,
\]
with
\[
    [N_Y]_{\equiv_X}=A_Y,
    \qquad
    [N_Z]_{\equiv_X}=A_Z.
\]
Because $A_Y$ and $A_Z$ are quotient-normal, the established normal-form correspondence implies that $N_Y$ and $N_Z$ are productive normal forms. Concatenating the two first updates with these continuations gives two productive continuations from the same reachable state $X$:
\[
    X\xrightarrow{\rho}Y\rightarrow^{*}N_Y,
    \qquad
    X\xrightarrow{\eta}Z\rightarrow^{*}N_Z.
\]
Global execution-order stability therefore gives
\[
    N_Y\equiv_X N_Z.
\]
Equivalently,
$
    [Y]_{\equiv_X}\Rightarrow^{*}[N_Y]_{\equiv_X},
    \qquad
    [Z]_{\equiv_X}\Rightarrow^{*}[N_Z]_{\equiv_X},
    \qquad
    [N_Y]_{\equiv_X}=[N_Z]_{\equiv_X}.
$
This is exactly
\[
    Y\downarrow_{p,\equiv_X}Z.
\]
Since the reachable productive local critical pair was arbitrary, every reachable productive local critical pair is productively joinable modulo $\equiv_X$.

For the converse, assume that every reachable productive local critical pair is productively joinable modulo $\equiv_X$. We show that every one-step fork of the reachable productive quotient is joinable. Consider arbitrary quotient edges
\[
    A\Rightarrow B,
    \qquad
    A\Rightarrow C,
\]
where $A\in\mathcal X_p^{\mathrm{reach}}$. Choose a reachable state $X$ with
\[
    A=[X]_{\equiv_X}.
\]
By the definition of the two quotient edges, each edge has a productive legal-update witness whose source is equivalent to $X$. Transporting both witnesses to the common representative $X$ by congruence yields productive legal updates
\[
    X\xrightarrow{\rho}Y,
    \qquad
    X\xrightarrow{\eta}Z
\]
such that
\[
    B=[Y]_{\equiv_X},
    \qquad
    C=[Z]_{\equiv_X}.
\]
If $B=C$, then the quotient fork is trivially joinable by taking its common descendant to be $B=C$. Suppose therefore that $B\neq C$. By the fork-classification assumption of the theorem, the reachable one-step productive fork at $X$ is either residual-sound independent or a productive local critical pair.

If the two updates are residual-sound independent, then both residual executions are defined. Hence there exist states $U$ and $V$ such that
\[
    X\xrightarrow{\rho}Y
    \xrightarrow{\eta_{\rho}}U,
    \qquad
    X\xrightarrow{\eta}Z
    \xrightarrow{\rho_{\eta}}V,
    \qquad
    U\equiv_X V.
\]
A residual update may itself be productive or stuttering. If $Y\rightarrow U$ is productive, then
\[
    [Y]_{\equiv_X}\Rightarrow[U]_{\equiv_X};
\]
if it is stuttering, then $Y\equiv_X U$, so
\[
    [Y]_{\equiv_X}=[U]_{\equiv_X},
\]
so a zero-step quotient reduction connects the same two classes. In either case,
\[
    [Y]_{\equiv_X}\Rightarrow^{*}[U]_{\equiv_X}.
\]
The identical argument gives
\[
    [Z]_{\equiv_X}\Rightarrow^{*}[V]_{\equiv_X}.
\]
Since $U\equiv_X V$, we have
\[
    [U]_{\equiv_X}=[V]_{\equiv_X}.
\]
Thus $B=[Y]_{\equiv_X}$ and $C=[Z]_{\equiv_X}$ are joinable.

If instead the fork is a productive local critical pair, the present assumption gives
\[
    Y\downarrow_{p,\equiv_X}Z.
\]
By definition, there exist productive descendants $Y'$ and $Z'$ satisfying
\[
    [Y]_{\equiv_X}\Rightarrow^{*}[Y']_{\equiv_X},
    \qquad
    [Z]_{\equiv_X}\Rightarrow^{*}[Z']_{\equiv_X},
    \qquad
    Y'\equiv_X Z'.
\]
Therefore
\[
    [Y']_{\equiv_X}=[Z']_{\equiv_X},
\]
so the quotient branches are again joinable. We have proved that every one-step fork of $(\mathcal X_p^{\mathrm{reach}},\Rightarrow)$ is joinable.

The quotient relation terminates by hypothesis. Lemma~\ref{lem:terminating-local-joinability} therefore applies and shows that any two quotient descendants of a common reachable class are joinable. Let $X$ be any reachable state, and consider arbitrary productive continuations
\[
    X\rightarrow^{*}N_1,
    \qquad
    X\rightarrow^{*}N_2
\]
to productive normal forms. Passing to equivalence classes gives
\[
    [X]_{\equiv_X}\Rightarrow^{*}[N_1]_{\equiv_X},
    \qquad
    [X]_{\equiv_X}\Rightarrow^{*}[N_2]_{\equiv_X}.
\]
The normal-form correspondence shows that both $[N_1]_{\equiv_X}$ and $[N_2]_{\equiv_X}$ are quotient-normal. The uniqueness conclusion of Lemma~\ref{lem:terminating-local-joinability} therefore gives
\[
    [N_1]_{\equiv_X}=[N_2]_{\equiv_X}.
\]
Equality of the quotient classes is equivalent to
\[
    N_1\equiv_X N_2.
\]
Since $X$ and the two productive continuations were arbitrary, the system is globally execution-order stable. This proves the converse implication and completes the equivalence.
\end{proof}

%% file: results/tables_core_revised/main_results_revised.tex
\begin{table*}[t]
\centering
\small
\setlength{\tabcolsep}{4.6pt}
\renewcommand{\arraystretch}{1.08}
\begin{tabular}{@{}lccc|ccc@{}}
\toprule
& \multicolumn{3}{c|}{Controlled (540 episodes)} & \multicolumn{3}{c}{Real LLM (108 episodes)} \\
\cmidrule(lr){2-4}\cmidrule(lr){5-7}
System & CQ & Cond. Perf. & Term. Perf. & CQ & Cond. Perf. & Term. Perf. \\
\midrule
MAS-Only & 0.684 & 0.570 & 0.638 & 0.586 & 0.392 & 0.356 \\
Interaction-only & 0.680 & 0.579 & 0.662 & 0.586 & 0.392 & 0.356 \\
Task-only & \underline{0.738} & \underline{0.592} & 0.654 & \underline{0.683} & \underline{0.511} & \underline{0.495} \\
LLM-Judge & 0.686 & 0.581 & \underline{0.671} & 0.586 & 0.392 & 0.356 \\
iCORE-Observe & 0.684 & 0.570 & 0.638 & 0.586 & 0.392 & 0.356 \\
iCORE-Audit & \textbf{0.799} & \textbf{0.785} & \textbf{0.789} & \textbf{0.850} & \textbf{0.872} & \textbf{0.667} \\
\bottomrule
\end{tabular}
\caption{Complete aggregate main-suite means. CQ is trajectory iCORE quality; Cond. Perf. is the calibrated conditional expectation used in Proposition~\ref{prop:quality-performance}; Term. Perf. is realized performance on required roots. Bold and underlined entries denote the best and second-best distinct means in each column.}
\label{tab:core-main-results-full-revised}
\end{table*}

%% file: results/tables_core_revised/measurement_results_revised.tex
\begin{table*}[t]
\centering
\small
\setlength{\tabcolsep}{5.2pt}
\begin{tabular}{@{}lccc|ccc@{}}
\toprule
& \multicolumn{3}{c|}{Controlled} & \multicolumn{3}{c}{Real LLM} \\
\cmidrule(lr){2-4}\cmidrule(lr){5-7}
System & Sound. F1 & Reassign. F1 & $|\widehat{\mathrm{CQ}}-\mathrm{CQ}|$ & Sound. F1 & Reassign. F1 & $|\widehat{\mathrm{CQ}}-\mathrm{CQ}|$ \\
\midrule
MAS-Only & 0.000 & 0.000 & 0.291 & 0.000 & 0.000 & 0.412 \\
Interaction-only & 0.810 & 0.000 & 0.234 & 0.952 & 0.000 & 0.337 \\
Task-only & 0.000 & 1.000 & 0.071 & 0.000 & 1.000 & 0.052 \\
LLM-Judge & 0.592 & 0.047 & 0.196 & 0.640 & 0.031 & 0.298 \\
iCORE-Observe & 1.000 & 1.000 & 0.000 & 1.000 & 1.000 & 0.000 \\
iCORE-Audit & 1.000 & 1.000 & 0.000 & 1.000 & 1.000 & 0.000 \\
\bottomrule
\end{tabular}
\caption{Proposal-probe reconstruction results across controlled and real-LLM execution.}
\label{tab:core-measurement-revised}
\end{table*}

%% file: results/tables_core_revised/ablation_results_revised.tex
\begin{table}[t]
\centering
\small
\setlength{\tabcolsep}{3.8pt}
\begin{tabular}{@{}lcccc@{}}
\toprule
Variant & Sound. & Assign. & CQ & Term. Perf. \\
\midrule
Full & 1.000 & 0.637 & 0.818 & 1.000 \\
Sound.-only & 1.000 & 0.350 & 0.675 & 0.944 \\
Assign.-only & 0.840 & 0.621 & 0.730 & 0.694 \\
No $\Pi$ & 0.840 & 0.621 & 0.730 & 0.694 \\
No cert. val. & 0.847 & 0.622 & 0.735 & 0.657 \\
\bottomrule
\end{tabular}
\caption{Controlled mixed-condition ablations over six environments (90 episodes total).}
\label{tab:core-ablation-revised}
\end{table}

%% file: results/tables_core_revised/boundary_results_revised.tex
\begin{table}[t]
\centering
\small
\setlength{\tabcolsep}{3.8pt}
\begin{tabular}{@{}lcccc@{}}
\toprule
Boundary & CQ & Cond. Perf. & Term. Perf. & Completed \\
\midrule
Standard & 0.818 & 0.865 & 1.000 & 1.000 \\
Degraded val. & 0.812 & 0.865 & 0.704 & 0.944 \\
Shuffled cap. & 0.825 & 0.723 & 0.833 & 0.778 \\
Weak pool & 0.807 & 0.442 & 0.519 & 0.333 \\
\bottomrule
\end{tabular}
\caption{iCORE-Audit in the controlled theory-boundary suite (144 episodes across iCORE-Observe and iCORE-Audit).}
\label{tab:core-boundary-revised}
\end{table}

%% file: results/tables_core_revised/runtime_results_revised.tex
\begin{table*}[t]
\centering
\scriptsize
\setlength{\tabcolsep}{4.0pt}
\begin{tabular}{@{}llrrrrrr@{}}
\toprule
Mode & System & Wall (s) & Model calls & Prompt tok. & Completion tok. & Verifier calls & Certificates \\
\midrule
Controlled & MAS-Only & 0.016 & 16.30 & 0.0 & 0.0 & 4.26 & 8.81 \\
 & Interaction-only & 0.018 & 16.62 & 0.0 & 0.0 & 4.52 & 9.16 \\
 & Task-only & 0.017 & 15.96 & 0.0 & 0.0 & 4.39 & 9.02 \\
 & LLM-Judge & 0.020 & 16.52 & 0.0 & 0.0 & 4.71 & 9.37 \\
 & iCORE-Observe & 0.018 & 16.30 & 0.0 & 0.0 & 4.26 & 8.81 \\
 & iCORE-Audit & 0.026 & 17.47 & 0.0 & 0.0 & 3.59 & 7.87 \\
\midrule
Real LLM & MAS-Only & 2.608 & 14.67 & 5776.4 & 764.8 & 8.22 & 12.11 \\
 & Interaction-only & 2.763 & 14.72 & 5799.9 & 762.6 & 8.28 & 12.17 \\
 & Task-only & 2.658 & 15.28 & 5981.6 & 731.6 & 7.22 & 11.33 \\
 & LLM-Judge & 3.010 & 14.78 & 5858.3 & 762.4 & 8.22 & 12.11 \\
 & iCORE-Observe & 2.672 & 14.67 & 5776.4 & 758.9 & 8.22 & 12.11 \\
 & iCORE-Audit & 3.131 & 18.17 & 6543.3 & 877.2 & 4.33 & 9.67 \\
\bottomrule
\end{tabular}
\caption{Per-episode execution-cost means. Controlled token counts are zero because the shared MLP does not tokenize prompts. Wall time is implementation- and hardware-specific and is not used as a quality metric.}
\label{tab:core-runtime-revised}
\end{table*}

%% file: results/tables_core_revised/statistical_tests_revised.tex
\begin{table}[t]
\centering
\small
\setlength{\tabcolsep}{4.0pt}
\begin{tabular}{@{}llrrr@{}}
\toprule
Mode & Metric & Pairs & Min. gain & Max. $p_{\mathrm H}$ \\
\midrule
Controlled & CQ & 90 & 0.060 & 1.13e-09 \\
Controlled & Cond. Perf. & 90 & 0.193 & 2.73e-07 \\
Controlled & Term. Perf. & 90 & 0.118 & 8.71e-03 \\
Real LLM & CQ & 18 & 0.168 & 7.30e-03 \\
Real LLM & Cond. Perf. & 18 & 0.361 & 2.30e-03 \\
Real LLM & Term. Perf. & 18 & 0.171 & 3.41e-02 \\
\bottomrule
\end{tabular}
\caption{One-sided paired Wilcoxon tests for iCORE-Audit against each of the five baselines. Holm correction is applied jointly over 15 comparisons within each execution mode. The table reports the least favorable adjusted result.}
\label{tab:core-stats-revised}
\end{table}

%% file: ref.bib
@article{wang2024survey,
  title={A survey on large language model based autonomous agents},
  author={Wang, Lei and Ma, Chen and Feng, Xueyang and Zhang, Zeyu and Yang, Hao and Zhang, Jingsen and Chen, Zhiyuan and Tang, Jiakai and Chen, Xu and Lin, Yankai and others},
  journal={Frontiers of Computer Science},
  volume={18},
  number={6},
  pages={186345},
  year={2024},
  publisher={Springer}
}

@article{guo2024large,
  title={Large language model based multi-agents: A survey of progress and challenges},
  author={Guo, Taicheng and Chen, Xiuying and Wang, Yaqi and Chang, Ruidi and Pei, Shichao and Chawla, Nitesh V and Wiest, Olaf and Zhang, Xiangliang},
  journal={arXiv preprint arXiv:2402.01680},
  year={2024}
}

@article{wu2023autogen,
  title={Autogen: Enabling next-gen llm applications via multi-agent conversation},
  author={Wu, Qingyun and Bansal, Gagan and Zhang, Jieyu and Wu, Yiran and Li, Beibin and Zhu, Erkang and Jiang, Li and Zhang, Xiaoyun and Zhang, Shaokun and Liu, Jiale and others},
  journal={arXiv preprint arXiv:2308.08155},
  year={2023}
}

@article{li2023camel,
  title={Camel: Communicative agents for" mind" exploration of large language model society},
  author={Li, Guohao and Hammoud, Hasan and Itani, Hani and Khizbullin, Dmitrii and Ghanem, Bernard},
  journal={Advances in neural information processing systems},
  volume={36},
  pages={51991--52008},
  year={2023}
}

@inproceedings{hong2024metagpt,
  title={MetaGPT: Meta programming for a multi-agent collaborative framework},
  author={Hong, Sirui and Zhuge, Mingchen and Chen, Jonathan and Zheng, Xiawu and Cheng, Yuheng and Wang, Jinlin and Zhang, Ceyao and Yau, Steven and Lin, Zijuan and Zhou, Liyang and others},
  booktitle={International Conference on Learning Representations},
  volume={2024},
  pages={23247--23275},
  year={2024}
}

@inproceedings{qian2024chatdev,
  title={Chatdev: Communicative agents for software development},
  author={Qian, Chen and Liu, Wei and Liu, Hongzhang and Chen, Nuo and Dang, Yufan and Li, Jiahao and Yang, Cheng and Chen, Weize and Su, Yusheng and Cong, Xin and others},
  booktitle={Proceedings of the 62nd annual meeting of the association for computational linguistics (volume 1: Long papers)},
  pages={15174--15186},
  year={2024}
}

@inproceedings{park2023generative,
  title={Generative agents: Interactive simulacra of human behavior},
  author={Park, Joon Sung and O'Brien, Joseph and Cai, Carrie Jun and Morris, Meredith Ringel and Liang, Percy and Bernstein, Michael S},
  booktitle={Proceedings of the 36th annual acm symposium on user interface software and technology},
  pages={1--22},
  year={2023}
}

@article{yao2022react,
  title={React: Synergizing reasoning and acting in language models},
  author={Yao, Shunyu and Zhao, Jeffrey and Yu, Dian and Du, Nan and Shafran, Izhak and Narasimhan, Karthik and Cao, Yuan},
  journal={arXiv preprint arXiv:2210.03629},
  year={2022}
}

@article{schick2023toolformer,
  title={Toolformer: Language models can teach themselves to use tools},
  author={Schick, Timo and Dwivedi-Yu, Jane and Dess{\`\i}, Roberto and Raileanu, Roberta and Lomeli, Maria and Hambro, Eric and Zettlemoyer, Luke and Cancedda, Nicola and Scialom, Thomas},
  journal={Advances in neural information processing systems},
  volume={36},
  pages={68539--68551},
  year={2023}
}

@article{shen2023hugginggpt,
  title={Hugginggpt: Solving ai tasks with chatgpt and its friends in hugging face},
  author={Shen, Yongliang and Song, Kaitao and Tan, Xu and Li, Dongsheng and Lu, Weiming and Zhuang, Yueting},
  journal={Advances in Neural Information Processing Systems},
  volume={36},
  pages={38154--38180},
  year={2023}
}

@inproceedings{wang2023plan,
  title={Plan-and-solve prompting: Improving zero-shot chain-of-thought reasoning by large language models},
  author={Wang, Lei and Xu, Wanyu and Lan, Yihuai and Hu, Zhiqiang and Lan, Yunshi and Lee, Roy Ka-Wei and Lim, Ee-Peng},
  booktitle={Proceedings of the 61st annual meeting of the association for computational linguistics (volume 1: long papers)},
  pages={2609--2634},
  year={2023}
}

@article{yao2023tree,
  title={Tree of thoughts: Deliberate problem solving with large language models},
  author={Yao, Shunyu and Yu, Dian and Zhao, Jeffrey and Shafran, Izhak and Griffiths, Tom and Cao, Yuan and Narasimhan, Karthik},
  journal={Advances in neural information processing systems},
  volume={36},
  pages={11809--11822},
  year={2023}
}

@article{madaan2023self,
  title={Self-refine: Iterative refinement with self-feedback},
  author={Madaan, Aman and Tandon, Niket and Gupta, Prakhar and Hallinan, Skyler and Gao, Luyu and Wiegreffe, Sarah and Alon, Uri and Dziri, Nouha and Prabhumoye, Shrimai and Yang, Yiming and others},
  journal={Advances in neural information processing systems},
  volume={36},
  pages={46534--46594},
  year={2023}
}

@article{shinn2023reflexion,
  title={Reflexion: Language agents with verbal reinforcement learning},
  author={Shinn, Noah and Cassano, Federico and Gopinath, Ashwin and Narasimhan, Karthik and Yao, Shunyu},
  journal={Advances in neural information processing systems},
  volume={36},
  pages={8634--8652},
  year={2023}
}

@article{zheng2023judging,
  title={Judging llm-as-a-judge with mt-bench and chatbot arena},
  author={Zheng, Lianmin and Chiang, Wei-Lin and Sheng, Ying and Zhuang, Siyuan and Wu, Zhanghao and Zhuang, Yonghao and Lin, Zi and Li, Zhuohan and Li, Dacheng and Xing, Eric and others},
  journal={Advances in neural information processing systems},
  volume={36},
  pages={46595--46623},
  year={2023}
}

@incollection{lamport2019time,
  title={Time, clocks, and the ordering of events in a distributed system},
  author={Lamport, Leslie},
  booktitle={Concurrency: the Works of Leslie Lamport},
  pages={179--196},
  year={2019}
}

@incollection{smith1988contract,
  title={The contract net protocol: High-level communication and control in a distributed problem solver},
  author={Smith, Reid G},
  booktitle={Readings in distributed artificial intelligence},
  pages={357--366},
  year={1988},
  publisher={Elsevier}
}

@article{yu2026interactive,
  title={Interactive Critique-Revision Training for Reliable Structured LLM Generation},
  author={Yu, Fei Xu and Zhang, Zuyuan and Imani, Mahdi and Bastian, Nathaniel D and Lan, Tian},
  journal={arXiv preprint arXiv:2605.08327},
  year={2026}
}

@incollection{li2012agent,
  title={Agent communication language},
  author={Li, Shujun and Kokar, Mieczyslaw M},
  booktitle={Flexible Adaptation in Cognitive Radios},
  pages={37--44},
  year={2012},
  publisher={Springer}
}

@techreport{nii1986blackboard,
  title={Blackboard Systems.},
  author={Nii, H Penny},
  year={1986}
}

@inproceedings{erol1994htn,
  title={HTN planning: Complexity and expressivity},
  author={Erol, Kutluhan and Hendler, James and Nau, Dana S},
  booktitle={AAAI},
  volume={94},
  pages={1123--1128},
  year={1994}
}

@article{torreno2017cooperative,
  title={Cooperative multi-agent planning: A survey},
  author={Torreno, Alejandro and Onaindia, Eva and Komenda, Anton{\'\i}n and {\v{S}}tolba, Michal},
  journal={ACM Computing Surveys (CSUR)},
  volume={50},
  number={6},
  pages={1--32},
  year={2017},
  publisher={ACM New York, NY, USA}
}

@article{van2003workflow,
  title={Workflow patterns},
  author={van Der Aalst, Wil MP and Ter Hofstede, Arthur HM and Kiepuszewski, Bartek and Barros, Alistair P},
  journal={Distributed and parallel databases},
  volume={14},
  number={1},
  pages={5--51},
  year={2003},
  publisher={Springer}
}

@article{van1998application,
  title={The application of Petri nets to workflow management},
  author={Van der Aalst, Wil MP},
  journal={Journal of circuits, systems, and computers},
  volume={8},
  number={01},
  pages={21--66},
  year={1998},
  publisher={World Scientific}
}

@article{murata1989petri,
  title={Petri nets: Properties, analysis and applications},
  author={Murata, Tadao},
  journal={Proceedings of the IEEE},
  volume={77},
  number={4},
  pages={541--580},
  year={1989},
  publisher={IEEE}
}

@article{leucker2009brief,
  title={A brief account of runtime verification},
  author={Leucker, Martin and Schallhart, Christian},
  journal={The journal of logic and algebraic programming},
  volume={78},
  number={5},
  pages={293--303},
  year={2009},
  publisher={Elsevier}
}

@article{havelund2001monitoring,
  title={Monitoring java programs with java pathexplorer},
  author={Havelund, Klaus and Ro{\c{s}}u, Grigore},
  journal={Electronic Notes in Theoretical Computer Science},
  volume={55},
  number={2},
  pages={200--217},
  year={2001},
  publisher={Elsevier}
}

@article{belhajjame2013prov,
  title={Prov-dm: The prov data model},
  author={Belhajjame, Khalid and B’Far, Reza and Cheney, James and Coppens, Sam and Cresswell, Stephen and Gil, Yolanda and Groth, Paul and Klyne, Graham and Lebo, Timothy and McCusker, Jim and others},
  journal={W3C Recommendation},
  volume={14},
  pages={15--16},
  year={2013},
  publisher={World Wide Web Consortium (W3C)}
}

@article{groth2013prov,
  title={PROV-overview. An overview of the PROV family of documents},
  author={Groth, Paul and Moreau, Luc},
  year={2013},
  publisher={World Wide Web Consortium}
}

@book{baader1998term,
  title={Term rewriting and all that},
  author={Baader, Franz and Nipkow, Tobias},
  year={1998},
  publisher={Cambridge university press}
}

@article{newman1942theories,
  title={On theories with a combinatorial definition of" equivalence"},
  author={Newman, Maxwell Herman Alexander},
  journal={Annals of mathematics},
  volume={43},
  number={2},
  pages={223--243},
  year={1942},
  publisher={JSTOR}
}

@inproceedings{zhang2025learning,
  title={Learning to collaborate with unknown agents in the absence of reward},
  author={Zhang, Zuyuan and Zhou, Hanhan and Imani, Mahdi and Lee, Taeyoung and Lan, Tian},
  booktitle={Proceedings of the AAAI Conference on Artificial Intelligence},
  volume={39},
  number={13},
  pages={14502--14511},
  year={2025}
}

@inproceedings{brauer1986petri,
  title={Petri Nets: Applications and Relationships to Other Models of Concurrency: Advances in Petri Nets 1986, Part II Proceedings of an Advanced Course Bad Honnef, 8.--19. September 1986},
  author={Brauer, Wilfried and Reisig, Wolfgang and Rozenberg, Grzegorz},
  booktitle={Advanced Course on Petri Nets},
  year={1986},
  organization={Springer}
}

@article{yang2026dig,
  title={DIG to Heal: Scaling General-purpose Agent Collaboration via Explainable Dynamic Decision Paths},
  author={Yang, Hanqing and Lee, Hyungwoo and Yao, Yuhang and Liu, Zhiwei and Liu, Kay and Chen, Jingdi and Joe-Wong, Carlee},
  journal={arXiv preprint arXiv:2603.00309},
  year={2026}
}
